\keywords{Higher order logic, Existential second-order logic, Team semantics, Closure properties, Union closure, Model-checking games, Syntactic charactisations of semantical fragments}
\setlist[itemize, 1]{label={\tiny$\bullet$}}
\newcommand{\quotes}[1]{``#1''}
\newcommand{\ptime}{\textsc{Ptime}\xspace}
\newcommand{\np}{\textsc{NP}\xspace}
\newcommand{\npc}{\textsc{NP-complete}\xspace}
\renewcommand{\models}{\vDash}
\newcommand{\nmodels}{\nvDash}
\renewcommand{\emptyset}{\varnothing}
\newcommand{\ceq}{\coloneqq}
\newcommand{\eqc}{\eqqcolon}
\newcommand{\subf}{\operatorname{subf}}
\newcommand{\free}{\operatorname{free}}
\newcommand{\dom}{\operatorname{dom}}
\newcommand{\nh}[2]{\operatorname{N}_{#1}(#2)} 
\newcommand{\rsim}{_{/\sim}}
\newcommand{\target}{\mathcal{T}}
\newcommand{\res}[1]{{\uhr_{#1}}}
\newcommand{\arity}[1]{\operatorname{ar}(#1)}
\newcommand{\vark}{{\mathcal{V}_k}}
\newcommand{\fo}{\ensuremath{\operatorname{FO}}}
\newcommand{\eso}{\ensuremath{\operatorname{\Sigma^1_1}}}
\newcommand{\exclogic}{\ensuremath{\fo({}\excl{})}}
\newcommand{\deplog}{\ensuremath{\fo(\operatorname{dep})}}
\newcommand{\inclogic}{\ensuremath{\fo(\incl)}}
\newcommand{\inexlogic}{\fo(\incl,{}\excl{})}
\newcommand{\gfplogic}{\ensuremath{\text{GFP}^+}}
\newcommand{\union}{\mathcal{U}}
\newcommand{\dep}{\operatorname{=}}
\newcommand{\indep}{\bot}
\newcommand{\incl}{\subseteq}
\newcommand{\excl}{\mathbin{|}}
\newcommand{\Eincl}{\ensuremath{{E_{\mathsf{in}}}}}
\newcommand{\Eexcl}{\ensuremath{{E_{\mathsf{ex}}}}}
\newcommand{\init}{\ensuremath{I}}
\newcommand{\win}{\cup\mathsf{-game}}
\newcommand{\gcomp}[1]{{\cG^\vartriangle_{#1}}}
\newcommand{\secorder}{second-order\xspace}
\newcommand{\erich}{Gr\"adel\xspace}
\newcommand{\jouko}{V\"a\"an\"anen\xspace}
\renewcommand{\phi}{\varphi}
\renewcommand{\theta}{\vartheta}
\renewcommand{\epsilon}{\varepsilon}
\newcommand{\phiwin}{\phi_{\operatorname{win}}}
\newcommand{\phitarget}{\phi_{\target}}
\newcommand{\psitarget}{\psi_{\target}}
\newcommand{\psiwin}{\psi_{\operatorname{win}}}
\newcommand{\psimove}{\psi_{\operatorname{move}}}
\newcommand{\psiexcl}{\psi_{\Eexcl}}
\newcommand{\psiinit}{\psi_{\operatorname{init}}}
\newcommand{\defiff}{\mathbin{:\!\!\iff}}
\newcommand{\A}{\forall}
\newcommand{\E}{\exists}
\newcommand{\ra}{\rightarrow}
\newcommand{\lra}{\leftrightarrow}
\newcommand{\uhr}{\upharpoonright}
\newcommand{\cupdot}{\mathbin{\mathaccent\cdot\cup}}
\newcommand{\bN}{\mathbb{N}}
\newcommand{\cG}{\mathcal{G}}
\newcommand{\cI}{\mathcal{I}}
\newcommand{\cS}{\mathcal{S}}
\newcommand{\fA}{\mathfrak{A}}
\newcommand{\fB}{\mathfrak{B}}
\newcommand{\tuple}[1]{{\bar{#1}}}
\newcommand{\ta}{\tuple{a}}
\newcommand{\tb}{\tuple{b}}
\newcommand{\tc}{\tuple{c}}
\newcommand{\tf}{\tuple{f}}
\newcommand{\ts}{\tuple{s}}
\renewcommand{\tt}{\tuple{t}}
\newcommand{\tu}{\tuple{u}}
\newcommand{\tv}{\tuple{v}}
\newcommand{\tw}{\tuple{w}}
\newcommand{\tx}{\tuple{x}}
\newcommand{\ty}{\tuple{y}}
\newcommand{\tz}{\tuple{z}}
\newcommand{\tP}{\tuple{P}}
\newcommand{\tR}{\tuple{R}}
\newcommand{\tepsilon}{\tuple{\epsilon}}
\newcommand{\pot}[1]{\ensuremath{\mathcal{P}(#1)}}
\newcommand{\potne}[1]{\ensuremath{\mathcal{P}^+(#1)}}
\newcommand{\set}[2]{\ensuremath{\{ #1 : #2\}}}
\newcommand{\mc}[2]{\ensuremath{\cG(#1, #2)}}
\newcommand{\mcX}[2]{\ensuremath{\cG_X(#1, #2)}}
\begin{document}

\title[On the Union Closed Fragment]{On the Union Closed Fragment of Existential Second-Order Logic and Logics with Team Semantics}

\author[M.~Hoelzel]{Matthias Hoelzel}

\author[R.~Wilke]{Richard Wilke}

\address{Mathematical Foundations of Computer Science, RWTH Aachen University, Aachen, Germany}	
\email{hoelzel@logic.rwth-aachen.de, wilke@logic.rwth-aachen.de}  

\thanks{The research of Matthias Hoelzel has been supported by the German
Research Foundation (DFG). Richard Wilke is a doctoral student in the Research Training Group 2236 UnRAVeL, also funded by DFG. Both authors carried out this research as part of their PhD studies, under the supervision of Erich \erich.}

\maketitle

\begin{abstract}
    We present syntactic characterisations for the union closed fragments of existential \secorder logic and of logics with team semantics.
    Since union closure is a semantical and undecidable property, the normal form we introduce enables the handling and provides a better understanding of this fragment.
    We also introduce inclusion-exclusion games that turn out to be precisely the corresponding model-checking games.
    These games are not only interesting in their own right, but they also are a key factor towards building a bridge between
    the semantic and syntactic fragments.
    On the level of logics with team semantics we additionally present restrictions of inclusion-exclusion logic to capture the union closed fragment.
    Moreover, we define a team based atom that when adding it to first-order logic also precisely
    captures the union closed fragment of existential \secorder logic which answers an open question by Galliani and Hella.
\end{abstract}

\section{Introduction}

One branch of model theory engages with the characterisation of semantical fragments, which typically are undecidable, as syntactical fragments of the logics under consideration.
Prominent examples are van Benthem's Theorem characterising the bisimulation invariant fragment of first-order logic as the modal-logic \cite{Ben76} or preservation theorems like the {\L}o\'{s}-Tarski Theorem, which states that formulae preserved in substructures are equivalent to universal formulae \cite{HodBook}.
In this paper we consider formulae $\phi(X)$ of existential second-order logic, $\eso$, in a free relational variable $X$ and investigate the property of being closed under unions,
meaning that whenever a family of relations $X_i$ all satisfy $\phi$, then their union $\bigcup_i X_i $ should also do so.
Certainly closure under unions is an undecidable property.
We provide a syntactical characterisation of all formulae of existential second-order logic obeying this property via a normal form called \emph{myopic}-$\eso$, a notion based on ideas of Galliani and Hella \cite{GalHel13}.
By Fagin's Theorem, $\eso$ is the logical equivalent of the complexity class \np which highlights the importance to understand its fragments.
Towards this end we employ game theoretic concepts and introduce a novel game type, called \emph{inclusion-exclusion games}, suited for formulae $\phi(X)$ with a free relational variable.
In these games a strategy no longer is simply winning for one player --- and hence proving whether a sentence is satisfied --- but it is moreover adequate for a certain relation $Y$ over $\fA$ showing that the formula is satisfied by $\fA$ and $Y$, in symbols $\fA\models\phi(Y)$.
We construct myopic-$\eso$ formulae that can define the winning regions of specifically those inclusion-exclusion games that are (semantically) closed under unions.
Conceptually such games are eligible for any $\eso$-formula, but since our interest lies in those formulae that are closed under unions, we introduce a restricted version of such games, called \emph{union games}, that precisely correspond to the model-checking games of union closed $\eso$-formulae.
Consequently, the notion of union closure is captured on the level of formulae by the myopic fragment of $\eso$ and on the game theoretic level by union games.

Existential second-order logic has a tight connection to modern logics of dependence and independence that are based on the concept of teams, introduced by Hodges \cite{Hod97}, and later refined by \jouko in 2007 \cite{Vaa07}.
In contrast to classical logics, formulae of such a logic are evaluated against a set of assignments, called a \emph{team}.
One main characteristic of these logics is that dependencies between variables, such as ``$x$ depends solely on $y$'', are expressed as atomic properties of teams.
Widely used dependency atoms include dependence ($\dep(x,y)$), inclusion ($x\incl y$), exclusion ($x\excl y$) and independence ($x\indep y$).
It is known that both independence logic $\fo(\indep)$ and inclusion-exclusion logic $\inexlogic$ have the same expressive power as full existential second-order logic $\eso$ \cite{Gal12}.
The team in such logics corresponds to the free relational variable in existential second-order formulae, enabling us to ask the same questions about fragments with certain closure properties in both frameworks.
One example of a well understood closure property is \emph{downwards closure} stating that if a formula is satisfied by a team then it is also satisfied by all subteams (i.e.~subsets of that team).
It is well known that exclusion logic $\exclogic$ is equivalent to dependence logic $\deplog$ \cite{Gal12},
which corresponds to the downwards closed fragment of $\eso$ \cite{KonVaa09}.
The issue of union closure is different.
Galliani and Hella have shown that inclusion logic $\inclogic$ corresponds to greatest fixed-point-logic $\gfplogic$ and, hence, by using the Immerman-Vardi Theorem, it captures all $\ptime$ computable queries on ordered structures \cite{GalHel13}.
They also proved that every union closed dependency notion that itself is first-order definable (where the formula has access to a predicate for the team) is already definable in inclusion logic.
However, there are union closed properties that are not definable in inclusion logic (think of a union closed $\np$ property).
For a concrete example we refer to the atom $\mathcal{R}$ from \cite{GalHel13}.
Thus Galliani and Hella asked the question whether there is a union closed atomic dependency notion $\beta$, such that the logic $\fo(\beta)$ captures precisely the union closed fragment of $\inexlogic$.
In the present work we answer this question positively with the aid of inclusion-exclusion games.
Furthermore, we present a syntactical restriction of all $\inexlogic$ formulae that also precisely describe the union closed fragment.
This syntactical fragment corresponds to myopic-$\eso$ and is in harmony with the game theoretical view, which is described by union games.

This paper is based on \cite{HW19, HW20} and also contains some improvements from \cite{HoelzelPhD} like the more direct translations into the myopic fragments (Theorem \ref{thm: eso-sentence => myopic} and Theorem \ref{thm: myopic inex - more general}), while Theorem \ref{thm: union closed eso => myopic stronger} and Theorem \ref{thm: union closed inex => myopic inex stronger} are based on the original proofs and are enhanced by the observation that they produce only a limited number of in-/exclusion atoms or literals with second-order symbols.
Furthermore, we also present restricted variants of the inclusion-exclusion games that are the model-checking games of $\inclogic$ or $\exclogic$.

Sections \ref{sec: in/ex games}, \ref{sec:char U in eso} and \ref{sec:union games}
deal with second-order logic and can be read without knowledge about team semantics.
In section three the central notion of this paper, inclusion-exclusion
games, is introduced, which is used in section four to characterise
the union closed fragment within existential second-order logic.
Section five provides a restriction of the games specifically suited for this fragment.
Based on these notions, Section \ref{sec:myopic inex} describes the union closed
fragment of inclusion-exclusion logic in terms of syntactical restrictions.
The question of Galliani and Hella, whether there is a union closed atom that
constitutes the union closed fragment, is answered positively in Section \ref{sec: atom}.
Finally, in Section \ref{sec: in and ex games} we discuss restrictions of
inclusion-exclusion games suited for example for the downwards closed fragment.

\section{Preliminaries}

We assume familiarity with first-order logic and existential second-order logic, $\fo$ and $\eso$ for short.
For a background we refer to the textbook \cite{GraKLMSVVW07}.

\subsection*{Graphs}

Let $G = (V, E)$ be a graph, that is $V$ is a possibly empty set of vertices and $E \subseteq V\times V$ the edge set.
The neighbourhood of a vertex $v$ in $G$ is denoted by $\nh{G}{v} = \{w \in V : (v,w) \in E\}$.
The restriction of $G$ to $W\subseteq V$ is defined as $G\res{W} \ceq (W, E\cap (W\times W))$.
By $G-W$ we denote the graph $G\res{V\setminus W}$.
For a set $F\subseteq V\times V$ the extension of $G$ by $F$ is denoted by $G + F \ceq (V, E\cup F)$.

\subsection*{Logic}

For a given $\tau$-structure $\fA$ with universe $A$\footnote{We write structures in Fraktur letters and use the corresponding Latin letters to denote their respective universe.} and formula $\phi(\tx)$ we define $\phi^\fA \ceq \set{\ta \in A^{|\tx|}}{\fA \models \phi(\ta)}$, $\free(\phi)$ is the set of free first-order variables and $\subf(\psi)$ is the set of subformulae of $\psi$; sometimes for technical reasons it is necessary to consider $\subf(\psi)$ as a \emph{multiset}, and we will do so tacitly.
Notations like $\tv, \tw$ always indicate that $\tv = (v_1, \dotsc, v_k)$ and $\tw = (w_1, \dotsc, w_\ell)$ are some (finite) tuples.
Here $k = |\tv|$ and $\ell = |\tw|$, so $\tv$ is a $k$-tuple while $\tw$ is an $\ell$-tuple.
We write $\{\tv\}$ or $\{ \tv, \tw \}$ as abbreviations for $\{ v_1, \dotsc, v_k \}$ resp.~$\{ v_1, \dotsc, v_k, w_1, \dotsc, w_\ell \}$
while $\{ (\tv), (\tw) \}$ is the set consisting of the two tuples $\tv$ and $\tw$ (as elements).
The \emph{concatenation} of $\tv$ and $\tw$ is $(\tv, \tw) \ceq (v_1, \dotsc, v_k, w_1, \dotsc, w_\ell)$.
The power set of a set $A$ is denoted by $\pot{A}$ and $\potne{A} \ceq \pot{A}\setminus\{\emptyset\}$.

\subsection*{Team Semantics}

A \emph{team} $X$ over $\fA$ is a \emph{set} of assignments mapping a common domain $\dom(X) = \{ \tx \}$ of variables
into $A$.
The restriction of $X$ to some first-order formula $\phi(\tx)$ is $X \res{\phi} \ceq \set{s \in X}{\fA \models_s \phi}$.
For a given subtuple $\ty = (y_1, \dots, y_\ell)\subseteq \tx$ and every $s \in X$ we define $s(\ty) \ceq (s(y_1), \dotsc, s(y_\ell))$.
Furthermore, we frequently use $X(\ty) \ceq \set{s(\ty)}{s \in X}$, which is an $\ell$-ary relation over $\fA$.
For an assignment $s$, a variable $x$ and $a \in A$ we use $s[x \mapsto a]$ to denote the assignment resulting from $s$
by adding $x$ to its domain (if it is not already contained) and declaring $a$ as the image of $x$.

\begin{defi}
    \label{def: fo ts}
    Let $\fA$ be a $\tau$-structure, $X$ a team of $\fA$.
    In the following $\lambda$ denotes a first-order $\tau$-literal and $\phi, \psi$ arbitrary formulae in negation normal form.
    \begin{itemize}
        \item $\fA\models_X \lambda \defiff \fA\models_s\lambda$ for all $s\in X$
        \item $\fA\models_X \phi\land\psi \defiff \fA\models_X\phi$ and $\fA\models_X\psi$
        \item $\fA\models_X \phi\lor\psi \defiff \fA\models_Y\phi$ and $\fA\models_Z\psi$ for some $Y, Z \subseteq X$ such that $Y \cup Z = X$
        \item $\fA\models_X \A x\phi \defiff \fA\models_{X[x\mapsto A]}\phi$
        \item $\fA\models_X \E x\phi \defiff \fA\models_{X[x\mapsto F]}\phi$ for some $F\colon X\to \potne{A}$
    \end{itemize}
    Here $X[x \mapsto A] \ceq \set{s[x \mapsto a]}{s \in X,\, a \in A}$ and $X[x \mapsto F] \ceq \set{s[x \mapsto a]}{s \in X,\,a \in F(s)}$.
\end{defi}

\noindent
Team semantics for a first-order formula $\phi$ (without any dependency concepts) boils down to evaluating $\phi$
against every single assignment, i.e.~more formally we have $\fA \models_X \phi \iff \fA \models_s \phi$  for \emph{every} $s \in X$ (in usual Tarski semantics).
This is also known as the \emph{flatness property} of $\fo$.
The reason for considering teams instead of single assignments is that they allow the formalisation of dependency statements in the form of dependency atoms.
Among the most frequently considered atoms are the following.

\begin{description}
    \item[Dependence] $\fA\models_X\dep(\tx,y) \defiff s(\tx)=s'(\tx)\text{ implies }s(y)=s'(y)\text{ for all }s,s'\in X$
    \item[Inclusion] $\fA\models_X\tx\incl\ty \defiff X(\tx)\subseteq X(\ty)$
    \item[Exclusion] $\fA\models_X\tx\excl\ty \defiff X(\tx) \cap X(\ty) = \emptyset$
    \item[Independence] $\fA\models_X\tx\indep\ty \defiff \text{for all }s,s'\in X\text{ exists }s''\in X\text{ s.t.~}s(\tx)=s''(\tx)$ and $s'(\ty)=s''(\ty)$
\end{description}

\noindent
Dependence was introduced by \jouko \cite{Vaa07}, inclusion and exclusion by Galliani \cite{Gal12} and independence by \erich and \jouko \cite{GraVaa13}.
When we speak about first-order team logic augmented with a certain atomic dependency notion, for example inclusion, we denote it by writing $\fo(\incl)$ and so forth.
All of these logics have the empty team property, which means that $\fA \models_\emptyset \phi$ is always true.
This is also the reason why \emph{sentences} are not evaluated against $\emptyset$
but rather against $\{ \emptyset \}$, which is the team consisting of the empty assignment.
Let $\phi$ be a first-order formula and $\psi$ be any formula of a logic with team semantics.
We define $\phi \ra \psi$ as $\operatorname{nnf}(\neg\phi) \lor (\phi \land \psi)$ where $\operatorname{nnf}(\neg\phi)$ is the negation normal form of $\neg \phi$.
It is easy to see that $\fA \models_X \phi \ra \psi \iff \fA \models_{X \res{\phi}} \psi$.

\subsubsection*{Union Closure}

A formula $\phi$ of a logic with team semantics is said to be \emph{union closed} if whenever $\fA\models_{X_i}\phi$ holds for all $i\in I$ then also $\fA\models_X\phi$, where $X = \bigcup_{i\in I}X_i$.
Analogously, a formula $\phi(X)$ of $\eso$ with a free relational variable $X$ is union closed if $\fA\models\phi(X_i)$ for all $i\in I$ implies $\fA\models\phi(X)$.

\subsection*{FO Interpretations}

A first-order interpretation from $\sigma$ to $\tau$ of arity $k$ is a sequence
$\cI = (\delta, \epsilon,  (\psi_S)_{S \in \tau})$ of $\fo(\sigma)$-formulae, called the domain, equality and relation formulae respectively.
We say that $\cI$ interprets a $\tau$-structure $\fB$ in some $\sigma$-structure $\fA$ and write $\fB \cong \cI(\fA)$ if and only if
there exists a surjective function $h$, called the coordinate map, that maps $\delta^\fA = \set{\ta \in A^k}{\fA \models \delta(\ta)}$ to $B$ preserving and reflecting the equalities and relations provided by $\epsilon$ and $\psi_S$, such that $h$ induces an isomorphism between the quotient structure $(\delta^\fA,(\psi^\fA_S)_{S \in \tau })/\epsilon^\fA$ and $\fB$.
A more detailed explanation can be found in \cite{GraKLMSVVW07}.
For a $\tau$-formula $\phi$ we associate the $\sigma$-formula $\phi^\cI$ by relativising quantifiers to $\delta$, using $\epsilon$ as equality and $\psi_S$ instead of $S$.
We extend this translation to $\eso$ by the following rules for additional free/quantified relation symbols $S$.
\begin{itemize}
    \item $(\E S \vartheta)^\cI \ceq \E S^\star \big( \forall \tx_1 \dotsb \tx_{\arity{S}} \big(S^\star \tx_1 \dotsb \tx_{\arity{S}} \ra \bigwedge^{\arity{{S}}}_{j=1} \delta(\tx_j) \big) \land \vartheta^\cI\big)$,
    \item $(S v_1\dotsb v_{\arity{S}})^\cI \ceq \exists \tw_1 \dotsb \tw_{\arity{S}} \big(\bigwedge^{\arity{S}}_{j=1} (\delta(\tw_j) \land \epsilon(\tv_j, \tw_j)) \land S^\star \tw_1\dotsb \tw_{\arity{S}}\big)$.
\end{itemize}
An assignment $s \colon \{ \tx_1, \dots, \tx_m \} \to A$ is well-formed (w.r.t.~$\cI$), if $s(\tx_i) \in \delta^\fA (= \dom(h))$ for every $i=1,\dots,m$.
Such an assignment encodes $h \circ s \colon \{x_1, \dots, x_m\} \to B$ with $(h \circ s)(x_i) \ceq h(s(\tx_i))$ which is an assignment over $\fB$.
Similarly, a relation $Q$ is well-formed (w.r.t.~$\cI$), if $Q \subseteq (\delta^\fA)^\ell$ where $\ell = \frac{\arity{Q}}{k} \in \bN$,
and we define $h(Q) \ceq \set{(h(\ta_1), \dots, h(\ta_\ell))}{(\ta_1, \dots, \ta_\ell) \in Q}$, which is the $\ell$-ary relation over $\fB$ that was described by $Q$.
The connection between $\phi^\cI$ and $\phi$ is made precise in the well-known interpretation lemma.

\begin{lem}[Interpretation Lemma for \eso]
    \label{lem: interpretation lemma for sigma^1_1}
    Let $\phi(S_1,\dotsc,S_n) \in \eso$.
    Let $R^\star_i \subseteq A^{k \cdot \arity{S_i}}$ for all $i$ and
    $s \colon \{ \tx_1, \dots, \tx_m \} \to A$ be well-formed.
    Then:
    $(\fA, R_1^\star, \dots, R^\star_n) \models_s \phi^\cI \iff (\fB, h(R^\star_1), \dots, h(R^\star_n)) \models_{h \circ s} \phi.$
\end{lem}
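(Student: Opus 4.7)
The plan is to proceed by structural induction on $\phi$, extending the standard interpretation lemma for first-order logic to the two constructs that distinguish \eso: second-order atoms $S v_1 \dotsb v_{\arity{S}}$ and existential second-order quantifiers $\exists S$. The semantic backbone of the argument in every case is that the coordinate map $h$ is a surjection from $\delta^\fA$ onto $B$ whose kernel coincides with $\epsilon^\fA$; hence every well-formed first-order assignment $s$ on the $\fA$-side corresponds via $h \circ s$ to an assignment over $\fB$, and every well-formed relation $Q^\star \subseteq (\delta^\fA)^\ell$ gives rise via $h$ to an $\ell$-ary relation $h(Q^\star) \subseteq B^\ell$.

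For first-order literals (equalities between $\sigma$-terms, literals from $\sigma$, or instances of the $\psi_S$), the equivalence is immediate from the definition of the interpretation together with the $h$-compatibility of $\epsilon$ and $\psi_S$. The boolean cases follow directly from the inductive hypothesis. For the first-order quantifier $\E x$, the translation is $\E \tx (\delta(\tx) \land \cdot)$ and one uses the surjectivity of $h$: a $k$-tuple witness $\ta \in \delta^\fA$ on the $\fA$-side produces the witness $h(\ta) \in B$ on the $\fB$-side, and conversely every $b \in B$ is the $h$-image of some well-formed $\ta$; the extension $s[\tx \mapsto \ta]$ remains well-formed so the inductive hypothesis applies. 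The universal case is dual.

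The two genuinely new cases are as follows. For a second-order atom $S v_1 \dotsb v_{\arity{S}}$, the translation existentially quantifies over auxiliary tuples $\tw_j$ constrained to be well-formed and $\epsilon$-equivalent to $\tv_j$. Since $\epsilon^\fA$ is precisely the kernel of $h$ on $\delta^\fA$, these $\tw_j$ range exactly over the well-formed representatives of $h(s(v_j))$, so the existence of a tuple $(\tw_1, \dotsc, \tw_{\arity{S}}) \in R^\star$ with $h(\tw_j) = h(s(v_j))$ for every $j$ is by definition of $h(R^\star)$ equivalent to the assertion $(h(s(v_1)), \dotsc, h(s(v_{\arity{S}}))) \in h(R^\star)$. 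For $\E S \vartheta$, one direction takes an $\fA$-side witness $Q^\star$ for $S^\star$ (forced to be well-formed by the relativisation clause $S^\star \tx_1 \dotsb \tx_{\arity{S}} \ra \bigwedge_j \delta(\tx_j)$) and returns $h(Q^\star)$ as an $\fB$-side witness for $S$; the other direction lifts an arbitrary $\fB$-side witness $Q \subseteq B^{\arity{S}}$ to the well-formed relation $Q^\star \ceq \{(\ta_1, \dotsc, \ta_{\arity{S}}) \in (\delta^\fA)^{\arity{S}} : (h(\ta_1), \dotsc, h(\ta_{\arity{S}})) \in Q\}$, which satisfies $h(Q^\star) = Q$ by surjectivity of $h$. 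The inductive hypothesis then closes the case.

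The main obstacle is purely notational bookkeeping: single $\tau$-variables are encoded by $k$-tuples of $\sigma$-variables, $\arity{S}$-ary relations on the $\fB$-side correspond to $(k \cdot \arity{S})$-ary relations on the $\fA$-side, and one must repeatedly verify that every extension of the assignment and every substitution for a second-order variable stays well-formed so that the inductive hypothesis remains applicable. No new conceptual idea beyond the classical first-order interpretation lemma is required, which is why this generalisation to \eso is often merely stated and left to the reader.
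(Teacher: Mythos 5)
Your induction is correct and is exactly the standard argument that the paper implicitly relies on: the paper states this lemma as ``well-known'' and gives no proof at all, so there is nothing to diverge from. The only point worth flagging is that your argument (rightly) uses that the relations $R^\star_i$ are well-formed, i.e.\ contained in $(\delta^\fA)^{\arity{S_i}}$ --- the lemma's phrasing only explicitly attaches ``well-formed'' to $s$, but since $h(R^\star_i)$ is only defined for well-formed relations this hypothesis is implicit in the statement, and your reading is the intended one.
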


\section{Inclusion-Exclusion Games}
\label{sec: in/ex games}

Classical model-checking games are designed to express satisfiability of sentences, i.e.~formulae without free variables.
The instantiation of a first-order formula with free variables by some values is dealt with by constructing expanded structures having constants that are assigned the respective values.
Since our focus lies on second-order formulae in a free relational variable we are in need for a game that is able to not only express that a formula is satisfied, but moreover that it is satisfied by a certain relation.
In the games we are about to describe a set of designated positions is present --- called the \emph{target set} --- which corresponds to the full relation $A^k$ (where the free relational variable has arity $k$).
A winning strategy is said to be \emph{adequate} for a subset $X$ of the target positions, if the target vertices visited by it are $X$.
On the level of logics this matches the relation satisfying the corresponding formula, i.e.~there is a winning strategy adequate for $X$ if and only if the formula is satisfied by $X$.

\begin{defi}
    \label{def: inex game}
    An \emph{inclusion-exclusion game} $\cG = (V, V_0, V_1, E, \init, T, \Eexcl)$ is
    played by two players 0 and 1 where
    \begin{itemize}
        \item $V_\sigma$ is the set of vertices of player $\sigma$,
        \item $V = V_0 \cupdot V_1$,
        \item $E \subseteq V \times V$ is a set of possible moves,
        \item $\init \subseteq V$ is the (possibly empty) set of initial positions,
        \item $T \subseteq V$ is the set of target vertices and
        \item $\Eexcl \subseteq V \times V$ is the exclusion condition, which defines the winning condition for player $0$.\footnote{$\Eexcl$ can always be replaced by the symmetric closure of $\Eexcl$ without altering its semantics.}
    \end{itemize}
    The edges going into $T$, that is $\Eincl\ceq E\cap(V\times T)$, are called \emph{inclusion edges}, while $\Eexcl$ is the set of \emph{exclusion edges} (sometimes also called conflicting pairs).
\end{defi}

\noindent
Inclusion-exclusion games are \emph{\secorder} games, so instead of single plays we are more interested in sets of plays that are admitted by some winning strategy for player $0$.
Model-checking games for logics with team semantics made their first appearance in \cite{Graedel13} and, in fact, inclusion-exclusion games can be considered to be a variant of the second-order games introduced in the same paper.

For a subset $X\subseteq T$ the aim of player 0 is to provide a winning strategy (which can be viewed as a set of plays respecting the exclusion condition and containing all possible strategies of player 1) such that the vertices of $T$ that are visited by this strategy correspond precisely to $X$.

\begin{defi}
    \label{def: winning strategy}
    A \emph{winning strategy} (for player 0) $\cS$ is a possibly empty subgraph $\cS = (W,F)$ of $G = (V,E)$ ensuring the following four consistency conditions.

\smallskip
    \noindent\begin{minipage}{\linewidth}
    \begin{enumerate}
        \item For every $v \in W \cap V_0$ holds $\nh{\cS}{v} \neq \emptyset$. \label{def: winning strategy - player 0}
        \item For every $v \in W \cap V_1$ holds $\nh{\cS}{v} = \nh{G}{v}$. \label{def: winning strategy - player 1}
        \item $\init \subseteq W$. \label{def: winning strategy - initial position}
        \item $(W \times W) \cap \Eexcl = \emptyset$. \label{def: winning strategy - exclusion edges}
    \end{enumerate}
    \end{minipage}
\end{defi}

\noindent
Intuitively, the conditions (\ref{def: winning strategy - player 0}) and (\ref{def: winning strategy - player 1})
state that the strategy must provide at least one move from each node of player $0$ used by the strategy but does not make assumptions about the moves that player $1$ may make whenever the strategy
plays a node belonging to player $1$.
In particular, the strategy must not play any terminal vertices that are in $V_0$.
Furthermore, (\ref{def: winning strategy - initial position}) enforces that at least the initial vertices are contained while
(\ref{def: winning strategy - exclusion edges}) disallows playing with conflicting pairs $(v,w) \in \Eexcl$, i.e.~$v$ and $w$ must not
coexist in any winning strategy for player~$0$.
If $\init = \emptyset$, then $(\emptyset, \emptyset)$ is the trivial winning strategy.
Another interesting observation is that $(W,F)$ is a winning strategy for player~$0$, if and only if the subgraph of the game induced by $W$, which is $(W,(W\times W) \cap E)$, is a winning strategy.
In other words, me may freely add edges that are present in the game graph between nodes of a winning strategy, i.e.\ a winning strategy is determined by its nodes.

Since we do \emph{not} have a notion for a winning strategy for player $1$, inclusion-exclusion games can be viewed as solitaire games.

Of course, the winning condition of an inclusion-exclusion game $\cG$ is first-order definable.
The formula $\phiwin(W)$ has the property that $\cG\models\phiwin(W)$ if and only if $(W,(W\times W) \cap E)$ is a winning strategy for player 0 in $\cG$, where
\begin{align*}
    \phiwin(W) \; \ceq \; & \A v ( Wv \ra [(V_0 v \land \E w (Evw \land Ww)) \,\lor \\
    & \phantom{\A v [ Wv \ra (]} (V_1 v \land \A w (Evw \ra Ww))] ) \, \land  \\
    &  \A v (\init v \ra Wv) \land \A v \A w ((Wv \land Ww) \ra \neg \Eexcl vw).
\end{align*}
describes the winning conditions imposed on the subgraph induced by $W$.

We are mainly interested in the subset of target vertices that are visited by a winning strategy $\cS = (W,F)$.
More formally, $\cS$ induces $\target(\cS) \ceq W \cap T$, which we also call the \emph{target} of $\cS$.
This definition is reminiscent to the notion $\text{Team}(\cS, \psi)$ from \cite{Graedel13}, but instead of relying on a formula $\psi$ we are using the component $T$ of the game to define $\target(\cS)$.
Now we can associate with every inclusion-exclusion game $\cG$ the set of targets of winning strategies:
$\target(\cG) \ceq \set{\target(\cS)}{\cS \text{ is a winning strategy for player 0 in } \cG}$.

Intuitively, as already pointed out, games of this kind will also be the model-checking games for $\eso$-formulae $\phi(X)$ that have a free relational variable $X$.
Given a structure $\fA$ and such a formula, we are interested in the possible relations $Y$ that satisfy the formula, in symbols $(\fA, Y)\models\phi(X)$.
We will construct the game such that $Y$ satisfies $\phi$ if and only if there is a strategy of player 0 winning for the set $Y\subseteq T$, thus $\target(\cG) = \{Y : (\fA, Y)\models\phi\}$.
It will be more convenient for our purposes that the target vertices of an inclusion-exclusion game are not required to be terminal positions.
However it would be no restriction as it is easy to transform any given game into one that agrees on the (possible) targets, in which all target vertices are terminal.

Let us start the analysis of inclusion-exclusion games from a complexity theoretical point of view.
The satisfiability problem of propositional logic can be reduced to deciding whether player 0 has a winning strategy in an inclusion-exclusion game.

\begin{thm}
    \label{thm: winning inex games is npc}
    The problem of deciding whether $X\in\target(\cG)$ for a finite inclusion-exclusion game $\cG$ is \npc in the size of $\cG$.
\end{thm}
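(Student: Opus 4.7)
I would prove both containment in \np and \np-hardness separately. The overall plan is straightforward: a winning strategy is determined by its vertex set $W$ (as noted just after Definition~\ref{def: winning strategy}) whose size is bounded by $|V|$, so a guess-and-check argument yields membership, while a reduction from propositional satisfiability provides hardness.

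\medskip

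\textbf{Membership in \np.} Given $\cG$ and a candidate target $X \subseteq T$, I would guess a vertex set $W \subseteq V$ and verify in polynomial time that the induced subgraph $(W, (W\times W)\cap E)$ is a winning strategy for player~$0$ whose target equals $X$. The four conditions of Definition~\ref{def: winning strategy} are each locally checkable: for every $v \in W\cap V_0$ we inspect whether some $(v,w)\in E$ with $w\in W$ exists; for every $v\in W\cap V_1$ we ensure every $E$-successor of $v$ lies in $W$; we verify $\init\subseteq W$ and that no pair in $(W\times W)$ is flagged by $\Eexcl$. Finally we test $W\cap T = X$. Each check is polynomial in $|\cG|$, so the whole verification is in \ptime.

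\medskip

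\textbf{\np-hardness.} I would reduce \textsc{Sat} to the problem of deciding $\varnothing \in \target(\cG)$ (taking $T\ceq \emptyset$). Given a CNF formula $\phi = C_1 \land \dotsb \land C_m$ over variables $x_1,\dotsc,x_n$, construct $\cG$ with:
\begin{itemize}
    \item a root vertex $r \in V_1$ forming $\init \ceq \{r\}$, with $E$-edges to each clause vertex $c_j$;
    \item a clause vertex $c_j \in V_0$ for each $j$, with $E$-edges to vertices $p_i$ resp.~$n_i$ for every positive resp.~negative literal of $x_i$ occurring in $C_j$;
    \item truth-value vertices $p_i, n_i \in V_1$ (terminal), and $\Eexcl\ceq \bigcup_i\{(p_i,n_i)\}$.
\end{itemize}
Any winning strategy $W$ must contain $r$ (by~(\ref{def: winning strategy - initial position})) and hence all $c_j$ (by~(\ref{def: winning strategy - player 1})); each $c_j\in V_0$ forces at least one literal vertex into $W$ (by~(\ref{def: winning strategy - player 0})), while~(\ref{def: winning strategy - exclusion edges}) forbids both $p_i$ and $n_i$ in $W$ simultaneously. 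Terminal $V_1$-vertices satisfy condition~(\ref{def: winning strategy - player 1}) vacuously. Hence winning strategies correspond precisely to satisfying assignments of $\phi$, where $p_i \in W$ encodes $x_i \mapsto \text{true}$ and $n_i \in W$ encodes $x_i \mapsto \text{false}$ (unused variables may be set arbitrarily). Since $T=\emptyset$, the unique possible target is $\emptyset$, so $\emptyset \in \target(\cG)$ iff $\phi$ is satisfiable. The construction is clearly polynomial in $|\phi|$.

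\medskip

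\textbf{Main obstacle.} Neither direction requires deep work; the only subtlety is choosing the player assignment for the literal vertices so that condition~(\ref{def: winning strategy - player 1}) is discharged for free on terminal positions --- placing them in $V_1$ (rather than in $V_0$, which would falsify condition~(\ref{def: winning strategy - player 0})) makes the reduction go through cleanly.
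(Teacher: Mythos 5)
Your proof is correct and follows essentially the same approach as the paper: both reduce satisfiability of CNF formulae by pairing complementary literal vertices with exclusion edges and letting player-$0$ clause vertices choose a literal, with membership in \np established by guessing the vertex set $W$ and checking the induced subgraph. The only (immaterial) difference is that the paper forces all clauses into the strategy by taking $T$ to be the set of clause vertices and asking whether $T\in\target(\cG)$ with $\init=\emptyset$, whereas you use a player-$1$ initial vertex pointing to all clauses and ask whether $\emptyset\in\target(\cG)$ with $T=\emptyset$; both mechanisms are valid.
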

\begin{proof}
    Determining whether $X\in\target(\cG)$ holds is clearly in \np, as the winning strategy can be guessed and verified in polynomial time.

    For the $\np$-hardness we present a reduction from the satisfiability problem of propositional logic.
    Let $\phi$ be formula in conjunctive normal form, i.e.~$\phi = \bigwedge_{j\leq m} C_j$ where $C_j = \bigvee_{i} L_i$ is a disjunction of literals (variables or negated variables).
    The game $\cG_\phi$ is constructed as follows.
    For every variable $x$ we add two vertices (of player 1) $x$ and $\neg x$ connected by an exclusion edge.
    Moreover, for every clause $C_j$ we add a vertex, belonging to player 0, which has an outgoing edge into each literal $L_i$ occurring in it. There are no initial vertices, i.e.~$\init \ceq \emptyset$
    and the target set $T$ is the set of all clauses $C_j$ of $\phi$.
    Now, $S \in \target(\cG)$ if and only if $\bigwedge_{C_j \in S} C_j$ is satisfiable.
    In particular, $T \in\target(\cG_\phi)$ if and only if $\phi$ is satisfiable.
\end{proof}

The complexity is, of course, ascribed to the exclusion edges.

\begin{obs}
    \label{obs: inex without ex in P}
    The problem of deciding whether $X\in\target(\cG)$ for a finite inclusion-exclusion game $\cG$ without exclusion edges, i.e.~$\Eexcl = \emptyset$ is in \ptime and in fact complete for this class.
\end{obs}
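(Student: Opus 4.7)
The plan has two parts: membership in \ptime and \ptime-hardness.

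For the upper bound, the key observation is that in the absence of exclusion edges, condition (\ref{def: winning strategy - exclusion edges}) of Definition \ref{def: winning strategy} disappears and the remaining consistency conditions (\ref{def: winning strategy - player 0}) and (\ref{def: winning strategy - player 1}) are preserved under unions of subsets of a fixed bounding set. Setting $V' \ceq X \cup (V \setminus T)$, any winning strategy $\cS = (W,F)$ with $\target(\cS) = X$ must satisfy $W \subseteq V'$ (to enforce $W \cap T \subseteq X$) together with $I \cup X \subseteq W$. Hence there is a greatest $W^* \subseteq V'$ satisfying (\ref{def: winning strategy - player 0}) and (\ref{def: winning strategy - player 1}), and it can be computed by the standard greatest-fixed-point iteration: initialise $W^* \ceq V'$, then repeatedly remove any $v \in W^* \cap V_0$ having no $E$-successor in $W^*$ and any $v \in W^* \cap V_1$ having an $E$-successor outside $W^*$. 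The iteration stabilises in at most $|V|$ rounds, each round polynomial in $|\cG|$, and $X \in \target(\cG)$ holds iff $I \cup X \subseteq W^*$; in that case $W^*$ itself induces a winning strategy with target exactly $X$ (since $W^* \cap T \subseteq V' \cap T = X$ by construction, and $X \subseteq W^*$ by assumption).

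For the lower bound, I would reduce from the monotone circuit value problem, a canonical \ptime-complete problem under logspace reductions. Given a monotone Boolean circuit $C$, construct $\cG_C$ with one vertex $v_g$ per gate: place AND-gates and $1$-inputs in $V_1$, place OR-gates and $0$-inputs in $V_0$, and add an edge from each gate to each of its input gates. Set $\init \ceq \{v_{\operatorname{output}}\}$, $T \ceq \emptyset$, $X \ceq \emptyset$ and $\Eexcl \ceq \emptyset$, so that the target condition is vacuous and the only question is whether a winning strategy containing $v_{\operatorname{output}}$ exists. A straightforward induction on gate depth shows that $v_g$ belongs to some winning strategy iff $g$ evaluates to $1$ in $C$: a $0$-input is a terminal owned by player $0$ and thus violates (\ref{def: winning strategy - player 0}) whenever included; an AND-gate in $V_1$ forces both predecessors to be present; an OR-gate in $V_0$ requires at least one predecessor to be present. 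Consequently $\emptyset \in \target(\cG_C)$ iff $C$ outputs $1$.

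I do not anticipate a serious obstacle: both directions are essentially standard. The only point demanding care is the bookkeeping that pins down $W \cap T = X$ \emph{exactly} rather than one of the two inclusions, which is why I restrict the search space to $V' = X \cup (V \setminus T)$ \emph{before} running the fixed-point iteration; the remaining verification $X \subseteq W^*$ at the end then turns the reverse inclusion into equality.
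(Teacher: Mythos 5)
Your proof is correct. The paper handles both directions by a two-way reduction with the \ptime-complete problem $\mathsf{Game}$ (alternating reachability): for membership it adds a fresh player-1 vertex with edges to all of $\init \cup X$ and turns the vertices of $T \setminus X$ into dead ends for player~0, then appeals to the known \ptime algorithm for $\mathsf{Game}$; for hardness it observes that $\mathsf{Game}$ instances are already inclusion-exclusion games without exclusion edges. Your argument is the same in substance but packaged differently: the greatest-fixed-point computation on $V' = X \cup (V\setminus T)$ is exactly how one would solve the paper's modified $\mathsf{Game}$ instance, and the monotone circuit value problem is interreducible with $\mathsf{Game}$ in the obvious way. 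What your version buys is self-containedness and an explicit treatment of the exactness bookkeeping (forcing $W \cap T = X$ rather than just one inclusion), which the paper leaves implicit. One cosmetic slip: the inductive claim that ``$v_g$ belongs to some winning strategy iff $g$ evaluates to $1$'' fails in the right-to-left direction when the output gate evaluates to $0$, since then no winning strategy exists at all (because $\init = \{v_{\operatorname{output}}\}$ must be contained in every strategy); the induction should be phrased for sets satisfying only the two local consistency conditions of Definition~\ref{def: winning strategy}, with membership of $v_{\operatorname{output}}$ checked afterwards. The final equivalence $\emptyset \in \target(\cG_C) \iff C$ outputs $1$ is unaffected.
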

\begin{proof}
    We describe the reduction from, and to, the well known $\ptime$ complete problem $\mathsf{Game}$.
    The input is a game graph $G = (V, V_0, V_1, E)$ and a vertex $v$ and the task is to determine whether player 0 has a winning strategy starting from $v$.
    Here again a player loses if he cannot make a move.

    It is easy to see that inclusion-exclusion games are an extension of this problem.
    For the converse direction add a fresh vertex $v$ of player 1 to the game that has outgoing edges to all $i\in I$ and $t\in X$.
    Further, turn all $t\in T\setminus X$ into player 0 vertices and remove all their outgoing edges.
    Now player 0 wins the $\mathsf{Game}$ problem on this graph if, and only if,  $X\in\target(\cG)$.
\end{proof}

\subsection{Model-Checking Games for Existential Second-Order Logic}

In this section we define model-checking games for formulae $\phi(X) \in \Sigma^1_1$ with a free relational variable.
These games are inclusion-exclusion games $\cG$ whose target sets $\target(\cG)$ are precisely the sets of relations $X$ that satisfy $\phi(X)$ under a fixed structure $\fA$.

\begin{defi}
    \label{def: inex game for Phi}
    Let $\fA$ be a $\tau$-structure and $\phi(X) =  \E \tR\phi'(X, \tR) \in \eso$ (in negation-normal form) where $\phi'(X, \tR) \in \fo(\tau \cup \{X, \tR\})$ using a free relation symbol $X$ of arity $r \ceq \arity{X}$.
    The game $\mcX{\fA}{\phi} \ceq (V, V_0, V_1, E, \init, T, \Eexcl)$ consists of the following components:
    \begin{itemize}
        \item $V \ceq \set{(\theta, s)}{\theta \in \subf(\phi'), s\colon \free(\theta) \ra A} \cup A^r$
        \item $\init \ceq \{(\varphi', \emptyset)\}$,
        \item$T \ceq A^r$,
        \item $\begin{aligned}[t]
        V_1 \,\ceq\, & \set{(\theta, s)}{\theta = \forall y \gamma \text{ or } \theta = \gamma_1 \land \gamma_2} \cup
        \set{(\gamma, s)}{\gamma \text{ is a $\tau$-literal and } \fA \models_s \gamma} \, \cup  \\
        & \set{(\gamma, s)}{\gamma = \neg X\tx \text{ or } \gamma \text{ is a $\{\tR \}$-literal}} \cup T\end{aligned}$,
        \item $V_0 \ceq V \setminus V_1$,
        \item $\begin{aligned}[t]
        E \,\ceq\, & \set{((\gamma \circ \theta,s),(\delta,s\res{\free(\delta)}))}{ \circ \in \{ \land, \lor \}, \delta \in  \{ \gamma, \theta \} } \, \cup \\
        & \set{((X\tx,s),s(\tx))}{X\tx \in \subf(\phi')} \, \cup  \\
        & \set{((Qx\gamma,s), (\gamma, s'))}{Q \in \{ \E, \A \}, s' = s[x \mapsto a], a \in A },
        \end{aligned}$
        \item $\Eexcl \ceq \set{((R_i \tx, s), (\neg R_i \ty, s'))}{s(\tx) = s'(\ty)} \cup \set{((\neg X \tx, s), \ta)}{s(\tx) = \ta}$.
    \end{itemize}
\end{defi}

\noindent
These games capture the behaviour of existential second-order formulae which provides us with the following theorem.

\begin{thm}
    \label{thm: eso <=> inex-game}
    $(\fA, X)\models\phi(X) \iff$ Player 0 has a winning strategy $\cS$ in $\mcX{\fA}{\phi}$ with $\target(\cS) = X$.
    Or, in other words: $\target(\mcX{\fA}{\phi}) = \set{Y \subseteq A^r}{(\fA, Y) \models \phi(X)}$.
\end{thm}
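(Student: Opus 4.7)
The plan is to treat $\mcX{\fA}{\phi}$ as essentially a standard first-order model-checking game for $\phi'(X,\tR)$ on an expansion $(\fA, X, \tR^\star)$, augmented with two new features: the target vertices in $T = A^r$, together with the $X$-literals and their $\Eexcl$-pairs that control whether each $\ta \in T$ does or does not belong to $\target(\cS)$, and the $\tR$-literals together with the consistency pairs in $\Eexcl$ that encode a choice of the existentially quantified $\tR$. Both directions then reduce to an induction on the structure of $\phi'$ in negation normal form, with the atomic cases carrying all of the new content.

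For the direction from left to right, I would suppose $(\fA, X) \models \phi(X)$ and fix $\tR^\star$ with $(\fA, X, \tR^\star) \models \phi'(X,\tR)$. I would first build the node set $W_0$ of a standard deterministic FO model-checking strategy for $\phi'$ under this expansion: starting from $(\phi', \emptyset)$, at every reached $\vee$- or $\exists$-position (which belongs to $V_0$) include one successor chosen to be satisfied by the expansion, at every reached $\wedge$- or $\forall$-position include all successors, at every reached position $(X\tx, s)$ include the forced move to $s(\tx)$, and at every literal or $\neg X \tx$ position do nothing. Then I would set $W \ceq W_0 \cup X$, adding precisely those target vertices of $X$ that the FO play did not already reach. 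The four winning-strategy conditions are then routine to verify: conditions (\ref{def: winning strategy - player 0}) and (\ref{def: winning strategy - player 1}) follow because the extra pure-target vertices lie in $V_1$ with no outgoing edges; (\ref{def: winning strategy - initial position}) is immediate; for (\ref{def: winning strategy - exclusion edges}), the $\tR$-exclusion pairs cannot conflict since the strategy only visits $\tR$-literals satisfied by $\tR^\star$, and an exclusion edge $((\neg X \tx, s), s(\tx))$ cannot be violated because $(\neg X \tx, s) \in W_0$ forces $s(\tx) \notin X$, while $W \cap T = (W_0 \cap T) \cup X = X$ (the FO play only reaches target vertices through $X\tx$-positions with $s(\tx) \in X$). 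The same identity $W \cap T = X$ yields $\target(\cS) = X$.

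Conversely, given a winning strategy $\cS = (W,F)$ with $\target(\cS) = X$, I would set $R_i^\star \ceq \{\, s(\tx) : (R_i \tx, s) \in W \,\}$ for each $i$; the absence of $\tR$-exclusion edges inside $W$ guarantees that $\tR^\star$ is consistent with every $\tR$-literal vertex in $W$. I would then show, by induction on a subformula $\theta$ of $\phi'$, that $(\theta, s) \in W$ implies $(\fA, X, \tR^\star) \models_s \theta$. The atomic cases use the $V_0/V_1$-classification together with the winning conditions: an unsatisfied $\tau$-literal lies in $V_0$ with no successors, so by (\ref{def: winning strategy - player 0}) cannot occur in $W$; a position $(X \tx, s) \in W$ sits in $V_0$ and therefore its only successor $s(\tx)$ must lie in $W \cap T = X$; a position $(\neg X \tx, s) \in W$ has an exclusion edge to $s(\tx) \in T$, so (\ref{def: winning strategy - exclusion edges}) forces $s(\tx) \notin X$; the $\tR$-literal cases are analogous via the definition of $\tR^\star$. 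The connective cases for $\wedge, \vee, \forall, \exists$ drop out directly from (\ref{def: winning strategy - player 0}) and (\ref{def: winning strategy - player 1}). Applied to $(\phi', \emptyset) \in \init \subseteq W$, this yields $(\fA, X, \tR^\star) \models \phi'$, hence $(\fA, X) \models \phi$.

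The main subtlety to watch for is that Definition~\ref{def: winning strategy} imposes no reachability condition: $W$ may contain target vertices (and even whole subgraphs) that no play of the game actually visits. This is exactly what permits $\target(\cS)$ to realise any target set compatible with $\phi$ and forces the $\cup X$ step in the forward construction, and correspondingly the backward induction has to establish its conclusion for \emph{every} node present in $W$ rather than only for those reachable from $\init$.
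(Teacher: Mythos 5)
Your proof is correct and follows essentially the same route as the paper's: both directions hinge on viewing $\mcX{\fA}{\phi}$ as the first-order model-checking game for $\phi'(X,\tR)$ over the expansion $(\fA,X,\tR^\star)$, with the target vertices, the $X$-literals and the exclusion pairs handling the free relation and the existential $\tR$. The only difference is presentational — the paper transfers a winning strategy to and from the standard FO game $\mc{(\fA,Y,\tP)}{\phi'}$ and cites its correctness, whereas you inline the strategy construction and the soundness induction; your closing observation that the induction must cover all of $W$ (not just reachable nodes) is a valid and worthwhile point.
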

\begin{proof}
    First let $(\fA, Y)\models\phi = \E \tR \phi'(X,\tR)$.
    Then there exist relations $\tP$ such that $(\fA, Y, \tP) \models \phi'(X, \tR)$.
    So player $0$ wins the (first-order) model-checking game $\cG' \ceq \mc{(\fA, Y, \tP)}{\phi'(X, \tR)}$.
    Let $\cS' = (W',F')$ be a winning strategy for player 0 in $\cG'$ and $\cS \ceq (W,F)$ where $W \ceq W' \cup Y$ and
    $F \ceq F' \cup \set{((X\tx, s),\ta) \in V \times V}{\ta \in Y \text{ and } s(\tx) = \ta }$.
    Clearly we have that $\target(S) = W \cap T = Y$.
    To conclude this direction of the proof, we still need to prove that $\cS$ is indeed a winning strategy in $\cG\ceq\mcX{\fA}{\phi}$.
    The required properties for player $0$ and $1$, that is (\ref{def: winning strategy - player 0}) and (\ref{def: winning strategy - player 1}) of Definition \ref{def: winning strategy}, are inherited from $\cS'$ for every node of the form $(\theta, s) \in V\setminus T$ with $\theta \neq X \tx$.
    For nodes of the form $v = (X\tx, s) \in W$ we have that $v \in V_0$ and, because of $v \in W'$ and the fact that $\cS'$ is a winning strategy in $\cG'$, $s(\tx) \in Y$ must follow.
    As result, we have $s(\tx) \in \nh{\cS}{v}$.
    Since $T \subseteq V_1$ and are terminal positions, condition (\ref{def: winning strategy - player 1}) is trivially satisfied for all $v \in W\cap T$.

    Property (\ref{def: winning strategy - initial position}) is clearly satisfied, because $(\phi', \emptyset)$ is the initial position
    of $\cG'$ and, hence, $(\phi', \emptyset) \in W' \subseteq W$.
    In order to prove that the last remaining condition, the exclusion condition (\ref{def: winning strategy - exclusion edges}), is satisfied, consider any $(v,w) \in \Eexcl$.
    Then there are two possible cases:

    Case $(v,w) = ((R_i \tx, s), (\neg R_i \ty, s'))$ with $s(\tx) = s'(\ty)$: Then either $v$
    or $w$ is a losing position for player $0$ in $\cG'$. As a result, $W'$ does not contain both $v$ and $w$
    and, thus, neither does $W$.

    Case $(v,w) = ((\neg X \tx, s), \ta)$ and $s(\tx) = \ta$: If $v \in W$, then $v = (\neg X \tx, s) \in W'$ and,
    since $\cS'$ is a winning strategy for player $0$ in $\cG'$, it must be the case that $s(\tx) \notin Y$ which implies that
    $\ta = s(\tx) \notin W$.
    So, $v \in W$ and $w \in W$ exclude each other.

    For the converse direction, let $\cS = (W,F)$ now be a winning strategy for player $0$ in $\cG$ with $\target(\cS) = Y$.
    We have to show that $(\fA, Y) \models \exists \tR \phi'(X, \tR)$.
    Let $P_i \ceq \set{s(\tx)}{(R_i\tx, s) \in W}$.
    Furthermore, we define $\cS' \ceq \cS - A^{\arity{X}}$, that is the restriction of the strategy $\cS$ to $V(\cG') = V \setminus A^{\arity{X}}$.

    We prove that $\cS'$ is a winning strategy for player $0$ in the first-order model-checking game $\cG' \ceq \mc{(\fA, Y, \tP)}{\phi'(X, \tR)}$.
    First of all, the conditions for player $0$ and $1$ for non-terminal positions are inherited from $\cS$.
    For the same reason we also have $(\phi', \emptyset) \in V(\cS')$.
    We still need to prove that $\cS'$ contains only terminal positions that are winning for player $0$.
    This is inherited for all terminal positions that are not using any $R_i$ nor $X$.
    We will now investigate the other terminal positions, i.e.~positions of the form $((\neg)R_i\tx,s)$ or $((\neg)X\tx,s)$.
    Clearly, if $\cS'$ plays $(R_i \tx, s)$, then $(R_i \tx, s) \in W$ and $s(\tx) \in P_i$ (by definition of $P_i$)
    implying that $(\fA,Y,\tP) \models_s R_i\tx$ and, hence, $(R_i \tx, s)$ is a winning position for player $0$ in $\cG'$.
    In the case that $\cS'$ visits $(\neg R_i \tx, s)$, we know that $(\neg R_i \tx, s) \in W$ and, because $\cS$ respects the exclusion condition,
    a position of the form $(R_i \ty, s')$ with $s'(\ty) = s(\tx)$ cannot be in $W$.
    So, in this case, we have that $s(\tx) \notin P_i$ and, hence, $(\neg R_i \tx, s)$ is again a winning position for player $0$.
    If $\cS'$ contains $v \ceq (X\tx,s)$, then the edge $(v, s(\tx))$ is played by $\cS$ and, consequently, $s(\tx) \in W \cap T = \target(\cS) = Y$
    which shows that $v$ is a winning position for player $0$ in $\cG'$.
    If, however, $(\neg X\tx,s)$ is played by $\cS'$, then $(\neg X\tx,s) \in W$ and, due to exclusion condition, $s(\tx) \notin W$
    which proves that $s(\tx) \notin W \cap T = Y$ and, again, $(\neg X\tx, s)$ is a winning for player $0$ in $\cG'$.
    As a result, we have that $(\fA, Y) \models \phi$.
\end{proof}

\section{Characterising the Union Closed Formulae within Existential Second-Order Logic}
\label{sec:char U in eso}

In this section we investigate formulae $\phi(X)$ of existential \secorder logic that are closed under unions with respect to their free relational variable $X$.
Union closure, being a semantical property of formulae, is undecidable.
However, we present a \emph{syntactical} characterisation of all such formulae via the following normal form.

\begin{defi}
    \label{def: myopic formula}
    A formula $\phi(X)\in\eso$ is called \emph{myopic} if it has the shape $\phi(X) = \A\tx(X\tx\rightarrow \E \tR \phi'(X,\tR))$, where $\phi'\in\fo$ and $X$ occurs only positively\footnote{That is under an even number of negations.} in $\phi'$.
\end{defi}

\noindent
Variants of myopic formulae have already been considered for first-order logic \cite[Definition 19]{GalHel13} and for greatest fixed-point logics \cite[Theorem 24 and Theorem 26]{Graedel16},
but to our knowledge myopic $\eso$-formulae have not been studied so far.

Let $\union$ denote the set of all union closed $\eso$-formulae.
To establish the claim that myopic formulae are a normal form of $\union$ we need to show that all myopic formulae are indeed closed under unions and, more importantly, that every union closed formula can be translated into an equivalent myopic formula.
This translation is in particular constructive.

In order to strengthen the intuition on myopic formulae we start with a basic property.

\begin{prop}
    \label{prop: myopic => union closed}
    Every myopic formula is union closed.
\end{prop}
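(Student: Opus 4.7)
The plan is to unfold the myopic form $\phi(X) = \forall \tx (X\tx \to \exists \tR\, \phi'(X, \tR))$ and exploit the positivity of $X$ in $\phi'$ via a standard monotonicity argument. Suppose $(\fA, X_i) \models \phi$ for every $i \in I$, and set $X \ceq \bigcup_{i \in I} X_i$. To verify $(\fA, X) \models \phi$, pick an arbitrary tuple $\ta \in A^{|\tx|}$ with $\ta \in X$; it suffices to exhibit relations $\tP$ such that $(\fA, X, \tP) \models \phi'(X, \tR)$ when $\tx$ is interpreted as $\ta$.

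By the definition of $X$, there is some index $i$ with $\ta \in X_i$. Applying the hypothesis $(\fA, X_i) \models \phi$ to the tuple $\ta$ yields relations $\tP$ with $(\fA, X_i, \tP) \models \phi'(X, \tR)$ (here $\phi'$ is evaluated under the interpretation $X \mapsto X_i$, $\tR \mapsto \tP$). The core step is then to lift this from $X_i$ to the larger $X$, i.e.\ to show $(\fA, X, \tP) \models \phi'(X, \tR)$. This is where positivity is used: since $X$ occurs only positively in the first-order formula $\phi'$, an easy structural induction gives that $\phi'(Y, \tR)$ is monotone in $Y$, that is $Y \subseteq Y'$ together with $(\fA, Y, \tP) \models \phi'$ implies $(\fA, Y', \tP) \models \phi'$. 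Since $X_i \subseteq X$, the conclusion follows.

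Putting these together, for every $\ta \in X$ there exist witnesses $\tP$ making $\phi'(X, \tR)$ true at $\ta$ under $X$, so $(\fA, X) \models \forall \tx (X\tx \to \exists \tR\, \phi'(X, \tR)) = \phi$. The only non-trivial ingredient is the monotonicity of first-order formulae in positively occurring relation symbols, which I would either cite as folklore or prove in one short induction on the structure of $\phi'$ (atoms involving $X$ occur positively by assumption, so increasing $X$ only adds to their satisfaction; Boolean connectives and quantifiers preserve monotonicity in the standard way). No further subtlety arises from the $\exists \tR$ quantifier in the myopic form, because the witnesses $\tP$ chosen for $X_i$ are simply re-used for $X$.
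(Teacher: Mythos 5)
Your proposal is correct and follows essentially the same route as the paper's proof: fix $\ta \in X$, locate an $i$ with $\ta \in X_i$, and use the positivity of $X$ in $\phi'$ (monotonicity) to lift satisfaction from $X_i$ to $X$ while reusing the same witnesses $\tP$. You merely spell out the monotonicity step that the paper compresses into ``a fortiori''.
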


\begin{proof}
    Let $\phi = \A \tx (X\tx \ra \E \tR \phi'(X,\tR))$ and $(\fA, X_i)\models\phi$ for all $i\in I$.
    We claim that $(\fA, X)\models \phi$ for $X=\bigcup_{i\in I}X_i$.
    Let $\ta\in X_i\subseteq X$.
    By assumption $(\fA, X_i)\models_{\tx\mapsto\ta}\E\tR\phi'(X, \tR)$.
    A fortiori ($X$ occurs only positively in $\phi'$), we obtain $(\fA, X)\models_{\tx\mapsto\ta}\E\tR\phi'(X, \tR)$.
    Since $\ta$ was chosen arbitrarily, this property holds for all $\ta\in X$, hence the claim follows.
\end{proof}

Of course, the reverse direction is the difficult and interesting part.
The main theorem of this section combines both parts.

\begin{thm}
    \label{thm: myopic <=> union-closed}
    $\phi(X) \in \eso $ is union closed if and only if $\phi(X)$ is equivalent to some myopic $\eso$-formula.
\end{thm}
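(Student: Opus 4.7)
The plan is to combine the game-theoretic characterisation of Section~\ref{sec: in/ex games} with a reformulation of union closure as a \emph{local} property of winning strategies in the model-checking game $\mcX{\fA}{\phi}$. The ``if'' direction is already Proposition~\ref{prop: myopic => union closed}. For the converse, the core idea is to encode, in the body of a myopic formula, a \emph{family} of winning strategies, one per element of $X$.

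First, I would reformulate union closure in game-theoretic terms. By Theorem~\ref{thm: eso <=> inex-game}, $(\fA, Y) \models \phi$ iff there is a winning strategy with target $Y$ in $\mcX{\fA}{\phi}$. Union closure is then equivalent to the following local statement: $(\fA, X) \models \phi$ iff for every $\ta \in X$ there exists a winning strategy $\cS_\ta$ in $\mcX{\fA}{\phi}$ whose target is contained in $X$ and contains $\ta$. One direction is trivial (take $\cS_\ta$ to be a strategy with target $X$); the other uses that $X = \bigcup_{\ta \in X} \target(\cS_\ta)$ together with union closure of $\phi$ applied to the family $\{(\fA, \target(\cS_\ta)) \models \phi\}_{\ta \in X}$.

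Next, I would encode this property syntactically. Writing $\phi(X) = \E \tR\, \phi'(X, \tR)$, I introduce fresh relation variables $W_\theta$ of arity $|\free(\theta)|$ for each $\theta \in \subf(\phi')$, plus a variable $W_T$ of arity~$r$, to represent the node set of a winning strategy (which by the observation after Definition~\ref{def: winning strategy} already determines the strategy). The game graph, the partition into $V_0$ and $V_1$, the initial position, and the exclusion edges of Definition~\ref{def: inex game for Phi} are all first-order definable in $\fA$ \emph{without reference to $X$}, so the four conditions of Definition~\ref{def: winning strategy} translate into a first-order formula $\psi_{\operatorname{strat}}(\bar W)$ over $\tau \cup \{\bar W\}$ alone, consisting of one conjunct per syntactic shape of $\theta$ (literal / $\land$ / $\lor$ / $\forall$ / $\exists$ / $X\tx$ / $\neg X\tx$ / $\tR$-literal) mirroring the edge structure. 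I would then set
\[
    \psi(X) \;\ceq\; \A \ta\bigl( X\ta \to \E \bar W \bigl[\,\psi_{\operatorname{strat}}(\bar W) \land W_T \ta \land \A \tb(W_T \tb \to X \tb)\,\bigr] \bigr).
\]
Here $X$ appears only in the guard $X\ta$ and in the positive literal $X\tb$ on the right of $W_T\tb \to X\tb$ (in negation normal form: $\neg W_T\tb \lor X\tb$), so $\psi(X)$ is syntactically myopic. The equivalence $\phi \equiv \psi$ is then read off from the game-theoretic reformulation above, with $W_T$ playing the role of $\target(\cS_\ta)$.

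The main obstacle is careful book-keeping while writing down $\psi_{\operatorname{strat}}$: one must (a) handle the $X$-exclusion edges between $(\neg X\tx, s)$ and $s(\tx)$ purely as a constraint between $W_{\neg X\tx}$ and $W_T$, so that $X$ itself does not sneak in as a negative occurrence; (b) exploit that strategies are determined by their node sets, to avoid having to quantify over the edge set $F$ of the strategy; and (c) deal with the fact that different subformulae have different numbers of free variables, which forces the second-order quantifier block $\E \bar W$ to contain one relation of appropriate arity per element of $\subf(\phi')$, rather than a single uniform predicate. These points are precisely where the myopic shape could be broken if the translation is done naively.
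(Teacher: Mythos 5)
Your argument is correct, but it takes a genuinely different --- and considerably heavier --- route than the paper's primary proof, while landing close to the paper's alternative, game-based proof. The paper settles the hard direction with a two-line construction (Theorem~\ref{thm: eso-sentence => myopic}): $\mu(X) \ceq \A\tx\,(X\tx \ra \E Y\,(Y\subseteq X \land Y\tx \land \phi(Y)))$, which is myopic because $X$ occurs only in the guard and in $Y\subseteq X$, and which defines exactly the unions of models of $\phi$; for union closed $\phi$ this is already the required equivalent. Your formula is semantically the same object: by Theorem~\ref{thm: eso <=> inex-game}, your block $\E\tuple{W}\,[\psi_{\operatorname{strat}}(\tuple{W}) \land W_T\ta \land \A\tb\,(W_T\tb\ra X\tb)]$ is a game-theoretic paraphrase of $\E Y\,(Y\subseteq X\land Y\ta\land\phi(Y))$ with $W_T$ in the role of $Y$, so the whole apparatus of strategy relations $W_\theta$ indexed by $\subf(\phi')$ can be bypassed by quantifying a single $Y$ and reusing $\phi$ itself, in which $X$ need not occur at all. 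Your game-theoretic reformulation of union closure (``for every $\ta\in X$ there is a winning strategy $\cS_\ta$ with $\ta\in\target(\cS_\ta)\subseteq X$'') is exactly Claim~\ref{claim: phitarget is correct} of the paper, and your bookkeeping points (a)--(c) are handled correctly; the only point deserving explicit mention is $X=\emptyset$, where both your formula and $\mu$ hold vacuously and one needs union closure over the \emph{empty} family to conclude $\fA\models\phi(\emptyset)$. What the game route buys in the paper (Theorem~\ref{thm: union closed eso => myopic stronger}) is quantitative control: pulling the fixed formula $\phitarget$ back along a first-order interpretation of $\mcX{\fA}{\phi}$ yields a myopic equivalent with only $11$ literals over $X$ and quantified second-order symbols. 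Your inlined encoding avoids the interpretation machinery (and its ``at least two elements'' caveat) but forfeits that bound, since the number of quantified relations $W_\theta$ grows with $|\subf(\phi')|$; for the theorem as stated, either is enough, but the direct construction is by far the shortest path.
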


\noindent
We provide two proofs of the remaining direction, a direct one which is short and easy to understand, while it does not utilise the methods developed in Section \ref{sec: in/ex games}, which the second proof does.
This leads to a more difficult construction but provides more insight into the resulting formula.

The first variant shows that for every $\eso$-formula we can construct a myopic \quotes{companion} formula that basically is the union closed version of the original one.

\begin{thm}
    \label{thm: eso-sentence => myopic}
    For every $\phi(X)\in\eso$ there is myopic formula $\mu(X) \in \eso$ such that
    for every suitable structure $\fA$ and relation $X$ over $\fA$ holds
    \[
    (\fA,X) \models \mu(X) \Longleftrightarrow X \text{ can be written as } X = \bigcup_{i \in I} X_i \text{ where }
    \fA \models \phi(X_i) \text{ for every } i \in I.
    \]
\end{thm}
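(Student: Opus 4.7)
The plan is to construct a myopic companion $\mu(X)$ that, for every element $\ta\in X$, witnesses a subset $X_\ta \subseteq X$ containing $\ta$ on which $\phi$ already holds. The decomposition $X=\bigcup_{\ta\in X} X_\ta$ is then automatic, so the biconditional reduces to writing down the right local statement and arranging it in myopic shape.

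Concretely, I would assume without loss of generality that $\phi(X) = \exists \tR\, \psi(X,\tR)$ with $\psi\in\fo$ in negation normal form (and with $X$ of arity $r$), and set
\[
    \mu(X) \ceq \A\tx\Bigl( X\tx \ra \E Y\, \E\tR\, \bigl(Y\tx \,\land\, \A\ty(\neg Y\ty \lor X\ty) \,\land\, \psi(Y,\tR)\bigr)\Bigr),
\]
where $Y$ is a fresh $r$-ary second-order variable. The inner $\eso$-formula contains $X$ only in the conjunct $\A\ty(\neg Y\ty \lor X\ty)$ and there positively, while the role formerly played by $X$ inside $\psi$ has been taken over by the existentially quantified $Y$. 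Hence $\mu$ is syntactically myopic in the sense of Definition~\ref{def: myopic formula}.

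For the \emph{backward} direction I would take a decomposition $X=\bigcup_{i\in I} X_i$ with $\fA\models\phi(X_i)$, pick for each $\ta\in X$ some $i$ with $\ta\in X_i$ together with a tuple of relations $\tR_\ta$ witnessing $\fA\models\E\tR\,\psi(X_i,\tR)$, and then interpret $Y\ceq X_i$, which gives $Y\ta$, $Y\subseteq X$, and $\psi(Y,\tR_\ta)$. Conversely, if $(\fA,X)\models\mu(X)$, then for each $\ta\in X$ I would extract witnesses $Y_\ta$ and $\tR_\ta$ from the inner existential; the three conjuncts give $\ta\in Y_\ta$, $Y_\ta\subseteq X$ and $\fA\models\phi(Y_\ta)$, whence $\{Y_\ta\}_{\ta\in X}$ is a family that unions to $X$ and on which $\phi$ holds pointwise. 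The edge case $X=\emptyset$ is covered by the vacuous universal quantifier on the syntactic side and by the empty union $I=\emptyset$ on the semantic side.

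There is no deep obstacle: the only subtlety is the positivity requirement on $X$ in the myopic form, which is what forces the introduction of the fresh copy $Y$ of $X$ together with the subset assertion $\A\ty(\neg Y\ty \lor X\ty)$ — this is the unique place where $X$ survives, and it survives positively. Everything else is bookkeeping (negation normal form of $\psi$, renaming bound variables so that $\tx,\ty$ and the symbols in $\tR,Y$ are fresh), so a formal write-up should be very short.
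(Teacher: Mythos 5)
Your construction is exactly the paper's: the formula $\A\tx\bigl(X\tx \ra \E Y\,\E\tR\,(Y\tx \land \A\ty(\neg Y\ty \lor X\ty) \land \psi(Y,\tR))\bigr)$ is the paper's $\A\tx(X\tx \ra \E Y(Y\subseteq X \land Y\tx \land \phi(Y)))$ with the shorthand $Y\subseteq X$ and the inner $\E\tR$ written out, and both directions of your argument (choosing $Y_\ta\ni\ta$ for each $\ta\in X$ in one direction, interpreting $Y$ as a chosen $X_{i_\ta}$ in the other) coincide with the paper's. The proof is correct and takes essentially the same approach.
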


\begin{proof}
    Let $\mu(X) \ceq \A \tx (X\tx \ra \E Y (Y \subseteq X \land Y\tx \land \phi(Y)))$ where $Y\subseteq X$ is a shorthand for the formula $\A \ty (Y\ty \ra X \ty)$.
    Now, $\mu(X)$ is a myopic formula, since $X$ occurs only positively after the implication.
    We still need to prove the two directions of the claim.

    ``$\Longrightarrow$'': First assume that $(\fA, X) \models \mu(X)$.
    Then, for every $\ta \in X$, there exists some $Y_\ta \subseteq X$ with $(\fA, Y \mapsto Y_\ta) \models Y\ta \land \phi(Y)$.
    Thus, we have $\fA \models \phi(Y_\ta)$ and $\ta \in Y_\ta \subseteq X$ for every $\ta \in X$.
    The last property entails that $X = \bigcup_{\ta \in X} Y_\ta$.

    ``$\Longleftarrow$'':
    Now let $X = \bigcup_{i \in I} X_i$ where $\fA \models \phi(X_i)$ for every $i \in I$.
    For every $\ta \in X$, choose some index $i_\ta \in I$ with $\ta \in X_{i_\ta}$.
    Then we have $(\fA, X, Y \mapsto X_{i_\ta}) \models Y \subseteq X \land Y \ta \land \phi(Y)$ and, thus, $(\fA,X) \models \mu(X)$.
\end{proof}

\begin{cor}
    \label{cor: union closed => myopic}
    For every union closed formula $\phi(X)\in\eso$ there is an equivalent myopic formula $\mu(X) \in \eso$.
\end{cor}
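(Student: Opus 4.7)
The plan is to derive the corollary immediately from Theorem \ref{thm: eso-sentence => myopic} by feeding the assumption of union closure into its statement. Given a union closed $\phi(X) \in \eso$, I would let $\mu(X)$ be the myopic formula produced by Theorem \ref{thm: eso-sentence => myopic}, so that $(\fA,X) \models \mu(X)$ iff $X$ can be decomposed as $X = \bigcup_{i \in I} X_i$ with $\fA \models \phi(X_i)$ for every $i \in I$. My claim is then that $\phi \equiv \mu$.

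For the direction $\mu \Rightarrow \phi$, suppose $(\fA, X) \models \mu(X)$. Theorem \ref{thm: eso-sentence => myopic} supplies a family $(X_i)_{i \in I}$ with $X = \bigcup_{i \in I} X_i$ and $\fA \models \phi(X_i)$ for every $i \in I$; union closure of $\phi$ then yields $\fA \models \phi(X)$ directly. (The borderline case $I = \emptyset$, i.e.\ $X = \emptyset$, is likewise absorbed, since with the definition of union closure as stated an empty family already entails $\fA \models \phi(\emptyset)$.) For the reverse direction $\phi \Rightarrow \mu$, suppose $\fA \models \phi(X)$. Then the one-element family consisting of $X$ itself trivially witnesses the decomposition $X = \bigcup_{i \in \{0\}} X_i$ with $X_0 \ceq X$ and $\fA \models \phi(X_0)$, so by the other direction of Theorem \ref{thm: eso-sentence => myopic} we conclude $(\fA, X) \models \mu(X)$. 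Since $\mu$ is myopic by its very construction in the theorem, this gives the desired equivalent myopic formula.

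There is essentially no substantive obstacle here: once Theorem \ref{thm: eso-sentence => myopic} is in place, the corollary is a one-line semantic observation that exactly mirrors the intuition motivating myopia, namely that the myopic companion $\mu(X)$ internalises taking arbitrary unions of $\phi$-realisations. All the real work sits in constructing $\mu$ in Theorem \ref{thm: eso-sentence => myopic}; the union closure assumption is used only at the very last step to collapse the family $\bigcup_i X_i$ back into a single witness for $\phi$.
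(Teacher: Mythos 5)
Your proposal is correct and matches the paper's (implicit) argument exactly: the corollary is obtained by instantiating Theorem \ref{thm: eso-sentence => myopic} and using union closure for the direction $\mu \Rightarrow \phi$ and the trivial one-element decomposition for $\phi \Rightarrow \mu$. Your side remark about the empty family is also the right reading of the definition, since myopic formulae are always satisfied by $X=\emptyset$ and the equivalence would otherwise fail there.
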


\noindent
Theorem \ref{thm: myopic <=> union-closed} now follows from Proposition \ref{prop: myopic => union closed} and Theorem \ref{thm: eso-sentence => myopic}.
Towards greater insight into the myopic fragment we will now turn towards an alternative proof that every union closed formula is equivalent to a myopic one which utilises game theoretical methods.
For a fixed formula $\phi(X)$ the corresponding inclusion-exclusion game $\cG_X$ can be constructed by a first-order interpretation depending of course on the current structure.

\begin{lem}
    \label{lem: fo-interpretation for G_X}
    Let $\phi(X) = \exists \tR \phi'(X,\tR) \in \Sigma^1_1$ where $\phi' \in \fo(\tau \cup \{ X, \tR \})$ and $r \ceq \arity{X}$.
    Then there exists a quantifier-free interpretation $\cI$ such that $\cG_X(\fA, \phi) \cong \cI(\fA)$ for every structure $\fA$
    (with at least two elements).
\end{lem}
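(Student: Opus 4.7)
The plan is to encode the vertex set of $\mcX{\fA}{\phi}$ as a quantifier-free first-order definable quotient of tuples over $\fA$. Since $\subf(\phi') = \{\theta_1, \dots, \theta_N\}$ is finite and every assignment $s \colon \free(\theta_i) \to A$ involves at most $m \ceq \max_i |\free(\theta_i)|$ values, I would work with arity $k \ceq 2N + \max(r, m)$, using the first $2N$ coordinates as a quantifier-free ``tag'' that distinguishes which kind of vertex a given $k$-tuple encodes, while the remaining $\max(r, m)$ coordinates store either the relevant assignment or a target tuple.

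Concretely, for each $i = 1, \dots, N$, the tag of subformula $\theta_i$ is the pattern $u_{2i-1} \neq u_{2i} \land \bigwedge_{j \neq i} u_{2j-1} = u_{2j}$, and the tag reserved for target vertices is $\bigwedge_{j=1}^{N} u_{2j-1} = u_{2j}$. These $N+1$ patterns are pairwise mutually exclusive and all realisable because $|A| \geq 2$, and each is expressible by a quantifier-free formula. The domain formula $\delta$ is the disjunction of these tags; the equivalence formula $\epsilon$ requires two tuples to share the same tag and to agree on those payload coordinates that are relevant to that tag (the remaining padding coordinates may differ arbitrarily); and the coordinate map sends a representative tuple to the vertex $(\theta_i, s)$ or $\ta$ it encodes, yielding a bijection of equivalence classes onto the universe of $\mcX{\fA}{\phi}$.

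Next I would define each relational component of the game by a case distinction on the tag(s) of its argument(s). Membership in $V_1$, $V_0$, $T$ and $\init$ depends only on the subformula tag, with the single subtlety that for vertices $(\gamma, s)$ with $\gamma$ a $\tau$-literal satisfied by $s$ one must evaluate the fixed literal $\gamma$ on the payload coordinates, which is a quantifier-free $\fo(\tau)$-condition. The edge relation $E$ comes in three fixed shapes (Boolean step, $X\tx$-step into a target vertex, and quantifier step); each is defined by matching a pair of tags and imposing equalities between specific payload coordinates of the two $k$-tuples, all quantifier-free. The exclusion set $\Eexcl$ is handled analogously by matching the tag patterns for $\pm R_i$ or $\pm X\tx$ and comparing the relevant payload coordinates via equalities.

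The main obstacle is the bookkeeping required to verify that the resulting interpretation truly yields $\mcX{\fA}{\phi}$ up to isomorphism, not merely a structure of the same abstract shape; this is where the hypothesis $|A| \geq 2$ is used, since otherwise none of the $N$ distinguishing tag patterns would be realisable. Once the encoding is fixed, no existential or universal quantifier is needed because every component of the game (subformula structure, literals, edges and exclusion pairs) is local to a constant number of coordinates and can be read off directly, so $\cI$ is quantifier-free by construction.
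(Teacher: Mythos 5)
Your proposal is correct and follows essentially the same route as the paper: the paper encodes each vertex as a tuple whose first block carries a quantifier-free \emph{equality type} $e_\theta$ identifying the subformula (or the target set $T$) and whose remaining coordinates carry the assignment, with all game relations defined by matching these types and imposing coordinate equalities — your explicit tag patterns $u_{2i-1}\neq u_{2i}\land\bigwedge_{j\neq i}u_{2j-1}=u_{2j}$ are just a concrete choice of such equality types, realisable exactly because $|A|\geq 2$. Your explicit remark that membership of $\tau$-literal vertices in $V_1$ must be tested by evaluating the literal on the payload coordinates is a point the paper's displayed $\psi_{V_1}$ glosses over, so your treatment is, if anything, slightly more careful.
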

\begin{proof}
    The construction we use in this proof is similar to the one from \cite[Proposition 18]{Graedel16}.
    An equality type $e(\tv)$ over a tuple $\tv = (v_1, \dotsc, v_n)$ is a maximal consistent set of (in)equalities using only variables from $\tv$.
    Since equality types over finitely many variables are finite, we can, by slight abuse of notation, identify $e(\tv)$ with the formula $\bigwedge e(\tv)$.
    Let $n$ be chosen sufficiently large so that we can fix for every $\theta \in \subf(\phi') \cup \{T \}$ a unique equality type $e_\theta(\tv)$.

    Let $\tx = (x_1, \dotsc, x_m)$ be a tuple of variables such that for every subformula $\theta \in \subf(\phi')$ holds $\free(\theta) \subseteq \{ \tx \}$.
    For each variable $x_i \in \{\tx\}$ let $\iota(x_i) \ceq i$.
    A position $(\theta, s)$ of the game $\cG_X(\fA, \phi)$ will be encoded by an $(n+m)$-tuple of the form $(\tu, \ta)$ where $\tu$ has equality type $e_\theta$
    and $s(x_i) = a_i$ for every $x_i \in \free(\theta)$, while a position of the form $\ta \in T (= A^r)$ will be encoded by $(\tu, \ta \tb)$
    such that $\tu$ has equality type $e_T$ whereas $\tb \in A^{m-r}$ can be an arbitrary tuple.
    Now we are in the position to define the interpretation $\cI = (\delta, \epsilon, \psi_{V_0}, \psi_{V_1}, \psi_{E}, \psi_{I}, \psi_{T}, \psi_{\Eexcl})$:

    \begin{itemize}
        \item $\begin{aligned}[t] \delta(\tv, \ty)   \ceq \bigvee_{\theta \in \subf(\phi') \cup \{T\}} e_\theta(\tv) \end{aligned}$
        \item $\begin{aligned}[t]
                    \epsilon(\tv, \ty, \tw, \tz) \ceq \!\!\bigvee_{ \theta \in \subf(\phi')} (e_\theta(\tv) \land e_\theta(\tw) \land \bigwedge_{x_i \in \free(\theta)} y_i = z_i) \lor (e_T(\tv) \land e_T(\tw) \land  \bigwedge^r_{i = 1} y_i = z_i)
               \end{aligned}$
        \item $\psi_{V_1}(\tv, \ty) \ceq \bigvee_{(\theta, s) \in V_1(\cG_X(\fA, \phi))} e_\theta(\tv)  \lor e_T(\tv)$ and $\psi_{V_0}(\tv, \ty) \ceq \delta(\tv,\ty) \wedge \neg \psi_{V_1}(\tv, \ty)$.
        \item Let $R \ceq \set{(\theta, \theta')}{((\theta,s),(\theta',s')) \in E(\cG_X(\fA, \phi))}$. Then we define
        \begin{align*}
            \psi_{E}(\tv, \ty, \tw, \tz) \ceq &
            \bigvee_{(\theta, \theta') \in R} (e_\theta(\tv) \land e_{\theta'}(\tw) \land
            \bigwedge_{x_i \in \free(\theta) \cap \free(\theta')} y_i = z_i )  \\
            & \lor  \bigvee_{X\tu \in \subf(\phi')} (e_{X\tu}(\tv) \land e_T(\tw) \land \bigwedge^r_{i = 1} y_{\iota(u_i)} = z_i).
        \end{align*}
        \item Let $S \ceq \set{(R_i \tu, \neg R_i \tu')}{((R_i \tu,s),(\neg R_i \tu',s')) \in \Eexcl(\cG_X(\fA, \phi))}$\footnote{Since the direction of exclusion edges does not matter we assume here that they are all of this form.}. Then we define
        \begin{align*}
            \psi_{\Eexcl}(\tv, \ty, \tw, \tz) \ceq &
            \bigvee_{(R_i \tu, \neg R_i \tu') \in S} (e_{R_i \tu}(\tv) \land e_{\neg R_i \tu'}(\tw) \land
            \bigwedge^{\arity{R_i}}_{i = 1} y_{\iota(u_i)} = z_{\iota(u'_i)} )  \\
            & \lor  \bigvee_{\neg X\tu \in \subf(\phi')} (e_{\neg X\tu}(\tv) \land e_T(\tw) \land \bigwedge^r_{i = 1} y_{\iota(u_i)} = z_i).
        \end{align*}
        \item $\psi_{\init}(\tv, \ty) \ceq e_{\phi'}(\tv)$
        \item $\psi_{T}(\tv, \ty) \ceq e_T(\tv)$.
    \end{itemize}
    Now, for every $\fA$ (with at least two elements) we have that $\cI(\fA) \cong \cG_X(\fA, \phi)$.
\end{proof}

Remember that the construction used in the proof of Theorem \ref{thm: eso-sentence => myopic} in a way ``imports'' the original formula $\phi$ into $\mu$.
Therefore, the number of atoms using $X$ or some quantified second-order symbol in $\mu(X)$ is not bounded.
Using inclusion-exclusion games, we can construct $\mu(X)$ in a different way that limits the usage of these symbols.

\begin{thm}
    \label{thm: union closed eso => myopic stronger}
    For every union closed formula $\phi(X)\in\eso$ there is an equivalent myopic formula $\mu(X) \in \eso$
    where only $11$ literals use the symbol $X$ or some quantified second-order symbol.
\end{thm}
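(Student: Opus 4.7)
My plan is to leverage the inclusion-exclusion games and their first-order definable winning condition. The key observation, combining Theorem~\ref{thm: eso <=> inex-game} with the union closure of $\phi$, is the following criterion:
\[
(\fA, X) \models \phi \iff \text{for every } \ta \in X \text{ some winning strategy in } \mcX{\fA}{\phi} \text{ has target } Y \text{ with } \ta \in Y \subseteq X.
\]
For the right-to-left direction, the witnesses $\cS_\ta$ have targets $Y_\ta \subseteq X$ containing $\ta$, so Theorem~\ref{thm: eso <=> inex-game} gives $(\fA, Y_\ta) \models \phi$ for each $\ta$; since $X = \bigcup_{\ta \in X} Y_\ta$, union closure yields $(\fA, X) \models \phi$. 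The converse uses a single strategy with target $X$ as witness for every $\ta$.

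To convert this criterion into a myopic formula, I use the quantifier-free interpretation $\cI$ from Lemma~\ref{lem: fo-interpretation for G_X} of arity $n+m$, which defines $\mcX{\fA}{\phi}$ inside $\fA$; target vertices are encoded as tuples $(\tu, \ta, \tb)$ with $e_T(\tu)$ and payload $\ta$ in the first $r$ payload slots. Writing $\phiwin^{\cI}(W^\star)$ for the $\cI$-translation of the first-order winning-condition formula $\phiwin(W)$, I set
\begin{align*}
\mu(X) \ceq \A \tx \Big( X \tx \ra \E W^\star \big[ & \A \tv \big(W^\star \tv \ra \delta(\tv)\big) \land \phiwin^{\cI}(W^\star) \, \land {} \\
                                                    & \A \tu \tv \tb \big(e_T(\tu) \land W^\star \tu \tv \tb \ra X\tv\big) \, \land {} \\
                                                    & \E \tu \tb \big(e_T(\tu) \land W^\star \tu \tx \tb\big) \big] \Big).
\end{align*}
The matrix after the implication is first-order, $W^\star$ is the sole existentially quantified second-order symbol, and the only occurrence of $X$ inside it lies in the consequent of the target-subset conjunct, so $X$ appears positively in negation normal form. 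Hence $\mu$ is myopic in the sense of Definition~\ref{def: myopic formula}, and the equivalence $\mu \equiv \phi$ follows by combining Lemma~\ref{lem: interpretation lemma for sigma^1_1} with the criterion above.

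For the literal count, the outer $X \tx$ contributes $1$, the well-formedness conjunct $1$ $W^\star$-literal, the target-subset conjunct $1$ $W^\star$- and $1$ $X$-literal, and the target-hit conjunct $1$ $W^\star$-literal. The formula $\phiwin(W)$ contains exactly $6$ $W$-literals (three in the disjunction governing moves from strategy vertices of player~$0$ and player~$1$, one in the clause forcing initial positions into the strategy, and two in the clause forbidding conflicting pairs), and since the $\cI$-translation replaces each $W$-atom by an $\cI$-expanded expression containing exactly one $W^\star$-atom, $\phiwin^{\cI}$ contributes $6$ further $W^\star$-literals, yielding a total of $1 + 1 + 2 + 1 + 6 = 11$. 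The delicate point of the proof is the encoding: the two target conditions must be phrased so that each uses only a single $W^\star$-atom (achieved by existentially/universally quantifying away the equality-type tuple $\tu$ and the padding $\tb$) and $X$ must land in a positive position, which is forced by writing $e_T(\tu) \land W^\star \tu \tv \tb \ra X\tv$ rather than an explicit inclusion statement.
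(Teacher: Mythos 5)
Your proof is correct and takes essentially the same route as the paper: the union-closure criterion is exactly Claim~\ref{claim: phitarget is correct} transferred to $\fA$ via Theorem~\ref{thm: eso <=> inex-game}, the formula is the $\cI$-translation of $\phiwin$ augmented by the target-subset and target-hit conjuncts, and your tally of $2$ $X$-literals plus $9$ $W^\star$-literals matches the paper's count of $11$ exactly. The only difference is presentational: you write $\mu$ directly over $\fA$ rather than first defining $\phitarget$ over the game and pulling it back through $\cI$, which lets you bypass the paper's intermediate relation $X^\star = h^{-1}(X)$ and its replacement by $e_T(\tu') \land X\tv'$.
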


\begin{proof}
    Let $\phi(X) = \exists \tR \phi'(X,\tR) \in \eso(\tau)$ be closed under unions, $\fA$ be a $\tau$-structure
    and $\cG \ceq \mcX{\fA}{\phi}$ be the corresponding game.
    W.l.o.g.~$\fA$ has at least two elements.
    By Theorem \ref{thm: eso <=> inex-game}, we have that $\target(\cG) = \set{Y \subseteq A^r}{\fA \models \phi(Y)}$
    where $r \ceq \arity{X}$.
    Since $\phi(X)$ is union closed, it follows that $\target(\cG)$ is closed under unions as well.
    Now we observe that $\target(\cG)$ can be defined in the game $\cG$ by the following myopic formula:
    \begin{align*}
        \phitarget(X)   \ceq  & \A x (Xx \ra \psi_\target(X, x) ) \text{ where} \\
        \psi_\target(X,x) \ceq & \exists W (\phiwin(W) \land Wx \land \forall y (Wy \land Ty \rightarrow Xy) )
    \end{align*}
    Here $\phiwin$ is the first-order formula verifying winning strategies.
    Please note that $\phitarget$ is indeed a myopic formula, since $X$ occurs only positively in $\psi_\target$.
    Furthermore, there are $6$ $W$-atoms in $\phiwin$ and two additional $W$-atoms in $\psitarget$, while $X$
    occurs twice in $\phitarget$.
    In total, $X$ and $W$ are used exactly $10$ times.
    These $10$ atoms will also occur in the final formula $\mu$ that we are going to construct.
    This construction will use a $\fo$-interpretation, which will introduce an additional $W^\star$-atom in order to simulate the quantifier $\E W$ by a new quantifier $\E W^\star$.
    This is why, we will end up with exactly $11$ $\{ X, W^\star\}$-literals in $\mu$.

    \begin{clm}
        \label{claim: phitarget is correct}
        For every $X \subseteq A^r$, $(\cG, X) \models \phitarget(X) \iff X \in \target(\cG)$.
    \end{clm}

    \begin{proof}
        Assume that $(\cG, X) \models \phitarget(X)$.
        By construction of $\phitarget$, for every $\ta \in X$ there exists a winning strategy $\cS_\ta = (W_\ta, F_\ta)$ with
        $\ta \in W_\ta$ and $\target(\cS_{\ta}) = W_\ta \cap T \subseteq X$.
        It follows that $X = \bigcup_{\ta \in X} \target(\cS_{\ta})$.
        Since $\target(\cG)$ is closed under unions, we also obtain that $X \in \target(\cG)$.

        We want to remark that at this point the semantical property is translated into a syntactical one, as the formula only describes the correct winning strategy because the initial formula was closed under unions.

        To conclude the proof of Claim \ref{claim: phitarget is correct}, assume that $X \in \target(\cG)$.
        Then there exists a winning strategy $\cS = (W,F)$ for player $0$ with $\target(\cS) = X$.
        Thus, for the quantifier $\E W$ we can (for all $\ta\in X$) choose $\cS$, which, obviously, satisfies the formula.
    \end{proof}

    \noindent
    Recall the first-order interpretation $\cI$ (of arity $n+m$) from Lemma \ref{lem: fo-interpretation for G_X} with $\cI(\fA) \cong \cG$ for some coordinate map $h \colon \delta^\fA \ra V(\cG)$ and for every $\ta \in T(\cG)$,
    $h^{-1}(\ta) = \set{(\tu, \ta, \tb) \in A^{n+m}}{\fA \models e_T(\tu), \tb \in A^{m-r}}$
    where $e_T(x_1,\dots,x_n)$ is some quantifier-free first-order formula.
    By the interpretation lemma for $\eso$ (Lemma \ref{lem: interpretation lemma for sigma^1_1}), for every $X \subseteq T(\cG)$,
    \begin{equation} \label{eq: interpretation lemma result}
    (\fA, X^\star) \models \phitarget^\cI(X^\star) \Longleftrightarrow (\cG, X) \models \phitarget(X)
    \end{equation}
    where $X^\star \ceq h^{-1}(X)$ is a relation of arity $(n+m)$.
    Recall that every variable $x$ occurring in $\phitarget$ is replaced by a tuple $\tx$ of length $(n+m)$.
    Let $\tx = (\tu, \tv, \tw)$ where $|\tu| = n$, $|\tv| = r$ and $|\tw| = m-r$ and let
    \[\mu(X) \ceq \forall \tv (X\tv \ra \A \tu \A \tw (e_T(\tu) \ra \psi^\star(X, \tu, \tv, \tw)))\]
    where $\psi^\star$ is the formula that results from $\psi_\target^\cI$ by replacing every occurrence of $X^\star \tu' \tv' \tw'$ (where
    $|\tu'| = n$, $|\tv'| = r$ and $|\tw'| = m -r $) by the formula $e_T(\tu') \land X\tv'$.
    By construction, this is a myopic formula, because $X$ occurred only positively in $\psi^\cI$ and, hence, $X^\star$ (resp.~$X$)
    occurs only positively in $\psi^\cI_\target$ (resp.~$\psi^\star$).

    Recall that, in the game $\cG \cong \cI(\fA)$, every $X\subseteq T(\cG)$ is a unary relation over $\cG$, while the elements of $T(\cG)$ themselves are $r$-tuples of $A$.
    Furthermore, we have that $h^{-1}(X) \ceq \set{(\ta, \tb, \tc) \in A^n\times A^r \times A^{m-r}}{\fA \models e_T(\ta) \text{ and } \tb \in X}$.
    Because of this and $X^\star = h^{-1}(X)$, it follows that for every $s \colon \{ \tu', \tv', \tw' \} \to A$ holds
    \begin{equation} \label{eq: X vs Xstar}
    (\fA, X^\star) \models_s X^\star \tu' \tv' \tw' \iff \fA \models e_T(s(\tu')) \text{ and } s(\tv') \in X  \iff (\fA,X) \models_s e_T(\tu')\land X\tv'.
    \end{equation}
    By construction of $\psi^\star$, these are the only subformulae in which $\psi^\cI_\target$ and $\psi^\star$ differ from each other.
    As a result, the following claim is true:

    \begin{clm} \label{claim: construction psistar is correct}
        For every $X \subseteq A^r$ and every assignment $s \colon \free( \psi^\cI_\target) \ra A$, holds
        \[(\fA, X^\star) \models_s \psi^\cI_\target(X^\star, \tx) \iff (\fA, X) \models_s \psi^\star(X, \tx).\]
    \end{clm}
    Recall that $\tx = (\tu, \tv, \tw)$ where $|\tu| = n, |\tv| = r$ and $|\tw| = (m-r)$.
    Now we can see that
    \begin{align*}
        & (\fA, X^\star) \models \phitarget^\cI = \forall \tx (X^\star \tx \ra \psi^\cI_\target(X^\star, \tx))   \\
        \iff & (\fA, X^\star) \models_s \psi^\cI_\target(X^\star, \tx) \text{ for every } s \text{ with } s(\tx) \in X^\star  \\
        \iff & (\fA, X)^{\phantom{\star}} \models_s \psi^\star(X, \tx) \text{ for every } s \text{ with } s(\tx) \in X^\star  \;\; (\text{Claim \ref{claim: construction psistar is correct}})\\
        \iff & (\fA, X)^{\phantom{\star}}       \models_s \psi^\star(X, \tx) \text{ for every } s \text{ with } (\fA,X) \models_s e_T(\tu) \land X\tv \;\; (\text{due to (\ref{eq: X vs Xstar})})\\
        \iff  & (\fA, X)^{\phantom{\star}} \models \A \tu \A \tv \A \tw ((e_T(\tu) \land X\tv) \ra \psi^\star(X, \tu, \tv, \tw))) \equiv \mu.
    \end{align*}

    \noindent
    As a result, we have that $(\fA, X) \models \mu(X) \iff (\fA, X^\star) \models \phi^\cI_\target$. Putting everything together yields:
    \[                                                                                (\fA, X) \models \mu
    \Longleftrightarrow                                                               (\fA, X^\star) \models \phitarget^\cI
    \stackrel{\eqref{eq: interpretation lemma result}}{\Longleftrightarrow}           (\cG, X) \models \phitarget
    \stackrel{(\text{Claim \ref{claim: phitarget is correct}})}{\Longleftrightarrow} X \in \target(\cG)
    \stackrel{(\text{Theorem \ref{thm: eso <=> inex-game}})}{\Longleftrightarrow}     (\fA, X) \models \phi\]
    Thus, the constructed myopic formula $\mu(X)$ is indeed equivalent to $\phi(X)$.
\end{proof}

This construction can be applied to non union closed formulae as well, in which
case the statement becomes $(\fA, X)\models\mu$ if, and only if, $X = \bigcup_{i \in I}X_i$
such that $(\fA, X_i)\models\phi$ for all $i \in I$.
To see this replace Claim \ref{claim: phitarget is correct} by ``For every $X \subseteq A^r$,
$(\cG, X) \models \phitarget(X) \iff X = \bigcup_{i\in I}X_i$, where $X_i \in \target(\cG)$
for all $i\in I$''.
Or, in other words, $\mu$ is the ``smallest'' (w.r.t.~the set of models) union closed
formula which is implied by $\phi$: $\phi\models\mu$ and, more importantly, $\mu$
implies all union closed formulae $\nu$ which are implied by $\phi$, i.e.~if
$\phi\models\nu$ and $\nu$ is union closed then $\mu\models\nu$ holds.

\section{Union Games}
\label{sec:union games}

In the previous section we have characterised the union closed fragment of $\eso$ by means of a syntactic normal form.
Now we aim at a game theoretic description, which leads to the following restriction of inclusion-exclusion games that reveals \emph{how} union closed properties are assembled.

\begin{defi}
    \label{def: ugame}
    A \emph{union game} is an inclusion-exclusion game $\cG=(V, V_0, V_1, E, \init, T, \Eexcl)$ obeying the following restrictions.
    For every $t\in T$ the subgraph reachable from $t$ via the edges $E\setminus\Eincl$, that are the edges of $E$ that do \emph{not} go back into $T$, is denoted by $\gcomp{t}$.\footnote{Recall that $\Eincl \ceq E \cap (V \times T)$.}
    These components must be disjoint, that is $V(\gcomp{t})\cap V(\gcomp{t'}) = \emptyset$ for all $t\neq t' \in T$.
    Furthermore, exclusion edges are only allowed between vertices of the same component, that is $\Eexcl \subseteq \bigcup_{t\in T} V(\gcomp{t})\times V(\gcomp{t})$.
    The set of initial positions is empty, i.e.~$\init = \emptyset$.
\end{defi}

\begin{figure}
    \begin{center}
        \begin{tikzpicture}
        \node (i1) at (0,2) {$t_1$};
        \node (i2) at (4.5,2) {$t_2$};
        \node (ik) at (10,2) {$t_k$};
        \node at (7.25,1) {$\dotsb$};
        \node (g1) at (0, -.5) {$\gcomp{t_1}$};
        \node (g2) at (4.5, -.5) {$\gcomp{t_2}$};
        \node (gk) at (10, -.5) {$\gcomp{t_k}$};

        \draw[rounded corners] (i1) -- (-1,0) -- (1,0) -- (i1);
        \draw[rounded corners] (i2) -- (3.5,0) -- (5.5,0) -- (i2);
        \draw[rounded corners] (ik) -- (9,0) -- (11,0) -- (ik);

        \draw (0,1) edge[->,out=0, in=180] (i2);
        \draw (4,0.5) edge[->,out=180, in=0] (i1);
        \draw (5,0) edge[->,out=180, in=340] (i1);
        \draw (10.25,1) edge[->,out=0, in=0] (ik);

        \node (a) at (-0.7,0.25) {$v$};
        \node (b) at (0.45,0.25) {$w$};
        \draw (a) edge[dashed,<->] (b);
        \node (x) at (4.5,1.25) {$x$};
        \node (y) at (5.125,.25) {$y$};
        \draw (x) edge[dashed,<->] (y);
        \node (u) at (9.5,.25) {$u$};
        \node (z) at (10.5,0.5) {$z$};
        \draw (u) edge[dashed,<->] (z);
        \end{tikzpicture}
    \end{center}
    \caption{A drawing of a union game.
        The target positions $T = \{t_1,\dotsc,t_k\}$ are at the top of the components $\gcomp{t}$, which are depicted by triangles.
        Recall that the inclusion edges, that are the edges going into target vertices, do not account for the reachability of the components $\gcomp{t}$.
        The exclusion edges $\Eexcl$ are drawn as dashed arrows and, as seen here, are allowed only inside a component.
    }
    \label{fig: ugame}
\end{figure}
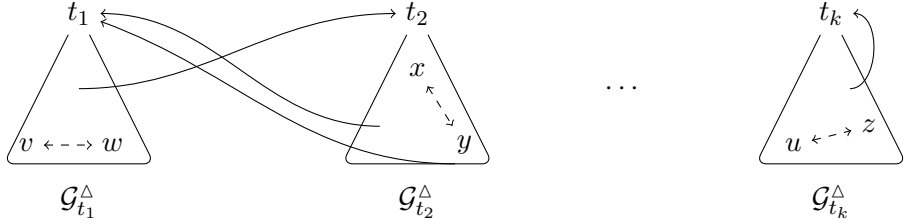

\noindent
See Figure \ref{fig: ugame} for a graphical representation of a union game.
Since the exclusion edges are only inside a component we can in a way combine different strategies into one, which is the reason the target set of a union game is closed under unions.

\begin{thm}
    \label{thm: ugames are union closed}
    Let $\cG$ be a union game and $(\cS_i)_{i\in J}$ be a family of winning strategies for player 0.
    Then there is a winning strategy $\cS$ for player 0 such that $\target(\cS) = \bigcup_{i\in J}\target(\cS_i)$.
    In other words, the set $\target(\cG)$ is closed under unions.
\end{thm}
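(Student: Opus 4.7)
The plan is to construct the desired strategy $\cS$ by assigning to each target vertex in the union a single \emph{witness} $\cS_{i(t)}$ and reusing only its play inside the component $\gcomp{t}$; the disjointness of components together with the confinement of $\Eexcl$ to individual components will then prevent any collision between different witnesses.

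Concretely, I would set $X \ceq \bigcup_{i \in J} \target(\cS_i)$ and, for each $t \in X$, fix some $i(t) \in J$ with $t \in \target(\cS_{i(t)})$, writing $\cS_{i(t)} = (W_{i(t)}, F_{i(t)})$. Then define
\[
  W \ceq \bigcup_{t \in X} \bigl( W_{i(t)} \cap V(\gcomp{t}) \bigr)
\]
and let $\cS \ceq (W, (W \times W) \cap E)$; by the remark following Definition \ref{def: winning strategy} that a winning strategy is determined by its nodes, it suffices to check soundness of $W$. From $T \cap V(\gcomp{t}) = \{t\}$, which is immediate from the disjointness of components, one reads off $W \cap T = X$, so $\target(\cS) = \bigcup_{i\in J} \target(\cS_i)$ as required.

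The verification of the four consistency conditions then proceeds as follows. Condition (\ref{def: winning strategy - initial position}) is vacuous since $\init = \emptyset$ in a union game. For (\ref{def: winning strategy - player 0}) and (\ref{def: winning strategy - player 1}), pick $v \in W$, so $v \in W_{i(t)} \cap V(\gcomp{t})$ for some $t$. Any non-inclusion successor $w$ of $v$ is reachable from $t$ along non-inclusion edges, hence $w \in V(\gcomp{t})$; an inclusion successor $w \in T$ that lies in $W_{i(t)}$ is contained in $\target(\cS_{i(t)}) \subseteq X$. In both cases the condition enforced by $\cS_{i(t)}$ being winning transfers from $W_{i(t)}$ to $W$.

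The step I expect to be the heart of the argument is the exclusion condition (\ref{def: winning strategy - exclusion edges}), and this is precisely where the two structural axioms of a union game combine. Suppose $(v, w) \in \Eexcl$ with $v, w \in W$. The confinement of $\Eexcl$ to single components forces $v, w \in V(\gcomp{t})$ for some $t$, and the disjointness of components makes this $t$ unique, so $v$ and $w$ can only have been contributed to $W$ by the single witness $\cS_{i(t)}$; but then $v, w \in W_{i(t)}$, contradicting that $\cS_{i(t)}$ itself respects $\Eexcl$. This establishes that $\cS$ is a winning strategy with the required target, proving the theorem.
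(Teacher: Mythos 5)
Your proposal is correct and follows essentially the same route as the paper's proof: fix one witness strategy per target vertex of the union, restrict it to the component $\gcomp{t}$ (keeping the inclusion edges leaving it), and glue these restrictions together, with disjointness of components and the confinement of $\Eexcl$ to single components guaranteeing compatibility. Your write-up merely makes explicit the verification of the four consistency conditions that the paper leaves to the reader, and replaces the paper's explicit addition of inclusion edges by an appeal to the remark that a winning strategy is determined by its node set.
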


\begin{proof}
    Let $\cS_i = (W_i, F_i)$ for $i\in J$.
    Let $U \ceq \bigcup_{i\in J}\target(\cS_i)$ and $f\colon U\to J$ be a function such that $t\in\target(\cS_{f(t)})$ for all $t\in U$.
    Define $\cS \ceq \bigcup_{t\in U}(\cS_{f(t)}\res{V(\gcomp{t})} + (E(\cS_{f(t)})\cap (V(\gcomp{t}) \times T)))$.
    In words, $\cS$ is defined on every component $\gcomp{t}$  with $t \in U$ as an arbitrary strategy $\cS_t$ that is defined on $\gcomp{t}$, including the inclusion edges leaving this component.
    By definition $\target(\cS) = U$ and, furthermore, $\cS$ is indeed a winning strategy since it behaves on every component $\gcomp{t}$ like $\cS_{f(t)}$ and there are no exclusion edges between different components.
\end{proof}

\begin{defi}
    \label{def: ugame for myopic Phi}
    Let $\mu(X)=\A \tx(X\tx\rightarrow\E \tR\phi(X, \tR, \tx))$ be a myopic $\tau$-formula where $\phi$ is in negation-normal form and $\fA$ be a $\tau$-structure.
    The union game $\mc{\fA}{\mu} \ceq (V, V_0, V_1, E, \init = \emptyset, T = A^{\arity{X}}, \Eexcl)$ is defined similarly to Definition \ref{def: inex game for Phi} with the difference being that for each $\ta\in A^{\arity{\tx}}$ we have to play on a copy of the game, so positions are now of the form $(\theta, s, \ta)$ instead of $(\theta, s)$, where $\theta\in\subf(\phi)$.
    The target vertices are the roots of these components, which is reflected by edges from $\ta$ to $(\phi, \tx\mapsto\ta, \ta)$.
    Because of this construction exclusion edges can only occur inside a component.
\end{defi}

\noindent
Notice that there are still edges from $(X\tx, s, \ta)$ to $s(\tx)$ --- the inclusion edges.
It is also worth mentioning that the empty set is always included in $\target(\mc{\fA}{\mu})$ for all myopic $\mu$ because $(\emptyset, \emptyset)$ is a
(trivial) winning strategy for player $0$.
This mimics the behaviour that in case $X=\emptyset$, the formula $\A\tx(X\tx\rightarrow\psi)$ is satisfied regardless of everything else.
The analogue of Theorem \ref{thm: eso <=> inex-game} holds for union games and myopic formulae.

\begin{prop}
    \label{prop: myopic (A, X) |= Phi <=> X in I(G^A_Phi)}
    Let $\fA$, $\mu$ and $\mc{\fA}{\mu}$ be as in Definition \ref{def: ugame for myopic Phi}.
    Then $(\fA, X)\models\mu \iff X\in\target(\mc{\fA}{\mu})$.
\end{prop}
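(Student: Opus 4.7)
The plan is to mirror the proof of Theorem~\ref{thm: eso <=> inex-game}, exploiting the fact that the union game $\mc{\fA}{\mu}$ decomposes into the components $\gcomp{\ta}$, one for each potential witness $\ta \in A^{\arity{\tx}}$. Stripped of its outgoing inclusion edges, $\gcomp{\ta}$ is essentially the first-order model-checking game of $\exists \tR\, \phi(X, \tR, \tx)$ on an expansion $(\fA, X, \tP_\ta)$ started at $(\phi, \tx \mapsto \ta)$, with $\tP_\ta$ witnessing the second-order quantifier and the $X$-atoms read off from the targets activated via the inclusion edges. The myopic shape of $\mu$ is crucial here: since $X$ occurs only positively in $\phi$, no $\neg X$ atoms arise that would require exclusion edges between distinct components.

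For the direction $(\fA, X) \models \mu \Longrightarrow X \in \target(\mc{\fA}{\mu})$, I would, for every $\ta \in X$, pick an interpretation $\tP_\ta$ of $\tR$ witnessing $(\fA, X, \tP_\ta) \models_{\tx \mapsto \ta} \phi$ together with a corresponding winning strategy of player~$0$ in the first-order model-checking game started at $(\phi, \tx \mapsto \ta)$. This strategy is lifted to a strategy $\cS_\ta$ of the union game by tagging every position with $\ta$, prefixing it by the edge $\ta \to (\phi, \tx \mapsto \ta, \ta)$, and, for every visited $X$-atom $(X\ty, s, \ta)$, adding the inclusion edge to $s(\ty)$ and the target $s(\ty)$ itself; this is sound because the first-order strategy being winning forces $s(\ty) \in X$ at every such atom. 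Each $\cS_\ta$ is then active only on $V(\gcomp{\ta})$ together with targets in $X$, so $\ta \in \target(\cS_\ta) \subseteq X$. Invoking Theorem~\ref{thm: ugames are union closed} glues the family $(\cS_\ta)_{\ta \in X}$ into a single winning strategy whose target is exactly $X$.

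The converse direction follows the usual recipe. Given a winning strategy $\cS = (W, F)$ with $W \cap T = X$ and any $\ta \in X$, first observe that $\ta \in W$ and, since $\ta \in V_1$ has as unique successor $(\phi, \tx \mapsto \ta, \ta)$, this root of $\gcomp{\ta}$ must also lie in $W$. Setting $P_{\ta, i} \ceq \{ s(\tu) : (R_i \tu, s, \ta) \in W \}$, one checks that the restriction of $\cS$ to $V(\gcomp{\ta})$ is, under the obvious untagging, a winning strategy of player~$0$ in the first-order model-checking game of $\phi(X, \tR, \tx)$ over $(\fA, X, \tP_\ta)$ at $(\phi, \tx \mapsto \ta)$. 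The only non-routine cases are atoms $X\ty$, where the inclusion edge forces $s(\ty) \in W \cap T = X$, and negative literals $\neg R_i \tu$, where the exclusion condition rules out any conflicting $(R_i \tu', s', \ta) \in W$ and hence yields $s(\tu) \notin P_{\ta, i}$. Here it is decisive that union games confine $\Eexcl$ to a single component, so the restriction really does respect the exclusion condition. Ranging over all $\ta \in X$ then gives $(\fA, X) \models \mu$.

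I expect the main obstacle to be the bookkeeping around inclusion edges in the $\Longrightarrow$ direction: a play inside $\gcomp{\ta}$ may land on a target $\ta' \neq \ta$, thereby also forcing $\gcomp{\ta'}$ to be activated. This is precisely why the construction first produces one strategy per $\ta \in X$ and only afterwards glues them, and the union-game axioms — pairwise disjoint components and no cross-component exclusion edges — are exactly what makes this gluing safe. They are tailor-made to mirror, on the game side, the positivity of $X$ in the myopic normal form.
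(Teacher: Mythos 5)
Your proposal is correct and follows essentially the same route as the paper's proof: both directions decompose the union game into the components $\gcomp{\ta}$, identify each (up to the root node and the $X$-atoms) with the first-order model-checking game over $(\fA, X, \tR_\ta)$, and handle the $X$-atoms via the inclusion edges and the $\neg R_i$-literals via the within-component exclusion condition. The only cosmetic difference is that you glue the per-component strategies by invoking Theorem~\ref{thm: ugames are union closed}, while the paper takes the union of the first-order strategies directly and adds the reachable inclusion edges; your explicit bookkeeping of the prefix edges $\ta \to (\phi, \tx \mapsto \ta, \ta)$ and of the activated targets is, if anything, slightly more careful than the paper's.
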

\begin{proof}
    Assume $(\fA, X)\models\mu = \A \tx (X\tx \ra \E \tR\phi(X, \tR, \tx))$.
    Thus, for every $\ta\in X$ there exist relations $\tR_\ta$ such that $\fA\models\phi(X,\tR_\ta,\ta)$.
    Notice that every component $\gcomp{\ta}$ restricted to $V,V_0,V_1,E$ is essentially isomorphic to the first-order model-checking game $\cG((\fA,X,\tR_\ta), \phi)$
    (besides the additional node $\ta$, the only difference is that in the first-order model-checking game the vertices of the form $(X\ty, s, \ta)$ are terminal nodes where player 0 loses if and only if $s(\ty) \notin X$).
    Let $\cS^{\fo}_\ta$ be a winning strategy for player 0 in this game.
    Since either $\tb\in R_\ta$ or $\tb\notin R_\ta$ for all $\tb$ and $R$, one vertex $(R\tx,s,\ta)$ or $(\neg R\ty,s',\ta)$ with $s(\tx) = s'(\ty)$ is not visited by $\cS^{\fo}_\ta$.
    Let $\cS' \ceq \bigcup_{\ta\in X}\cS^{\fo}_\ta$ and $\cS \ceq \cS' + \Eincl\cap(V(\cS')\times V(\cS'))$.
    In words, the strategy $\cS$ combines all first-order strategies together and adds the reached inclusion edges.
    By definition $\target(\cS) = X$.
    Whenever a node of the form $(X\ty, s, \ta)$ is visited in $\cS$ we have that $s(\ty) \in X$ (because otherwise $\cS^{\fo}_\ta$ would not
    be a winning strategy for player 0) and hence $((X\ty, s, \ta),s(\ty)) \in \Eincl\cap(V(\cS')\times V(\cS'))$ is a move
    that is available to player $0$.
    That $\cS'$ satisfies the conditions for a winning strategy on the other nodes is inherited from the fact that the individual strategies are winning strategies on the first-order part.
    As pointed out before, each strategy does not visit an exclusion edge.

    For the contrary, let $\cS$ be a winning strategy with $\target(\cS)=X$.
    For every $\ta\in X$ let $\tb\in R_\ta$ if and only if there is some $(R\tx, s, \ta) \in V(\cS)$ with $s(\tx) = \tb$.
    We have to show that $\fA\models\phi(X,\tR_\ta,\ta)$ for all $\ta\in X$.
    But there is nothing to do here because $\cS\res{\gcomp{\ta}}$ induces a winning strategy for the first-order model-checking game for $\left<(\fA,X,\tR_\ta), \tx\mapsto\ta, \phi\right>$.
\end{proof}

\section{Myopic Fragment of Inclusion-Exclusion Logic}
\label{sec:myopic inex}

We turn our attention towards logics with team semantics in this section.
Our findings will be similar in nature to what we observed for existential second-order logic.
In fact, like the normal form of union closed $\eso$-formulae (see Section \ref{sec:char U in eso}) we present syntactic restrictions of inclusion-exclusion logic $\inexlogic$ that correspond precisely to the union closed fragment $\union$\footnote{We have defined $\union$ to be the set of all union closed $\eso$-formulae, by slight abuse of notation we use the same symbol here to denote the set of all $\inexlogic$-formulae that are closed under unions.}.
Analogously to myopic $\eso$-formulae we will also present a normal form for all union closed $\inexlogic$-formulae.

\begin{defi}
    \label{def: myopic in/ex-formula}
    A formula $\phi(\tx) \in \inexlogic$ is $\tx$-myopic, if the following conditions are satisfied:
    \begin{enumerate}
        \item The variables from $\tx$ are never quantified in $\phi$.
        \label{def: myopic in/ex-formula - quantifiers}
        \item Every exclusion atom occurring in $\phi$ is of the form $\tx \ty \excl \tx \tz$.
        \label{def: myopic in/ex-formula - exclusion atoms}
        \item Every inclusion atom occurring in $\phi$ is of the form $\tx \ty \incl \tx \tz$ or $\ty \incl \tx$, where the latter is \emph{only} allowed if it is not in the scope of a disjunction.
        \label{def: myopic in/ex-formula - inclusion atoms}
    \end{enumerate}
    Please note that $\phi(\tx)$ must not have any additional free variables besides $\tx$.
    We call atoms of the form $\tx\ty\incl\tx\tz$ or $\tx\ty\excl\tx\tz$ ($\tx$-)\emph{guarded} and $\ty\incl\tz$, respectively $\ty\excl\tz$, the corresponding \emph{unguarded} versions.
    Analogously, we call a formula $\psi$ the unguarded version of $\phi$, if $\psi$ emerges from $\phi$ by replacing every dependency atom by the respective unguarded version.
\end{defi}

\noindent
The intuition behind this definition is that every $\tx$-myopic formula can be evaluated componentwise on every team $X\res{\tx=\ta} = \set{s \in X}{s(\tx) = \ta}$ for all $\ta\in X(\tx)$.
For a formula $\phi$ let $T_\phi$ denote its syntax tree\footnote{Since we consider a tree instead of a DAG, identical subformulae may occur at different nodes.}.
A (team-)labelling of $T_\phi$ is a function $\lambda$ mapping every node $v_\psi$ to a team $\lambda(v_\psi)$ whose domain includes $\free(\psi)$.
In the following we write $\lambda(\psi)$ instead of $\lambda(v_\psi)$ if it is clear from the context which \emph{occurrence} of the subformula $\psi$ of $\phi$ is meant.
We call $\lambda$ a \emph{witness} for $\fA\models_X\phi$, if $\lambda(\phi)=X$ and the semantical rules of Definition \ref{def: fo ts} are satisfied (e.g.~$\lambda(\psi\lor\theta) = \lambda(\psi) \cup \lambda(\theta)$) and for every literal $\beta$ of $\phi$ we have $\fA\models_{\lambda(\beta)}\beta$.
By induction, if $\lambda$ is a witness for $\fA\models_X\phi$, then for every $\psi\in\subf(\phi)$ we have $\fA\models_{\lambda(\psi)}\psi$ and, moreover, $\fA\models_X\phi$ if and only if there is a witness $\lambda$ for $\fA\models_X\phi$.

\begin{prop}
    \label{prop: res inex components}
    Let $X$ be team over $\fA$ with $\dom(X) \supseteq \{ \tx, \tv, \tw \}$ and $\phi(\tx)$ be $\tx$-myopic.
    \begin{enumerate}
        \item $\fA\models_X \tx\tv\incl\tx\tw \iff \fA\models_{X\res{\tx=\ta}}\tv\incl\tw$ for all $\ta\in X(\tx)$ \label{prop: res inex components - guarded inclusion atoms}
        \item $\fA\models_X \tx\tv\excl\tx\tw \iff \fA\models_{X\res{\tx=\ta}}\tv\excl\tw$ for all $\ta\in X(\tx)$ \label{prop: res inex components - guarded exclusion atoms}
        \item For every subformula $\tv\incl\tx$ of $\phi$ and witness $\lambda$ for $\fA\models_X\phi$ we have $(\lambda(\tv\incl\tx))(\tx) = X(\tx)$. \label{prop: res inex components - other inclusion atoms}
    \end{enumerate}
\end{prop}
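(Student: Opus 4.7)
My plan is to dispatch parts \ref{prop: res inex components - guarded inclusion atoms} and \ref{prop: res inex components - guarded exclusion atoms} by a direct semantic unfolding of the guarded atoms, and to prove part \ref{prop: res inex components - other inclusion atoms} by a structural induction along the unique path in the syntax tree $T_\phi$ from the root (labelled by $X$) down to the chosen occurrence of the atom $\tv \incl \tx$.

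For \ref{prop: res inex components - guarded inclusion atoms}, the key observation is that $(\ta, \tb) \in X(\tx\tv)$ holds iff some $s \in X$ satisfies $s(\tx) = \ta$ and $s(\tv) = \tb$, equivalently $\tb \in (X\res{\tx=\ta})(\tv)$, and an identical decomposition applies to $X(\tx\tw)$. Hence the guarded inclusion $X(\tx\tv) \subseteq X(\tx\tw)$ is equivalent to the family of inclusions $(X\res{\tx=\ta})(\tv) \subseteq (X\res{\tx=\ta})(\tw)$ ranging over $\ta \in X(\tx)$. Part \ref{prop: res inex components - guarded exclusion atoms} follows from precisely the same decomposition, now applied to the intersection $X(\tx\tv) \cap X(\tx\tw)$ in place of the inclusion.

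For part \ref{prop: res inex components - other inclusion atoms}, I would show by induction along the path from $\phi$ to the node carrying $\tv \incl \tx$ that the $\tx$-projection of the labelling team is preserved at every step. By condition \ref{def: myopic in/ex-formula - inclusion atoms} of Definition \ref{def: myopic in/ex-formula}, the atom $\tv \incl \tx$ does not lie under any disjunction, so every internal node on the path is either a conjunction or a quantifier. At a conjunction, the witness assigns the same team to both children as to the parent, so the $\tx$-projection is trivially unchanged. At a step of the form $\E y \theta$, condition \ref{def: myopic in/ex-formula - quantifiers} gives $y \notin \{\tx\}$, and for every admissible $F \colon Y \to \potne{A}$ the team $Y[y \mapsto F]$ has the same $\tx$-projection as $Y$, because every $F(s)$ is non-empty and the new $y$-column is disjoint from the $\tx$-columns. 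The universal step $\A y \theta$ is analogous using non-emptiness of $A$ (with the degenerate case $A = \emptyset$ making both sides equal to $\emptyset$). Chaining these equalities along the path yields $(\lambda(\tv \incl \tx))(\tx) = X(\tx)$.

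The only real subtlety is locating exactly where the two $\tx$-myopic restrictions must be invoked — condition \ref{def: myopic in/ex-formula - quantifiers} at each quantifier step to keep $\tx$-values untouched, and condition \ref{def: myopic in/ex-formula - inclusion atoms} at the outset to rule out disjunction nodes on the path, which are the only operator that could genuinely shrink $X(\tx)$. Beyond that, the entire proof reduces to a routine unfolding of the team-semantics clauses and of the definition of a witness.
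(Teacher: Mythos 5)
Your proposal is correct and follows essentially the same route as the paper: parts (\ref{prop: res inex components - guarded inclusion atoms}) and (\ref{prop: res inex components - guarded exclusion atoms}) by directly unfolding the semantics of the guarded atoms (your fibrewise decomposition of $X(\tx\tv)$ over $\ta \in X(\tx)$ is just a repackaging of the paper's assignment-level argument), and part (\ref{prop: res inex components - other inclusion atoms}) by observing that conjunctions and quantifier steps preserve the $\tx$-projection while disjunctions are excluded by Definition~\ref{def: myopic in/ex-formula}. In fact your write-up of part (\ref{prop: res inex components - other inclusion atoms}) is more detailed than the paper's one-line justification, correctly pinpointing where non-emptiness of $F(s)$ and of $A$ is needed.
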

\begin{proof}
    We prove the first item.
    Let $\fA\models_X \tx\tv\incl\tx\tw$.
    That means for every assignment $s\in X$ there is another one, $s'\in X$, with $s(\tx\tv) = s'(\tx\tw)$.
    Thus $s(\tx)=s'(\tx)$ and therefore $s\in X\res{\tx=\ta} \iff s'\in X\res{\tx=\ta}$ from which $\fA\models_{X\res{\tx=\ta}}\tv\incl\tw$ follows for all $\ta\in X(\tx)$.

    Now assume $\fA\models_{X\res{\tx=\ta}}\tv\incl\tw$ for all $\ta\in X(\tx)$ and let $s\in X$ be an arbitrary assignment.
    Since $\fA\models_{X\res{\tx=s(\tx)}}\tv\incl\tw$ there is an assignment $s'\in X\res{\tx=s(\tx)}$ with $s'(\tw)=s(\tv)$.
    This means $s(\tx\tv)=s'(\tx\tw)$, and because $s$ was arbitrary, $\fA\models_X\tx\tv\incl\tx\tw$ follows.

    We prove the second item.
    $\fA \nmodels_{X \res{\tx = \ta}} \tv \excl \tw$ holds
    for some $\ta \in X(\tx)$ if and only if there are some $s,s' \in X$ with $\ta = s(\tx) = s'(\tx)$
    and $s(\tv) = s'(\tw)$, i.e.~$s(\tx\tv) = s'(\tx \tw)$ and hence $\fA \nmodels_X \tx \tv \excl \tx \tw$.
    Conversely, if $s(\tx\tv) = s'(\tx \tw)$ for some $s,s' \in X$, then $\fA \nmodels_{X\res{\tx = s(\tx)}} \tv \excl \tw$.

    The third item follows from the simple fact that $\tx$ is never quantified and that those atoms are not in the scope of a disjunction, hence the values of $\tx$ are preserved.
\end{proof}

The $\tx$-guarded version of a formula $\phi(\ty) \in \inexlogic$ in which the variables $\tx$ do not occur is the formula $\phi^\star(\tx, \ty)$ which results from $\phi(\ty)$
by replacing every inclusion/exclusion atom by its $\tx$-guarded variant, i.e.~$\tu \incl \tv$ would become $\tx \tu \incl \tx \tv$
and $\tu \excl \tv$ would be turned into $\tx \tu \excl \tx \tv$.
By property (\ref{prop: res inex components - guarded inclusion atoms}) and (\ref{prop: res inex components - guarded exclusion atoms}) of Proposition \ref{prop: res inex components}, the connection between a formula and its $\tx$-guarded version is the following.

\begin{lem} \label{lem: x-components of a team}
    Let $\phi^\star(\tx,\ty)$ be the $\tx$-guarded version of $\phi(\ty) \in \inexlogic$.
    Then $\fA \models_X \phi^\star(\tx, \ty) \Longleftrightarrow \fA \models_{X\res{\tx = \ta}} \phi(\ty)$ for every $\ta \in X(\tx)$.
\end{lem}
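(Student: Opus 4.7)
The plan is to proceed by structural induction on $\phi(\ty)$, exploiting that $X = \bigcup_{\ta \in X(\tx)} X\res{\tx = \ta}$ is a disjoint partition of $X$ into its $\tx$-components. For a first-order literal the claim reduces at once to the flatness property of $\fo$. For an inclusion atom $\tu \incl \tv$, whose $\tx$-guarded version is $\tx\tu \incl \tx\tv$, item~(\ref{prop: res inex components - guarded inclusion atoms}) of Proposition~\ref{prop: res inex components} supplies the desired equivalence directly, and exclusion atoms are handled identically via item~(\ref{prop: res inex components - guarded exclusion atoms}). The third sort of inclusion atom $\ty \incl \tx$ from Definition~\ref{def: myopic in/ex-formula} cannot occur here because $\tx$ does not appear in $\phi(\ty)$.

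For the inductive step, conjunction is immediate since $\fA \models_X \phi_1^\star \land \phi_2^\star$ and $\fA \models_{X\res{\tx = \ta}} \phi_1 \land \phi_2$ both split into two independent requirements on $\phi_1$ and $\phi_2$ to which the induction hypothesis applies separately. For $\A z\,\phi$, I observe that $z$ is disjoint from the variables $\tx$ (since $\tx$ does not occur in $\phi$), so $X[z \mapsto A]\res{\tx = \ta} = (X\res{\tx = \ta})[z \mapsto A]$, and the induction hypothesis carries through unchanged. The existential quantifier $\E z\,\phi$ is treated in the same spirit: a choice function $F\colon X \to \potne{A}$ restricts componentwise to choice functions on each $X\res{\tx = \ta}$, and conversely individual choice functions $F_\ta$ on the components glue into one $F$ on $X$, since the $\tx$-components of $X$ are pairwise disjoint.

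The one step that requires genuine care is the disjunction $\phi_1 \lor \phi_2$, since both sides of the claim involve their own splits. For the forward direction, I take a split $X = Y \cup Z$ witnessing $\fA \models_X \phi_1^\star \lor \phi_2^\star$; then $X\res{\tx = \ta} = Y\res{\tx = \ta} \cup Z\res{\tx = \ta}$ for every $\ta$, and the induction hypothesis, combined with the empty team property (should $\ta$ happen to lie outside $Y(\tx)$ or $Z(\tx)$), yields $\fA \models_{Y\res{\tx = \ta}} \phi_1$ and $\fA \models_{Z\res{\tx = \ta}} \phi_2$. For the backward direction, I pick for each $\ta \in X(\tx)$ a splitting $X\res{\tx = \ta} = Y_\ta \cup Z_\ta$ with $\fA \models_{Y_\ta} \phi_1$ and $\fA \models_{Z_\ta} \phi_2$, and set $Y \ceq \bigcup_\ta Y_\ta$ and $Z \ceq \bigcup_\ta Z_\ta$. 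Because every assignment in $Y_\ta$ maps $\tx$ to $\ta$, the identities $Y\res{\tx = \ta} = Y_\ta$ and $Z\res{\tx = \ta} = Z_\ta$ hold on the nose, and the induction hypothesis then delivers $\fA \models_Y \phi_1^\star$ and $\fA \models_Z \phi_2^\star$, as required.
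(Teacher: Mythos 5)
Your proof is correct and follows essentially the route the paper intends: the paper states this lemma without a written-out proof, merely noting that it follows from items~(\ref{prop: res inex components - guarded inclusion atoms}) and~(\ref{prop: res inex components - guarded exclusion atoms}) of Proposition~\ref{prop: res inex components}, and your structural induction (with the disjunction case handled via the disjointness of the $\tx$-components and the empty team property) is exactly the routine argument being left implicit there.
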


\noindent
Like union games an $\tx$-myopic formula is evaluated componentwise, which leads to the union closure of this fragment.

\begin{thm}
    \label{thm: myopic formulae are union closed}
    Let $\phi(\tx) \in \inexlogic$ be $\tx$-myopic and $\fA\models_{X_i}\phi$ for all $i\in I$.
    Then $\fA\models_X\phi$ for $X=\bigcup_{i\in I}X_i$.
\end{thm}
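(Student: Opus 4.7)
The plan is to build a witness $\lambda$ for $\fA \models_X \phi$ out of the given witnesses $\lambda_i$ for each $\fA \models_{X_i} \phi$. The naive attempt $\lambda(\psi) := \bigcup_i \lambda_i(\psi)$ is doomed at guarded exclusion atoms, because two different witnesses can independently produce colliding $\ty$- and $\tz$-values inside the same $\tx$-component. To prevent this, I would pick for every $\ta \in X(\tx) = \bigcup_i X_i(\tx)$ a single index $f(\ta) \in I$ with $\ta \in X_{f(\ta)}(\tx)$, and set
\[
\lambda(\psi) \;:=\; \bigcup_{\ta \in X(\tx)} \lambda_{f(\ta)}(\psi) \res{\tx = \ta}
\]
for every $\psi \in \subf(\phi)$. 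Because condition~(\ref{def: myopic in/ex-formula - quantifiers}) of Definition~\ref{def: myopic in/ex-formula} forbids quantification over $\tx$, the $\tx$-projection is preserved everywhere in the syntax tree, so contributions from different $\ta$ stay pairwise disjoint.

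The first step is to check $\lambda(\phi) = X$: since $\phi(\tx)$ has no free variables beyond $\tx$, an assignment in any $X_i$ is determined by its $\tx$-value, so $\lambda_{f(\ta)}(\phi)\res{\tx = \ta}$ is the singleton $\{s_\ta\}$ (with $s_\ta(\tx) = \ta$), and the union over $\ta \in X(\tx)$ returns precisely $X$. Next, the semantic rules of Definition~\ref{def: fo ts} must propagate: $\land$ and $\lor$ follow from how $\res{\tx = \ta}$ and $\bigcup_\ta$ distribute over the respective set operations; $\A y$ uses $y \notin \{\tx\}$ to commute $\res{\tx=\ta}$ with $[y \mapsto A]$; and $\E y$ is handled by the joint choice function $F(s) := F_{f(s(\tx))}(s)$, which is well-defined because each $s$ sits in a unique $\ta$-component.

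The main obstacle is verifying the literals, above all the guarded exclusion atom $\beta = \tx\ty \excl \tx\tz$, which is not union-closed in isolation. Here the point of the $f$-selection pays off: any putative collision $(\ta, \tb) \in \lambda(\beta)(\tx\ty) \cap \lambda(\beta)(\tx\tz)$ would force both witnessing assignments to lie in $\lambda_{f(\ta)}(\beta)$, contradicting $\fA \models_{\lambda_{f(\ta)}(\beta)} \tx\ty \excl \tx\tz$. Guarded inclusion atoms pass via Proposition~\ref{prop: res inex components}(\ref{prop: res inex components - guarded inclusion atoms}); first-order literals survive by flatness; and for an unguarded inclusion atom $\tv \incl \tx$, Proposition~\ref{prop: res inex components}(\ref{prop: res inex components - other inclusion atoms}) gives $(\lambda_{f(\ta)}(\tv \incl \tx))(\tx) = X_{f(\ta)}(\tx)$, hence
\[
(\lambda(\tv \incl \tx))(\tv) \;\subseteq\; \bigcup_\ta X_{f(\ta)}(\tx) \;=\; X(\tx) \;=\; (\lambda(\tv \incl \tx))(\tx).
\]
Put together, these verifications show that $\lambda$ is a witness for $\fA \models_X \phi$.
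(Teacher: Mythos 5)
Your proposal is correct and follows essentially the same route as the paper's proof: the same component-wise witness $\lambda(\psi) = \bigcup_{\ta \in X(\tx)} \lambda_{i_\ta}(\psi)\res{\tx=\ta}$ built from a choice of one index per $\ta \in X(\tx)$, with Proposition~\ref{prop: res inex components} doing the work for the guarded atoms and for the $\tv \incl \tx$ case. You are somewhat more explicit than the paper about the composite-formula cases and about why the naive union of witnesses fails, but the argument is the same.
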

\begin{proof}
    Let $\lambda_i$ be a witness for $\fA \models_{X_i} \phi$ and every $i \in I$.
    For every $\ta\in X(\tx)$ choose $i_\ta \in I$ such that $\ta\in X_{i_\ta}(\tx)$.
    Define $\lambda(\psi) \ceq \bigcup_{\ta\in X(\tx)}\lambda_{i_\ta}(\psi)\res{\tx=\ta}$ for every $\psi \in \subf(\phi)$.
    We show that $\lambda$ is a witness for $\fA \models_X \phi$.
    It is not difficult to see that the requirements on witnesses for composite formulae are satisfied.
    We prove that the requirements for the literals are fulfilled as well.
    By the flatness property, first-order literals are satisfied by $\lambda$.

    We prove now that $\fA \models_{\lambda(\gamma)} \gamma$ for $\gamma = \tx\tv\incl\tx\tw$ or $\gamma = \tx\tv\excl \tx\tw$.
    Let $\gamma'$ be the corresponding \emph{unguarded} formula, that is the formula resulting from $\gamma$ by removing $\tx$, i.e.~we have
    $\gamma' = \tv \incl \tw$ or $\gamma' = \tv \excl \tw$.
    Due to Proposition \ref{prop: res inex components}, it suffices to prove that $\fA \models_{\lambda(\gamma)\res{\tx = \ta}} \gamma'$ is true
    for every $\ta \in (\lambda(\gamma))(\tx)$.
    Notice that $\lambda(\gamma)\res{\tx = \ta} = \lambda_{i_\ta}(\gamma) \res{\tx = \ta}$.
    Since $\lambda_{i_\ta}$ is a witness for $\fA \models_{X_{i_\ta}} \phi$, it must be the case that $\fA \models_{\lambda_{i_\ta}(\gamma)} \gamma$.
    By Proposition \ref{prop: res inex components}, it follows that
    $\fA \models_{\lambda_{i_\ta}(\gamma) \res{\tx = \tb}} \gamma'$ for every $\tb \in (\lambda_{i_\ta }(\gamma))(\tx)$.
    If $\ta \in (\lambda_{i_\ta}(\gamma))(\tx)$, then this implies that $\fA \models_{\lambda_{i_\ta}(\gamma) \res{\tx = \ta}} \gamma'$.
    Otherwise we have that $\lambda_{i_\ta}(\gamma) \res{\tx = \ta} = \emptyset$ and then $\fA \models_{\lambda_{i_\ta}(\gamma) \res{\tx = \ta}} \gamma'$ follows from the empty team property of $\inexlogic$.
    In both cases, $\fA \models_{\lambda_{i_\ta}(\gamma) \res{\tx = \ta}} \gamma'$ holds as desired, which concludes the proof of $\fA \models_{\lambda(\gamma)} \gamma$.

    We still need to prove that $\fA \models_{\lambda(\gamma)} \gamma$ for literals of the form $\gamma = \tv \incl \tx \in \subf(\phi)$.
    Towards this end, let $s \in \lambda(\tv \incl \tx)$ and $\tb \ceq s(\tv)$.
    By definition of $\lambda$, there is some $\ta \in X(\tx)$
    such that $s \in \lambda_{i_\ta}(\tv \incl \tx) \res{\tx = \ta}$.
    Since $\fA \models_{\lambda_{i_\ta}(\tv \incl \tx)} \tv \incl \tx$ and $\tb = s(\tv) \in \lambda_{i_\ta}(\tv \incl \tx)(\tv)$, it follows that $\tb \in (\lambda_{i_\ta}(\tv \incl \tx))(\tx)$.
    By Proposition \ref{prop: res inex components} we have $(\lambda_{i_\ta}(\tv \incl \tx))(\tx) = X_{i_\ta}(\tx)$, wherefore $\tb \in X_{i_\ta}(\tx) \subseteq X(\tx)$ and, consequently, we
    have chosen some index $i_\tb \in I$ with $\tb \in X_{i_\tb}(\tx)$.
    By Proposition \ref{prop: res inex components} again, it follows that $X_{i_{\tb}}(\tx) = (\lambda_{i_\tb}(\tv \incl \tx))(\tx)$.
    So there is some $s' \in \lambda_{i_\tb}(\tv \incl \tx)$ with $s'(\tx) = \tb = s(\tv)$
    and thus $s' \in \lambda_{i_\tb}(\tv \incl \tx)\res{\tx = \tb} \subseteq \lambda(\tv \incl \tx)$. This concludes the proof of $\fA \models_{\lambda(\tv \incl \tx)} \tv \incl \tx$.
\end{proof}

\noindent
It remains to prove that indeed every union closed formula $\phi$ of $\inexlogic$ is equivalent to some $\tx$-myopic formula.

\begin{thm}
    \label{thm: myopic inex - more general}
    Let $\phi(\tx) \in \inexlogic$.
    There is an $\tx$-myopic formula $\mu(\tx)$ such that for all suitable structures $\fA$ and teams $X$ with $\dom(X) = \{ \tx \}$ holds
    \[ \fA \models_X \mu(\tx) \Longleftrightarrow X \text{ can be written as } X = \bigcup_{i\in I} X_i \text{ where } \fA \models_{X_i} \phi(\tx) \text{ for every } i \in I. \]
\end{thm}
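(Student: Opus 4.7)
The plan is to imitate Theorem \ref{thm: eso-sentence => myopic} at the level of team semantics: existentially quantify a ``subteam-per-element'' of $X$ and make each such subteam witness $\phi$. Concretely, rename $\phi$ using a fresh tuple $\tu$ of the same length as $\tx$ (renaming any clashing bound variables of $\phi$), let $\phi'(\tu) \ceq \phi[\tx/\tu]$, and let $\phi^\star(\tx,\tu)$ be the $\tx$-guarded version of $\phi'(\tu)$ supplied by Lemma \ref{lem: x-components of a team}. Define
\[\mu(\tx) \ceq \E \tu\bigl(\tx\tx \incl \tx\tu \;\land\; \tu \incl \tx \;\land\; \phi^\star(\tx,\tu)\bigr).\]

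The first step is to check that $\mu(\tx)$ satisfies the three conditions of Definition \ref{def: myopic in/ex-formula}: the variables $\tx$ are never quantified; every inclusion/exclusion atom inside $\phi^\star$ is $\tx$-guarded by construction, and $\tx\tx \incl \tx\tu$ is of the $\tx$-guarded shape; the only unguarded atom $\tu \incl \tx$ sits directly under conjunctions and $\E\tu$, never under a disjunction. The second step is the semantic equivalence. Given $F \colon X \to \potne{A^{|\tu|}}$ witnessing $\E\tu$ and $Y \ceq X[\tu \mapsto F]$, the three conjuncts translate to: (i) $\ta \in F(\ta)$ for every $\ta \in X$ (from $\tx\tx \incl \tx\tu$); (ii) $F(\ta) \subseteq X$ for every $\ta \in X$ (from $\tu \incl \tx$); and (iii) $\fA \models_{F(\ta)} \phi(\tx)$ for every $\ta \in X(\tx)$ (from $\phi^\star$). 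Setting $X_\ta \ceq F(\ta)$ then yields the desired decomposition $X = \bigcup_{\ta \in X} X_\ta$ with each $X_\ta$ satisfying $\phi$. Conversely, given $X = \bigcup_{i \in I} X_i$ with $\fA \models_{X_i} \phi$, picking an index $i_\ta \in I$ with $\ta \in X_{i_\ta}$ for every $\ta \in X$ and setting $F(\ta) \ceq X_{i_\ta}$ satisfies all three conjuncts simultaneously. The empty-team cases ($X = \emptyset$ or $I = \emptyset$) are absorbed by the empty team property of $\inexlogic$.

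The main obstacle is establishing clause (iii). To pass from $\fA \models_Y \phi^\star(\tx,\tu)$ to \quotes{$\fA \models_{F(\ta)} \phi(\tx)$ for every $\ta \in X(\tx)$}, the plan is to first apply Lemma \ref{lem: x-components of a team} to rewrite the condition as $\fA \models_{Y\res{\tx=\ta}} \phi'(\tu)$ for every $\ta \in X(\tx)$, then invoke locality of $\inexlogic$ (since $\phi'$ mentions only $\tu$) to replace $Y\res{\tx=\ta}$ by its projection onto $\{\tu\}$, whose $\tu$-values are exactly $F(\ta)$, and finally use standard variable renaming to identify this with $\fA \models_{F(\ta)} \phi(\tx)$. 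Each individual step is routine, but care is required so that the domains of the intermediate teams and the distinction between $\tx$ and $\tu$ remain consistent throughout the chain of reductions.
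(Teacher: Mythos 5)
Your proposal is correct and follows essentially the same route as the paper: you construct the very same formula $\E\ty(\ty\incl\tx \land \tx\tx\incl\tx\ty \land \phi^\star(\tx,\ty))$ (with $\tu$ in place of $\ty$ and the conjuncts reordered) and prove both directions via Lemma~\ref{lem: x-components of a team} and the same choice of the witnessing function $F$. The only difference is that you explicitly flag the locality/renaming step needed to pass from $\fA\models_{Y\res{\tx=\ta}}\phi'(\tu)$ to $\fA\models_{X_\ta}\phi(\tx)$, which the paper uses tacitly.
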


\begin{proof}
    Let $\ty$ be a fresh tuple of variables.
    Let $\phi^\star(\tx, \ty)$ be the $\tx$-guarded version of $\phi(\ty)$, i.e.~$\phi^\star$ results from $\phi$ by first replacing the variables $\tx$ by the new variables $\ty$ and then by adding $\tx$ on both sides of every inclusion or exclusion atoms occurring in $\phi(\ty)$. We define
    \[ \mu(\tx) \ceq \E \ty (\ty \incl \tx \land \tx \tx \incl \tx \ty \land \phi^\star(\tx, \ty)), \]
    which is $\tx$-myopic.
    We still need to prove the two directions of the claim.

    ``$\Longrightarrow$'': First assume $\fA \models_X \mu(\tx)$.
    Then there exists a team $Y$ of the form $Y = X[\ty \mapsto F]$ for some function $F\colon X \to \potne{A^{|\tx|}}$ such that
    \[\fA \models_Y \ty \incl \tx \land \tx \tx \incl \tx \ty \land \phi^\star(\tx, \ty).\]
    Thus we have $Y(\ty) \subseteq Y(\tx) = X(\tx)$ and, due to Lemma \ref{lem: x-components of a team}, $\fA \models_{Y\res{\tx = \ta}} \tx \incl \ty \land \phi(\ty)$ for every $\ta \in Y(\tx)$, because $\tx\tx \incl \tx\ty \land \phi^\star(\tx,\ty)$ is the $\tx$-guarded version of $\tx \incl \ty \land \phi(\ty)$.
    We can deduce, for every $\ta \in Y(\tx)$, that $\{ \ta \} = Y\res{\tx = \ta} (\tx) \subseteq Y\res{\tx = \ta}(\ty) \subseteq Y(\ty) \subseteq Y(\tx)$.  This implies that
    \begin{equation} \label{eq: a in Ya subseteq Y(x)}
        \ta \in Y\res{\tx = \ta}(\ty) \subseteq Y(\tx) \text{ for every } \ta \in Y(\tx).
    \end{equation}
    For every $\ta \in Y(\tx)$, let $X_\ta$ be the team with $\dom(X_\ta) = \{ \tx \}$ and $X_\ta(\tx) = Y\res{\tx = \ta}(\ty)$.
    Because of this construction and due to $\fA \models_{Y\res{\tx = \ta}} \phi(\ty)$, it follows that $\fA \models_{X_\ta} \phi(\tx)$.
    Furthermore, \eqref{eq: a in Ya subseteq Y(x)} and $Y(\tx) = X(\tx)$ lead to
    \[
    \ta \in X_\ta(\tx) \subseteq X(\tx) \text{ for every } \ta \in X(\tx)
    \]
    and, consequently, to $\bigcup_{\ta \in X(\tx)} X_\ta(\tx) = X(\tx)$.
    Because of $\dom(X) = \{ \tx \} = \dom(X_\ta)$, we indeed have $X = \bigcup_{\ta \in X(\tx)} X_\ta$ where $\fA \models_{X_\ta} \phi(\tx)$ for every $\ta \in X(\tx)$.

    ``$\Longleftarrow$'': For the converse direction, we assume that $X$ can be written as $X = \bigcup_{i \in I} X_i$ where $\fA \models_{X_i} \phi(\tx)$ for every $i \in I$.
    Our goal is to prove that $\fA \models_X \mu(\tx)$.
    Towards this end, for every $\ta \in X(\tx)$, we choose some index $i_\ta \in I$ with $\ta \in X_{i_\ta}(\tx)$.
    Let $F\colon X \to \potne{A^{|\ty|}}$ be defined by $F(s) \ceq X_{i_{s(\tx)}}(\tx)$.
    Then we have
    \begin{equation} \label{eq: extention function F}
        s(\tx) \in F(s) \subseteq X(\tx) \text{ for every } s \in X.
    \end{equation}
    Let $Y \ceq X[\ty \mapsto F]$.
    By construction, it follows that $F(s) = Y\res{\tx = s(\tx)}(\ty)$ for every $s \in X$.
    Furthermore, we also have
    \[Y(\ty) = \bigcup_{s \in X} F(s) \underset{\eqref{eq: extention function F}}{=} X(\tx) = Y(\tx).\]
    In particular, this implies $Y(\ty) \subseteq Y(\tx)$ and, hence, $\fA \models_Y \ty \subseteq \tx$.
    Now, in order to prove $\fA \models_Y \tx \tx \incl \tx \ty \land \phi^\star(\tx, \ty)$ we will use Lemma \ref{lem: x-components of a team}. This means that we only need to prove that $\fA \models_{Y\res{\tx = \ta}} \tx \incl \ty \land \phi(\ty)$ for every $\ta \in Y(\tx)$.
    Towards this end, pick any $\ta \in Y(\tx)$.
    Because of $Y(\tx) = X(\tx)$, there must be an assignment $s_\ta \in X$ with $s_\ta(\tx) = \ta$.
    We clearly have
    \[Y\res{\tx = \ta}(\tx) = \{ \ta \} = \{ s_\ta(\tx) \} \underset{\eqref{eq: extention function F}}{\subseteq} F(s_\ta) = Y\res{\tx = s_\ta(\tx)}(\ty)  = Y\res{\tx = \ta}(\ty) \]
    and, thus, $\fA \models_{Y\res{\tx = \ta}} \tx \incl \ty$.
    By assumption, we also know that $\fA \models_{X_{i_\ta}} \phi(\tx)$.
    Because of $Y\res{\tx = \ta}(\ty) = F(s_\ta) = X_{i_{s_\ta(\tx)}}(\tx) = X_{i_\ta}(\tx)$, this implies that $\fA \models_{Y\res{\tx = \ta}} \phi(\ty)$.
    Therefore, we indeed have $\fA \models_{Y\res{\tx = \ta}} \tx \subseteq \ty \land \phi(\ty)$ for every $\ta \in Y(\tx)$.
    As a result, we have $\fA \models_Y \ty \incl \tx \land \tx \tx \incl \tx \ty \land \phi^\star(\tx, \ty)$ which leads to $\fA \models_X \mu(\tx)$.
\end{proof}

As we have already seen in Theorem \ref{thm: myopic <=> union-closed}, every union closed formula of existential \secorder logic is equivalent to some myopic $\eso$-formula.
Moreover, it is well known that every $\inexlogic$-formula can be translated into an equivalent $\eso$-formula \cite{Gal12}.
Such a formula can be expressed as an $\tx$-myopic one of the form $\E \ts (\ts \incl \tx \land \psi)$ where $\psi$ uses only $\tx$-guarded
atoms.

\begin{lem}\label{lem: res inex components - normalform}
    Let $\phi(\tx) \in \inexlogic$ be an $\tx$-myopic formula of the form $\E\ts(\ts\incl\tx \land \psi)$, where in $\psi$ no inclusion atoms of the form $\ty\incl\tx$ occur. Then $\fA\models_X\phi$ if and only if there exists $F\colon X\to \potne{A^{|\tx|}}$ such that $F(s)\subseteq X(\tx)$ for every $s\in X$ and $\fA\models_{X[\ts\mapsto F]\res{\tx=\ta}}\psi'$ for all $\ta\in X(\tx)$, where $\psi'$ is the unguarded version of $\psi$.
\end{lem}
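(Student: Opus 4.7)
The plan is to unfold the team-semantic rule for the existential quantifier and then apply Lemma \ref{lem: x-components of a team} to the remaining subformula. Both directions are treated simultaneously because each step is a logical equivalence.

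First I would observe that by the semantics of $\E\ts$ one has $\fA\models_X\phi$ iff there exists some function $F\colon X\to\potne{A^{|\tx|}}$ such that, setting $Y\ceq X[\ts\mapsto F]$, the conjunction $\ts\incl\tx\land\psi$ holds in $Y$. Since $\tx$ is not rebound we have $Y(\tx)=X(\tx)$ and $Y(\ts)=\bigcup_{s\in X}F(s)$, so the literal $\ts\incl\tx$ is satisfied in $Y$ precisely when $F(s)\subseteq X(\tx)$ for every $s\in X$. This accounts for the side condition on $F$ in the statement.

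For the second conjunct I would verify that $\psi$ is exactly the $\tx$-guarded version of $\psi'$. By hypothesis $\psi$ contains no inclusion atom of the form $\ty\incl\tx$, and by $\tx$-myopy of $\phi$ (conditions (\ref{def: myopic in/ex-formula - exclusion atoms}) and (\ref{def: myopic in/ex-formula - inclusion atoms}) of Definition \ref{def: myopic in/ex-formula}) the only remaining possibilities for an in/ex atom in $\psi$ are the $\tx$-guarded forms $\tx\tu\incl\tx\tv$ and $\tx\tu\excl\tx\tv$. Thus every in/ex atom in $\psi$ is $\tx$-guarded, and stripping the guard produces $\psi'$. Lemma \ref{lem: x-components of a team} therefore applies and yields
\[\fA\models_Y\psi\iff \fA\models_{Y\res{\tx=\ta}}\psi'\quad\text{for every }\ta\in Y(\tx)=X(\tx).\]
Combining this equivalence with the reformulation of $\ts\incl\tx$ above gives the biconditional in the lemma.

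There is no deep obstacle here; the argument is essentially bookkeeping that translates the existential quantifier rule into an explicit choice function $F$ and then slices the resulting team along its $\tx$-components via an already established lemma. The one place that warrants a careful sentence is the identification of $\psi$ as the $\tx$-guarded version of $\psi'$, since this relies jointly on the $\tx$-myopic hypothesis and on the absence of atoms $\ty\incl\tx$ inside $\psi$.
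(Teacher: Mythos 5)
Your argument is correct and takes essentially the same route as the paper, whose entire proof is the one-liner ``by induction on $\psi$ and applying Proposition~\ref{prop: res inex components}'': your unfolding of the $\E\ts$ rule, the direct reading of $\ts\incl\tx$ as the condition $F(s)\subseteq X(\tx)$, and the appeal to Lemma~\ref{lem: x-components of a team} are exactly that induction, prepackaged. The only point worth a remark is that Lemma~\ref{lem: x-components of a team} is officially stated for unguarded formulae in which $\tx$ does not occur, whereas $\psi'$ may still mention $\tx$ in first-order literals; this is harmless by flatness and by the fact that $\tx$ is never quantified (and the paper applies the same lemma equally loosely in the proof of Theorem~\ref{thm: myopic inex - more general}), but strictly that case is covered by the underlying induction rather than by the lemma's literal hypotheses.
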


\begin{proof}
    By induction on $\psi$ and applying Proposition \ref{prop: res inex components}.
\end{proof}

We present two different proofs for the next theorem, which bring a myopic $\eso$-formula into this normal form.
The following proof is based on methods of Galliani, Kontinen and \jouko \cite{Gal12, KonVaa09} while the other one resembles the proof of Theorem \ref{thm: union closed eso => myopic stronger} and can be found in Section \ref{sec: alternative proof}.

\begin{thm}
    \label{thm: myopic eso translates into myopic inex}
    Let $\phi(X)$ be a myopic $\eso$-formula.
    There is an equivalent $\tx$-myopic formula of $\inexlogic$ where $|\tx| = \arity{X}$.
\end{thm}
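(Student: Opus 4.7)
The plan is to reduce the theorem to Theorem \ref{thm: myopic inex - more general} together with the known translation of $\eso$ into $\inexlogic$. The key observation is that Theorem \ref{thm: myopic inex - more general} takes an arbitrary $\inexlogic$-formula $\chi(\tx)$ and produces an $\tx$-myopic formula equivalent to the \emph{union closure} of $\chi$; when $\chi$ is already union closed, the output is equivalent to $\chi$ itself. Since myopic $\eso$-formulae are union closed (Proposition \ref{prop: myopic => union closed}), applying this pipeline to the $\inexlogic$-translation of a myopic $\eso$-formula $\phi(X)$ yields the desired $\tx$-myopic equivalent.

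Concretely, I would first invoke the translation due to Galliani \cite{Gal12} (building on ideas of Kontinen and \jouko\ \cite{KonVaa09}): for every $\phi(X) \in \eso$ with $\arity{X} = k$, there is a formula $\chi(\tx) \in \inexlogic$ with $|\tx| = k$ such that, for every structure $\fA$ and every relation $R \subseteq A^k$,
\[
(\fA, R) \models \phi(X) \iff \fA \models_X \chi(\tx),
\]
where $X$ is the team of domain $\{\tx\}$ with $X(\tx) = R$. Next I would observe that $\chi$ is union closed: teams of domain $\{\tx\}$ are in bijection with their $\tx$-projections, and unions of such teams correspond to unions of the associated relations, so union closure of $\phi$ in the $\eso$ sense transfers directly to union closure of $\chi$ in the team-semantic sense. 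Then I would apply Theorem \ref{thm: myopic inex - more general} to $\chi$, producing an $\tx$-myopic formula $\mu(\tx)$ with
\[
\fA \models_X \mu \iff X = \bigcup_{i \in I} X_i \text{ where } \fA \models_{X_i} \chi \text{ for every } i \in I.
\]
Finally, union closure of $\chi$ collapses the right-hand side to $\fA \models_X \chi$ alone, and chaining the equivalences yields $\fA \models_X \mu \iff (\fA, R) \models \phi(X)$, which is exactly the claim.

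The substantive content of the argument is already packaged in Theorem \ref{thm: myopic inex - more general}, whose design as a generic ``myopifier'' is precisely what permits this short reduction. The only piece of honest bookkeeping is the transfer of union closure from the $\eso$-world to the team-semantic world, which is essentially automatic because we restrict attention to teams of domain exactly $\{\tx\}$; I do not expect any deeper obstruction here. A more direct and uniform construction --- avoiding both external invocations --- would instead build $\mu$ explicitly from the inclusion-exclusion game $\mcX{\fA}{\phi}$, in the style of Theorem \ref{thm: union closed eso => myopic stronger}, by defining the winning strategies and their targets inside $\inexlogic$ via guarded inclusion and exclusion atoms; this is exactly the alternative proof advertised in Section \ref{sec: alternative proof}, and its main cost is the careful translation of the game vertices and edges into an $\tx$-guarded $\inexlogic$-formula.
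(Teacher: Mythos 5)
Your reduction is correct, and it takes a genuinely different route from the paper. The paper proves this theorem by redoing the Galliani/Kontinen--V\"a\"an\"anen translation by hand: it first normalises the myopic formula to $\A\tx(X\tx\ra\E S(S\subseteq X\land\E\tR\phi'))$, Skolemises, and embeds the resulting Skolem form into $\inexlogic$ using only $\tx$-guarded dependence, inclusion and exclusion atoms, checking correctness through Lemma~\ref{lem: res inex components - normalform}. You instead treat the $\eso\to\inexlogic$ translation as a black box, transfer union closure of the myopic formula (Proposition~\ref{prop: myopic => union closed}) to its team-semantic translate via the bijection between teams of domain $\{\tx\}$ and $k$-ary relations, and then apply the generic ``myopifier'' of Theorem~\ref{thm: myopic inex - more general}, which collapses to an equivalent formula precisely because its input is union closed. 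This is a cleaner, more modular decomposition, and since Theorem~\ref{thm: myopic inex - more general} outputs a formula of the shape $\E\ty(\ty\incl\tx\land\tx\tx\incl\tx\ty\land\phi^\star)$, it even yields the normal form of Corollary~\ref{cor: normalform myopic inex} for free; what the paper's hands-on construction buys is explicit control over which guarded atoms appear, in the spirit of the quantitative refinements elsewhere in the paper. Two pieces of bookkeeping you should make explicit: Galliani's translation from $\eso$ into $\inexlogic$ is available only for formulae $\phi(X)$ satisfied by the empty relation (harmless here, since myopic formulae trivially have this property, but it is a hypothesis of the cited result rather than a fact about all of $\eso$), and Theorem~\ref{thm: myopic inex - more general} is stated for teams with $\dom(X)=\{\tx\}$, so extending the final equivalence to teams with larger domains requires an appeal to locality of $\inexlogic$.
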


\begin{proof}
    First of all let us introduce a normal form of myopic $\eso$-formulae.
    Since in myopic formulae the variable $X$ may occur only positively in the subformula $\phi'$, we can transform every $\A\tx(X\tx\ra\E\tR\phi'(\tR,X,\tx))$ into the equivalent formula $\A\tx(X\tx\ra\E S(S\subseteq X\land\E\tR \phi'(\tR,S,\tx)))$, where $S\subseteq X$ is a shorthand for $\A \ty (S\ty\ra X\ty)$.
    We now apply the Skolem-normal form of $\eso$-formulae to $\E\tR\phi'(\tR,S,\tx)$, which yields the formula $\sigma(S, \tx) \ceq\E\tf\A\ty((f_1(\tw)=f_2(\tw)\lra S\tw) \land \psi(\tf,\tx,\ty))$, where $\psi$ is a quantifier-free first-order formula and $\tw$ is a subtuple of $\ty$ and, moreover, every $f_i$ occurs in $\sigma$ only with a unique tuple $\tw_i$ (consisting of pairwise different variables) as argument, that is $f_i(\tw_i)$ (see \cite{KonVaa09} where an analogous construction is made).
    The original formula can thus be transformed into $\A\tx(X\tx\ra\E S(S\subseteq X\land \sigma(S, \tx)))$.
    Similarly to \cite{Gal12} we embed $\sigma(S, \tx)$ into inclusion-exclusion logic as $\theta(\ts,\tx)\ceq\A\ty\E\tz\big(\bigwedge_i\dep(\tx\tw_i,z_i)\land((\tx\tw\incl\tx\ts\land z_1=z_2)\lor(\tx\tw\excl\tx\ts\land z_1\neq z_2))\land\psi'(\tx,\ty,\tz)\big)$.
    Here $\psi'$ is obtained from $\psi$ by simply replacing every occurrence of $f_i(\tw_i)=f_j(\tw_j)$ by $z_i=z_j$.
    The only difference in our case is that every dependency atom is $\tx$-guarded due to the fact that the subformula at hand is inside the scope of the universally quantified variables $\tx$ in $\A\tx(X\tx \ra \dots)$.
    Notice that dependence atoms of the form $\dep(\tx\tw_i, z_i)$ can also be regarded as $\tx$-myopic.
    Formally, we can embed such an atom into exclusion logic via the formula $\A v(\tx\tw_i v\excl \tx\tw_i z_i \lor z_i=v)$, which has the intended shape \cite{Gal12}.
    The whole formula $\phi(X)$ thus translates into $\mu(\tx)\ceq\E\ts(\ts\incl\tx \land \theta(\ts,\tx))$.
    Let $\theta'(\ts,\tx)$ be the unguarded version of $\theta(\ts,\tx)$.
    Analogously to the argumentation of Galliani \cite{Gal12} by additionally making use of Proposition \ref{prop: res inex components}, we see that $(\fA, Y\res{\tx=\ta}(\ts)) \models_{\tx\mapsto\ta} \sigma(S, \tx)$ if and only if $\fA\models_{Y\res{\tx=\ta}} \theta'(\tx)$ for $\ta \in Y(\tx)$, where $Y$ is a team with domain $\{\ts,\tx\}$ (here the variable $S$ takes the role of the team).
    Using Lemma \ref{lem: res inex components - normalform} we have $\fA\models_X\mu(\tx)$ if and only if there is a function $F\colon X\to\potne{A^{\arity{\ts}}}$ such that $F(s)\subseteq X(\tx)$ for every $s \in X$ and $\fA\models_{X[\ts\mapsto F]\res{\tx=\ta}} \theta'(\ts,\tx)$ for all $\ta\in X(\tx)$, which again holds if and only if there exists such an $F$ and $(\fA,F(s))\models_{t}\sigma(S, \tx)$ for all $t\in X$, but this just means $(\fA, X(\tx)) \models \A\tx(X\tx\ra \E S(S\subseteq X \land \sigma(S, \tx)))$.
\end{proof}

\begin{cor}[Normal form of myopic-$\inexlogic$]
    \label{cor: normalform myopic inex}
    Let $\phi(\tx)$ be a union closed formula of $\inexlogic$.
    There is a logically equivalent $\tx$-myopic formula $\psi(\tx) = \E\ts(\ts\incl\tx\land\theta)$ where in $\theta$ only $\tx$-guarded dependency atoms occur.
\end{cor}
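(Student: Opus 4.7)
The plan is to chain together three results already established in this paper (plus Galliani's classical translation). Given a union closed $\phi(\tx) \in \inexlogic$, I would first invoke Galliani's translation of $\inexlogic$ into $\eso$ \cite{Gal12} to obtain an $\eso$-formula $\phi'(X)$ with a single free relational variable $X$ of arity $|\tx|$ satisfying $\fA \models_X \phi(\tx) \iff (\fA, X(\tx)) \models \phi'(X)$. Because that translation identifies the team (restricted to $\dom = \{\tx\}$) with the relation interpreting $X$ in a bijective, union-compatible way, union closure of $\phi$ transfers to union closure of $\phi'(X)$.

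Next, apply Theorem~\ref{thm: myopic <=> union-closed} (equivalently, Corollary~\ref{cor: union closed => myopic}) to $\phi'(X)$ to obtain an equivalent myopic $\eso$-formula $\mu(X) = \A\tx(X\tx \ra \E\tR\,\phi''(X,\tR,\tx))$ in which $X$ occurs only positively. Finally, feed $\mu(X)$ into Theorem~\ref{thm: myopic eso translates into myopic inex}. Inspecting the construction in that proof, the resulting $\tx$-myopic $\inexlogic$-formula is explicitly produced in the shape $\E\ts(\ts\incl\tx \land \theta(\ts,\tx))$, where $\theta$ consists only of $\tx$-guarded inclusion and exclusion atoms together with $\tx$-guarded dependence atoms; as noted in the same proof, every such dependence atom can in turn be replaced by its $\tx$-guarded exclusion-logic embedding $\A v(\tx\tw_i v \excl \tx\tw_i z_i \lor z_i = v)$. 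This yields exactly the claimed normal form.

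There is no substantial new obstacle: all three cited results do the heavy lifting, and the corollary is essentially their composition. The only point that warrants verification is that union closure genuinely passes back and forth through Galliani's $\inexlogic$-to-$\eso$ translation, which is routine since the translation identifies teams $X$ with relations $X(\tx)$ in a way that commutes with taking unions over the free first-order variables $\tx$.
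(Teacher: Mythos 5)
Your proposal is correct and follows exactly the chain the paper intends for this corollary: Galliani's translation into $\eso$ (preserving union closure), then Theorem~\ref{thm: myopic <=> union-closed} to get a myopic $\eso$-formula, then Theorem~\ref{thm: myopic eso translates into myopic inex}, whose construction visibly outputs $\E\ts(\ts\incl\tx\land\theta)$ with only $\tx$-guarded atoms (dependence atoms being absorbed via their $\tx$-guarded exclusion-logic embedding). Nothing further is needed.
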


\subsection{Alternative Proof}
\label{sec: alternative proof}

The proof of Theorem \ref{thm: myopic eso translates into myopic inex} uses a translation into a special Skolem-normal form
that in the end produces a lot of guarded in-/exclusion atoms.
We can prevent this by using a different proof technique that similar to the proof of Theorem \ref{thm: myopic eso translates into myopic inex}
exploits that winning strategies of union games are definable in the myopic fragment of $\inexlogic$-formulae
and that the model-checking games of myopic formulae are first-order interpretable.
The rest of this section is organised as follows.
First, we define target sets of a union game by a myopic $\inexlogic$-formula, then
we adapt the interpretation lemma for logics with team semantics and finally present the alternative proof of Theorem \ref{thm: myopic eso translates into myopic inex}.

\begin{exa}
    Let us demonstrate that (vertex sets of) winning strategies of \emph{general} inclusion-exclusion games can be defined in $\inexlogic$:
    \begin{align*}
        \psiwin(y)   &\ceq \psiinit(y) \land \psimove(y) \land \psiexcl(y) \text{ where} \\
        \psiinit(y)  &\ceq \A z (Iz \ra z \incl y) \\
        \psimove(y)  &\ceq \E z ([(V_0y  \land \E z'(Eyz' \land z' \incl z)) \lor (V_1 y \land \A z' (Eyz' \ra z' \incl z))] \land z \incl y) \\
        \psiexcl(y)  &\ceq \A z ((\Eexcl yz \lor \Eexcl zy) \ra y \excl z)
    \end{align*}
    It is not difficult to verify that these formulae are just expressing the conditions for winning strategies (cf.~Definition \ref{def: winning strategy}).
    More formally, we have the following claim:

    \begin{clm} \label{claim: psiwin}
        Let $\cG$ be an inclusion-exclusion game
        and $Y$ be a non-empty team over $\cG$ with $y \in \dom(Y)$.
        Then $\cG \models_Y \psiwin(y)$ if and only if $Y(y)$ is the vertex set of a winning strategy for player $0$ in $\cG$.
    \end{clm}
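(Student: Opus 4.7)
Write $W \ceq Y(y)$. The four conditions for a winning strategy in Definition~\ref{def: winning strategy} depend only on the vertex set $W$, so the plan is to verify that each of the three conjuncts of $\psiwin(y)$ corresponds, after unfolding team semantics, to the appropriate condition(s).

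For $\psiinit(y) = \A z (Iz \ra z \incl y)$, I would first pass to the team $Y[z \mapsto V]$ under the universal quantifier and then use the semantics of implication to restrict to assignments with $Iz$. The $z$-column of the resulting team is exactly $I$ while the $y$-column is contained in $W$, so the inclusion atom $z \incl y$ holds iff $I \subseteq W$, i.e.\ condition~(\ref{def: winning strategy - initial position}). For $\psiexcl(y) = \A z((\Eexcl yz \lor \Eexcl zy) \ra y \excl z)$, the same unfolding restricts to pairs $(s(y), s(z))$ joined by an exclusion edge in either direction, and the atom $y \excl z$ expresses that no vertex is simultaneously a $y$- and $z$-value; a short case check shows that this fails iff some pair $v, w \in W$ satisfies $(v,w) \in \Eexcl$ or $(w,v) \in \Eexcl$, which gives equivalence with condition~(\ref{def: winning strategy - exclusion edges}).

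The main work is $\psimove(y)$. The outer $\E z$ supplies a function $F \colon Y \to \potne{V}$, and the trailing conjunct $z \incl y$ forces $\bigcup_{s \in Y} F(s) \subseteq W$. The disjunction then splits $Y[z \mapsto F]$ into $Y_0 \cup Y_1$; since $V_0 y$ and $V_1 y$ are first-order and hence evaluated flatly, and since $V_0 \cap V_1 = \emptyset$, every assignment with $s(y) \in V_0$ must lie in $Y_0$ and every one with $s(y) \in V_1$ must lie in $Y_1$. On $Y_0$ the inner $\E z'$ supplies $F' \colon Y_0 \to \potne{V}$; the flat literal $E y z'$ forces every element of $F'(s)$ to be an $E$-successor of $s(y)$, and $z' \incl z$ forces it to lie in $Y_0(z) \subseteq W$, so each $v \in W \cap V_0$ has a successor in $W$, which is condition~(\ref{def: winning strategy - player 0}). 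The $Y_1$-side is symmetric: the universal $\A z'$ ranges over all $E$-successors of $s(y)$ and $z' \incl z$ drops each into $W$, giving condition~(\ref{def: winning strategy - player 1}). Conversely, given a $W$ satisfying all four conditions, the choice $F(s) \ceq W$ (nonempty since $Y$, and hence $W$, is nonempty) together with $F'(s) \ceq \{w\}$ for some $w \in \nh{\cG}{s(y)} \cap W$ witnesses $\psimove(y)$.

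The main obstacle is the bookkeeping inside $\psimove$: the $z$-column of $Y_0$ (respectively $Y_1$) must be rich enough to absorb every $z'$-value picked by $F'$ (respectively forced by $\A z'$). Setting $F \equiv W$ makes the $z$-column of $Y[z \mapsto F]$ equal to $W$, and the forced membership $s \in Y_0$ whenever $s(y) \in V_0$ (and dually for $V_1$) ensures this richness survives when restricting to either subteam, so the inclusion atoms $z' \incl z$ can indeed be discharged.
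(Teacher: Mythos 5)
Your proposal is correct: it carries out, conjunct by conjunct, exactly the direct unfolding of team semantics against the four conditions of Definition~\ref{def: winning strategy} that the paper leaves to the reader (``it is not difficult to verify\dots''), and the key points --- flatness of the first-order literals forcing the split of $Y[z\mapsto F]$ along $V_0$/$V_1$, the choice $F\equiv W$ in the converse direction to make the $z$-column rich enough for the inclusion atoms, and the two-sided case check for $\psiexcl$ --- are all handled properly. No gaps; this is the intended argument, just written out in full.
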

    \noindent
    With these formulae at hand, it is easy to define the target sets in $\inexlogic$:
    \[ \psitarget(z) \ceq Tz \land \E y (\psiwin(y) \land z \incl y \land (Ty \ra y \incl z)).\]

    \begin{clm} \label{claim: psitarget}
        Let $\cG$ be an inclusion-exclusion game and let $X$ be a non-empty team over $\cG$ with $z \in \dom(X)$.
        Then $\cG \models_X \psitarget(z)$ if and only if $X(z) \in \target(\cG)$.
    \end{clm}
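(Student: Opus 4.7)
My plan is to prove both directions by applying Claim \ref{claim: psiwin} to reduce $\psiwin(y)$ to the semantic statement that $Y(y)$ is the vertex set of a winning strategy, and then to unpack the remaining conjuncts using the basic team-semantics reduction $\cG \models_Y (Ty \ra y \incl z) \iff \cG \models_{Y\res{Ty}} y \incl z$ noted in the preliminaries.

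For the forward direction I would assume $\cG \models_X \psitarget(z)$. The conjunct $Tz$ gives $X(z) \subseteq T$ by flatness, while the existential quantifier produces a function $F\colon X \to \potne{V}$ so that $Y \ceq X[y \mapsto F]$ satisfies $\psiwin(y) \land z \incl y \land (Ty \ra y \incl z)$. By Claim \ref{claim: psiwin}, $W \ceq Y(y)$ is the vertex set of a winning strategy $\cS$ for player $0$. From $z \incl y$ I get $X(z) = Y(z) \subseteq Y(y) = W$, hence $X(z) \subseteq W \cap T = \target(\cS)$. For the reverse inclusion, the restriction $Y\res{Ty}$ consists of precisely the assignments $s \in Y$ with $s(y) \in T$, so $Y\res{Ty}(y) = W \cap T = \target(\cS)$ while $Y\res{Ty}(z) \subseteq Y(z) = X(z)$; the inclusion $y \incl z$ on $Y\res{Ty}$ then yields $\target(\cS) \subseteq X(z)$. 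Combining both inclusions gives $X(z) = \target(\cS) \in \target(\cG)$.

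For the backward direction I would start from a winning strategy $\cS$ with $\target(\cS) = X(z)$ and let $W \ceq V(\cS)$. Since $X$ is non-empty and $z \in \dom(X)$, the set $X(z)$ is non-empty, so $W \supseteq X(z)$ is non-empty as well. Define the constant function $F(s) \ceq W$ for every $s \in X$ and set $Y \ceq X[y \mapsto F]$, so $Y(y) = W$ and $Y(z) = X(z)$. The atom $Tz$ holds because $X(z) = \target(\cS) \subseteq T$; the inclusion $z \incl y$ holds because $X(z) \subseteq W$; $\psiwin(y)$ holds by Claim \ref{claim: psiwin}. Finally, in the restricted team $Y\res{Ty}$, we have $Y\res{Ty}(y) = W \cap T = X(z)$, and since $W \cap T = X(z)$ is non-empty each $s \in X$ can be extended by some $w \in W \cap T$, giving $Y\res{Ty}(z) = X(z)$. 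Hence $y \incl z$ holds on $Y\res{Ty}$, and we obtain $\cG \models_X \psitarget(z)$.

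The main obstacle I anticipate is the correct handling of the implication $Ty \ra y \incl z$ via restriction: one has to verify both that the target vertices appearing under $y$ in the strategy coincide with $X(z)$, and that the restricted team is rich enough in the backward direction for the inclusion atom to hold (which is why the non-emptiness assumption on $X$ is essential and why the constant choice $F(s) = W$ is convenient). Everything else is a routine unfolding of the definitions together with Claim \ref{claim: psiwin}.
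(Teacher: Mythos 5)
Your proposal is correct and follows essentially the same route as the paper: both directions reduce $\psiwin(y)$ via Claim \ref{claim: psiwin}, use the conjuncts $Tz$ and $z \incl y$ for one inclusion, analyse the restricted team $Y\res{Ty}$ for the other, and in the backward direction pick the same witness $Y = X[y \mapsto V(\cS)]$. The only cosmetic difference is that you verify $y \incl z$ on $Y\res{Ty}$ by computing both projections to be exactly $X(z)$ (using non-emptiness of $W \cap T$), whereas the paper exhibits a matching assignment $s'' = s'[y \mapsto s(y)]$ for each $s$ in the restricted team; these amount to the same argument.
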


    \begin{proof}
        ``$\Longrightarrow$'':
        Let $\cG \models_X \psitarget(z)$. Then $X(z) \subseteq T(\cG)$ and $\cG \models_Y \psiwin(y) \land z \incl y \land (Ty \ra y \incl z)$ where $Y \ceq X[y \mapsto F]$ for some $F \colon X \to \potne{V(\cG)}$.
        From $\cG \models_Y z \incl y \land (Ty \ra y \incl z)$ we obtain that
        $Y(z) \subseteq Y(y)$ and $Y(y) \cap T(\cG) = (Y\res{Ty})(y) \subseteq (Y\res{Ty})(z)$.
        By Claim \ref{claim: psiwin} and $\cG \models_Y \psiwin(y)$, we have $Y(y) = V(\cS)$ for some winning strategy $\cS$.
        We prove that $Y(z) = Y(y) \cap T(\cG)$. The direction ``$\supseteq$'' follows from
        $Y(y) \cap T(\cG) \subseteq (Y\res{Ty})(z) \subseteq Y(z)$,
        while the direction ``$\subseteq$'' is entailed by $Y(z) \subseteq Y(y)$ and $Y(z) = X(z) \subseteq T(\cG)$.
        Thus, $X(z) = Y(z) = Y(y) \cap T(\cG) = \target(\cS) \in \target(\cG)$.

        ``$\Longleftarrow$'': Now let $X(z) \in \target(\cG)$.
        Then there is some winning strategy $\cS$ with $\target(\cS) = X(z)$.
        So $\cG \models_X Tz$.
        By letting $Y \ceq X[y \mapsto V(\cS)]$, we also obtain $\cG \models_Y \psiwin(y)$, by Claim \ref{claim: psiwin} and $Y(y) = V(\cS)$,
        and $\cG \models_Y z \incl y$, since $Y(z) = X(z) = \target(\cS) \subseteq  V(\cS) = Y(y)$.
        We still need to prove that $\cG \models_Y Ty \ra y \incl z$.
        Towards this end, consider any $s \in Y\res{Ty}$.
        Then $s(y) \in Y(y) \cap T(\cG) = \target(\cS) = X(z)$ and, thus, there exists some $s' \in X$ with $s'(z) = s(y)$.
        Let $s'' \ceq s'[y \mapsto s(y)]$.
        It follows that $s'' \in {Y\res{Ty}}$, because we have $s''(y) = s(y) \in \target(\cS) = V(\cS) \cap T(\cG)$.
        So we have $s''(z) = s'(z) = s(y)$ and $s'' \in Y\res{Ty}$, which concludes the proof of $\cG \models_Y Ty \ra y \incl z$.
        All in all, this proves that $\cG \models_X \psitarget$.
    \end{proof}
\end{exa}

\noindent
In the last example, we have learned that target sets of general inclusion-exclusion games can be expressed
in full $\inexlogic$. In particular, this formula also defines the target sets of union games.
But is it possible to do the same in the \emph{myopic fragment} of $\inexlogic$?

Towards giving a positive answer to this question, let $\psitarget^G(x,z)$ be the corresponding $x$-guarded version of
$\psitarget(z)$.
Recall that this just means that we add $x$ on both sides of every occurring in-/exclusion atom.
For example, the inclusion atom $z\incl y$ occurring in the subformula $\psiinit$ will be transformed into $xz \incl xy$
when constructing $\psitarget^G$.

\begin{exa}
    Consider the $x$-myopic formula $\theta_\target(x)$ given by
    \[\theta_\target(x) \ceq \E x' (x' \incl x \land \zeta(x,x')) \text{ where }
    \zeta(x,x') \ceq  xx \incl xx'\land \psitarget^G(x,x').\]
    We claim (and prove) that $\theta_\target(x)$ defines the target sets in \emph{union} games which, more formally, means that
    for every team $X$ over some union game $\cG$ with $x \in \dom(X)$ holds $\cG \models_X \theta_\target \iff X(x) \in \target(\cG)$.\footnote{Since union games do not have any
        initial vertices, the formula $\theta_\target$ could be simplified by removing the subformula $\psiinit^G$ which is in a union game just trivially satisfied.}
    To see this, let $\zeta'$ be the corresponding unguarded version of $\zeta$, which turns out to be the following formula:
    \[ \zeta'(x,x') = x \incl x' \land \psitarget(x') \]

    \begin{clm} \label{claim: zeta unguarded}
        Let $\cG$ be an inclusion-exclusion game and let $X$ be a non-empty team over $\cG$ with $x,x' \in \dom(X)$.
        Then $\cG \models_X \zeta'(x,x')$ if and only if $X(x) \subseteq X(x') \in \target(\cG)$.
    \end{clm}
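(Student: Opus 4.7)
My plan is to prove the claim by directly unfolding the semantics of the conjunction $\zeta'(x,x') = x \incl x' \land \psitarget(x')$ and appealing to the already-established Claim \ref{claim: psitarget} (applied with the variable $x'$ playing the role of $z$).

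First I would observe that by the definition of team semantics for conjunction, $\cG \models_X \zeta'(x,x')$ holds if and only if both $\cG \models_X x \incl x'$ and $\cG \models_X \psitarget(x')$ hold simultaneously. The first conjunct unfolds by the semantics of the inclusion atom to $X(x) \subseteq X(x')$. For the second conjunct, since $X$ is non-empty and $x' \in \dom(X)$, Claim \ref{claim: psitarget} (with $z$ renamed to $x'$) applies verbatim and yields that $\cG \models_X \psitarget(x')$ is equivalent to $X(x') \in \target(\cG)$. Combining these two equivalences gives exactly $X(x) \subseteq X(x') \in \target(\cG)$, which is the conclusion.

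The only minor point worth mentioning is that Claim \ref{claim: psitarget} was formulated for the variable $z$, but since $\psitarget$ is parameterised by the name of its free variable and the semantics depends only on the projection $X(x')$ of the team, the renaming is semantically inert. Thus both directions of the equivalence follow with no further work.

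I do not expect any obstacle here: the claim is essentially a bookkeeping lemma that packages Claim \ref{claim: psitarget} together with the inclusion atom, and no new combinatorial or game-theoretic argument is required. The content that matters for the overall construction lies in Claim \ref{claim: psitarget} (already proved) and in the subsequent use of $\zeta'$ via Lemma \ref{lem: res inex components - normalform} within the alternative proof of Theorem \ref{thm: myopic eso translates into myopic inex}.
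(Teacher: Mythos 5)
Your proposal is correct and matches the paper's argument exactly: the paper's own proof is just ``Follows immediately from Claim~\ref{claim: psitarget},'' and your unfolding of the conjunction into the inclusion atom (giving $X(x)\subseteq X(x')$) plus Claim~\ref{claim: psitarget} applied to $x'$ (giving $X(x')\in\target(\cG)$) is precisely the intended reasoning. No gaps.
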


    \begin{proof}
        Follows immediately from Claim \ref{claim: psitarget}.
    \end{proof}
    \noindent
    With this claim at hand, we can prove that $\theta_\target$ really defines what we promised:

    \begin{clm} \label{claim: thetatarget is correct}
        Let $\cG = (V, V_0, V_1, E, \init, T, \Eexcl)$ be a \emph{union} game
        and let $X$ be a team with  $x \in \dom(X)$. Then $\cG \models_X \theta_\target \iff X(x) \in \target(\cG)$.
    \end{clm}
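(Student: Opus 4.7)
The plan is to prove Claim \ref{claim: thetatarget is correct} by unpacking the semantics of the $x$-myopic formula $\theta_\target$ using Lemma \ref{lem: x-components of a team} to pass to the unguarded version $\zeta'$, then applying Claim \ref{claim: zeta unguarded} component-by-component, and finally using Theorem \ref{thm: ugames are union closed} to glue things together. The union closure of $\target(\cG)$ is precisely what makes the forward direction go through, which is why this formula captures target sets of union games but not of arbitrary inclusion-exclusion games.

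For the forward direction, suppose $\cG \models_X \theta_\target$. Then there exists $F \colon X \to \potne{V(\cG)}$ such that for $Y \ceq X[x' \mapsto F]$ we have $\cG \models_Y x' \incl x$, $\cG \models_Y xx \incl xx'$, and $\cG \models_Y \psitarget^G(x,x')$. From the first we get $Y(x') \subseteq Y(x) = X(x)$. The guarded inclusion $xx \incl xx'$ ensures that for every $a \in Y(x)$ there exists $s' \in Y$ with $s'(x) = s'(x') = a$, so in particular $a \in Y\res{x=a}(x')$. By Lemma \ref{lem: x-components of a team} applied to $\psitarget^G$, for every $a \in Y(x)$ we have $\cG \models_{Y\res{x=a}} \psitarget(x')$, and hence by Claim \ref{claim: psitarget} the set $U_a \ceq Y\res{x=a}(x')$ belongs to $\target(\cG)$. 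Combining these observations, $a \in U_a \subseteq X(x)$ for every $a \in X(x)$, so $X(x) = \bigcup_{a \in X(x)} U_a$. Since $\cG$ is a union game, Theorem \ref{thm: ugames are union closed} yields $X(x) \in \target(\cG)$.

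For the converse, suppose $X(x) \in \target(\cG)$. Define $F \colon X \to \potne{V(\cG)}$ by $F(s) \ceq X(x)$ for every $s \in X$, and set $Y \ceq X[x' \mapsto F]$. Then $Y(x') = X(x) = Y(x)$, which gives $\cG \models_Y x' \incl x$. For the guarded atom $xx \incl xx'$, given $s \in Y$ with $s(x) = a$, pick any $t \in X$ with $t(x) = a$ (which exists since $a \in X(x)$) and observe that $s' \ceq t[x' \mapsto a] \in Y$ satisfies $s'(xx') = (a,a) = s(xx)$. Finally, to verify $\cG \models_Y \psitarget^G(x,x')$, by Lemma \ref{lem: x-components of a team} it suffices to check $\cG \models_{Y\res{x=a}} \psitarget(x')$ for each $a \in Y(x)$, and this follows from Claim \ref{claim: psitarget} together with the identity $Y\res{x=a}(x') = X(x) \in \target(\cG)$.

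The main obstacle to watch for is the bookkeeping in the forward direction: one must be careful that the $U_a$ really cover $X(x)$ (which uses the guarded $xx \incl xx'$ to ensure $a \in U_a$) and really sit inside $X(x)$ (which uses $x' \incl x$), so that the equality $X(x) = \bigcup_a U_a$ is strict and Theorem \ref{thm: ugames are union closed} can be invoked. The construction of $F$ in the backward direction is by contrast essentially forced by wanting $Y\res{x=a}(x')$ to be the given target set on every fiber.
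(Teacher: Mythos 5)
Your proof is correct and follows essentially the same route as the paper: decompose the team into the fibers $Y\res{x=a}$, extract for each $a$ a target set $U_a$ with $a \in U_a \subseteq X(x)$ (the paper packages this step into Lemma \ref{lem: res inex components - normalform} and Claim \ref{claim: zeta unguarded}, while you unpack the conjuncts directly via Lemma \ref{lem: x-components of a team} and Claim \ref{claim: psitarget}), and glue with Theorem \ref{thm: ugames are union closed}. The only omission is the trivial case $X = \emptyset$, which the paper treats separately since Claim \ref{claim: psitarget} is stated for non-empty teams; your argument degenerates correctly there anyway because $\emptyset \in \target(\cG)$ for every union game.
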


    \begin{proof}
        For $X = \emptyset$ the claim is true, because $\emptyset \in \target(\cG)$ is true
        for every union game and $\cG \models_\emptyset \theta_\target$ is due to the empty team property.
        Let $X \neq \emptyset$.
        Applying Lemma \ref{lem: res inex components - normalform} immediately yields the equivalence
        of the following statements:
        \begin{enumerate}
            \item $\cG \models_X \theta_\target = \E x' (x' \incl x \land \zeta(x,x'))$ \label{claim: thetatarget is correct - (a)}
            \item There is a function $F \colon X \to \potne{V}$ such that \label{claim: thetatarget is correct - (b)}
            \begin{enumerate}
                \item $F(s) \subseteq X(x)$ for every $s \in X$, and \label{claim: thetatarget is correct - (b)(i)}
                \item $\cG \models_{X_t} \zeta'(x,x')$ where $X_t \ceq X[x' \mapsto F]\res{x = t}$ for every $t \in X(x)$. \label{claim: thetatarget is correct - (b)(ii)}
            \end{enumerate}
        \end{enumerate}
        Next we will prove that the following propositions are also equivalent:
        \begin{enumerate} \setcounter{enumi}{2}
            \item For every $t \in X(x)$ exists a winning strategy $\cS_t$ with $t \in \target(\cS_t) \subseteq X(x)$. \label{claim: thetatarget is correct - (c)}
            \item $X(x) \in \target(\cG)$. \label{claim: thetatarget is correct - (d)}
        \end{enumerate}
        \begin{proof}
            If $X(x) \in \target(\cG)$, then there is a strategy $\cS$ with $\target(\cS) = V(\cS) \cap T = X(x)$ which
            in particular implies that $t \in \target(\cS) \subseteq X(x)$ for every $t \in X(x)$.

            For the converse direction assume (\ref{claim: thetatarget is correct - (c)}).
            Since $\cG$ is a union game, we are allowed to use Theorem \ref{thm: ugames are union closed}
            to combine the family $(\cS_t)_{t\in X(x)}$ into a single winning strategy $\cS$ with $\target(\cS) = \bigcup_{t \in X(x)} \target(\cS_t)$ which is, due to assumptions about $\target(\cS_t)$,
            equal to $X(x)$.
        \end{proof}

        So we have already established that $(\ref{claim: thetatarget is correct - (a)}) \iff (\ref{claim: thetatarget is correct - (b)})$ and $(\ref{claim: thetatarget is correct - (c)}) \iff (\ref{claim: thetatarget is correct - (d)})$, but our goal was to show that $(\ref{claim: thetatarget is correct - (a)}) \iff (\ref{claim: thetatarget is correct - (d)})$ which is exactly what Claim \ref{claim: thetatarget is correct} states.
        Thus, in order to complete our proof of Claim \ref{claim: thetatarget is correct}, we just have to verify the missing link $(\ref{claim: thetatarget is correct - (b)}) \iff (\ref{claim: thetatarget is correct - (c)})$.

        ``$(\ref{claim: thetatarget is correct - (b)}) \Longrightarrow (\ref{claim: thetatarget is correct - (c)})$'':
        Suppose that there is some function $F \colon X \to \potne{V}$ such that (\ref{claim: thetatarget is correct - (b)(i)}) and (\ref{claim: thetatarget is correct - (b)(ii)}) are true.
        So $\cG \models_{X_t} \zeta'(x,x')$ for every $t \in X(x)$ which, by Claim \ref{claim: zeta unguarded}, yields that
        $X_t(x) \subseteq X_t(x') \in \target(\cG)$.
        By definition of $X_t$ in (\ref{claim: thetatarget is correct - (b)(ii)}), we have $X_t(x) = (X[x' \mapsto F]\res{x = t})(x) = \{ t \}$
        and, consequently, $t \in X_t(x') \in \target(\cG)$ which, by definition of $\target(\cG)$,
        leads to the existence of winning strategies $\cS_t$ with $X_t(x') = \target(\cS_t)$ for every $t \in X(x)$.
        Because of (\ref{claim: thetatarget is correct - (b)(i)}) we can also conclude that $X_t(x') = \bigcup_{s \in X\res{x=t}} F(s) \subseteq X(x)$.
        As a result, we obtain $t \in \target(\cS_t) \subseteq X(x)$ for every $t \in X(x)$ as desired.

        ``$(\ref{claim: thetatarget is correct - (b)}) \Longleftarrow (\ref{claim: thetatarget is correct - (c)})$'': We assume now that for every $t \in X(x)$ there exists some winning strategy
        $\cS_t$ with $t \in \target(\cS_t) \subseteq X(x)$.
        Define $F \colon X \to \potne{V}$ as $F(s) \ceq \target(\cS_{s(x)})$ --- notice that $s(x) \in \target(\cS_{s(x)})$
        holds by assumption, so $F$ is indeed well-defined.
        Then (\ref{claim: thetatarget is correct - (b)(i)}) is true, because for every $s \in X$ we have also assumed that $\target(\cS_{s(x)}) \subseteq X(x)$.
        Because $F(s)$ depends only on $s(x)$ and we have $X_t(x) = \{ t\}$, it follows that $X_t(x') = \target(\cS_t)$.
        As a result, we have $X_t(x) = \{ t \} \subseteq \target(\cS_t) = X_t(x')$ from which immediately follows that
        $X_t(x) \subseteq X_t(x') \in \target(\cG)$ for every $t \in X(x)$. Thus, by Claim \ref{claim: zeta unguarded},
        we obtain $\cG \models_{X_t} \zeta'(x,x')$ for every $t \in X(x)$, which is exactly (\ref{claim: thetatarget is correct - (b)(ii)}).
    \end{proof}
\end{exa}

\noindent
Now we have already one component, the $x$-myopic formula $\theta_\target(x)$, that is needed for the alternative proof of Theorem \ref{thm: myopic eso translates into myopic inex}.
However, we still require an interpretation lemma for $\inexlogic$.
Towards this end, consider a $\tau$-formula $\phi(x_1, \dotsc, x_m) \in \inexlogic$
and some first-order interpretation $\cI = (\delta, \epsilon, (\psi)_{S \in \tau})$.
The formula $\phi^\cI$ is defined as in section 2, whereas the occurring in-/exclusion atoms are handled as follows:
\begin{itemize}
    \item $(v_1,\dotsc,v_\ell \incl w_1,\dotsc,w_\ell)^\cI \ceq \E \tv_1'\dots \tv'_\ell \big( \bigwedge^\ell_{i=1} \delta(\tv'_i) \land \epsilon(\tv_i, \tv'_i) \land \tv_1,\dotsc,\tv_\ell \incl \tw_1,\dotsc,\tw_\ell\big)$
    \item $(v_1,\dotsc,v_\ell \excl w_1,\dotsc,w_\ell)^\cI \ceq
    \A \tv'_1\dotsc\tv'_\ell \big( \big[ \bigwedge^\ell_{i=1} \delta(\tv'_i) \land \epsilon(\tv_i, \tv'_i) \big] \ra \tv'_1,\dotsc,\tv'_\ell \excl \tw_1,\dotsc,\tw_\ell\big)$
\end{itemize}
Let $\fB \cong \cI (\fA)$ with coordinate map $h\colon \delta^\fA \to B$.
A team $X$ over $\fA$ with $\dom(X) = \{ \tx_1,\dots,\tx_m \}$
is said to be well-formed, if every $s \in X$ is well-formed (w.r.t.~$\cI$), i.e.~$s(\tx_i) \in \delta^\fA$ for $i=1,\dotsc,m$.
For such a team, $h(X) \ceq \set{h\circ s}{s \in X}$ is a well-defined team over $\fB$ with $\dom(h(X)) = \{x_1, \dots, x_m\}$.
The following lemma can be proven by induction over $\phi$.

\begin{lem}[Interpretation Lemma for $\inexlogic$]
    \label{lem: interpretation lemma for inex}
    For every well-formed team $X$ over $\fA$ with $\dom(X) = \{ \tx_1, \dots, \tx_m \}$, holds $\fA \models_X \phi^\cI \iff \fB \models_{h(X)} \phi$.
\end{lem}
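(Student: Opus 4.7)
The plan is to proceed by structural induction on $\phi$, where the induction hypothesis is the full claim of the lemma (for every well-formed team with the appropriate domain). The subtle new ingredients compared to the classical Interpretation Lemma for $\fo$ are the two atomic cases for inclusion and exclusion atoms, together with the team-splitting and supplementation operations that appear in the disjunctive and existential cases; the other constructor cases reduce essentially to the first-order situation.

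For the base cases, first-order literals $\lambda$ will be handled by the flatness of $\fo$ together with the classical interpretation lemma (Lemma~\ref{lem: interpretation lemma for sigma^1_1}) applied to each $s \in X$. For an inclusion atom $v_1,\dotsc,v_\ell \incl w_1,\dotsc,w_\ell$, the equivalence $\fA \models_X (v_1,\dotsc,v_\ell \incl w_1,\dotsc,w_\ell)^\cI \iff \fB \models_{h(X)} v_1,\dotsc,v_\ell \incl w_1,\dotsc,w_\ell$ will be proved directly. In the forward direction, a supplementing function witnessing the existential quantifiers in the translation provides, for each $s \in X$, well-formed representatives $\tv'_i$ with $\fA \models \epsilon(s(\tv_i),\tv'_i)$; the subsequent team-inclusion over $A$ then pushes forward through $h$ to the required inclusion over $B$. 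In the backward direction, since $X$ is well-formed, every $s(\tv_i)$ already lies in $\delta^\fA$, so the witnesses can be chosen as $F_i(s) \ceq \{s(\tv_i)\}$, and reflexivity of $\epsilon$ on $\delta^\fA$ together with the $\fB$-side inclusion gives the required extension over $A$. The exclusion atom is analogous, with a universal quantifier replacing the existential one.

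For the compound cases, conjunction is immediate. For disjunction, the forward direction just passes the decomposition $X = Y \cup Z$ through $h$; the backward direction, given a decomposition $h(X) = Y_0 \cup Z_0$, lifts it by setting $Y \ceq \{s \in X : h \circ s \in Y_0\}$ and $Z \ceq \{s \in X : h \circ s \in Z_0\}$, both of which are well-formed with $h(Y) = Y_0$ and $h(Z) = Z_0$. The universal case reduces to the observation that $h$ maps $\delta^\fA$ onto $B$, so $h(X[\tx \mapsto \delta^\fA]) = h(X)[x \mapsto B]$, and the relativization to $\delta$ in the translation of $\A x$ precisely captures this.

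The main technical obstacle will be the existential quantifier case. On the $\fA$-side, $\fA \models_X (\E x \psi)^\cI$ yields a function $F \colon X \to \potne{A^k}$ with $F(s) \subseteq \delta^\fA$; to produce the required $F' \colon h(X) \to \potne{B}$ on the $\fB$-side, one must be careful when $h$ is not injective, and I will define $F'(s_0) \ceq \bigcup \{h(F(s)) : s \in X,\ h \circ s = s_0\}$, then verify $h(X[\tx \mapsto F]) = h(X)[x \mapsto F']$ by a direct set-theoretic computation. In the reverse direction, given $F' \colon h(X) \to \potne{B}$, one chooses for each $s \in X$ and each $b \in F'(h \circ s)$ an $\epsilon$-representative $a \in \delta^\fA$ with $h(a) = b$, obtaining $F$ with the desired property. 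Once this correspondence is established, the inductive hypothesis for $\psi$ on the well-formed team $X[\tx \mapsto F]$ finishes the case.
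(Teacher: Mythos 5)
Your overall strategy --- structural induction with the statement quantified over all well-formed teams, the set-theoretic verification of $h(X[\tx \mapsto F]) = h(X)[x \mapsto F']$ in the existential case, and lifting a split of $h(X)$ back to a split of $X$ via $h^{-1}$ in the disjunctive case --- is exactly what the paper intends (it offers no more than ``by induction over $\phi$''), and those cases, as well as the literal, conjunction and universal cases, are handled correctly.

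However, the backward direction of the inclusion-atom case fails as written. Suppose $\fB \models_{h(X)} v_1,\dots,v_\ell \incl w_1,\dots,w_\ell$, i.e.\ $h(X)(\bar v) \subseteq h(X)(\bar w)$. If you witness the existential quantifiers by $F_i(s) \ceq \{ s(\tv_i) \}$, the supplemented team $Y$ has $Y(\tv'_1,\dots,\tv'_\ell) = X(\tv_1,\dots,\tv_\ell)$, so the remaining atom $\tv'_1,\dots,\tv'_\ell \incl \tw_1,\dots,\tw_\ell$ demands the \emph{literal} inclusion $X(\tv_1,\dots,\tv_\ell) \subseteq X(\tw_1,\dots,\tw_\ell)$ of tuples over $A$. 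This is strictly stronger than the hypothesis whenever $h$ is not injective: if $\epsilon$ identifies two distinct codes $\ta \neq \tb$, $s(\tv_1) = \ta$, and the only assignment realizing the matching $\bar w$-value in $\fB$ carries the code $\tb$, then the unrelativized inclusion over $A$ is violated although the $\fB$-side inclusion holds. The correct witness is the opposite one: for each $s \in X$ choose $s' \in X$ with $h(s'(\tw_i)) = h(s(\tv_i))$ for all $i$ (such $s'$ exists by the $\fB$-side inclusion) and set $F_i(s) \ceq \{ s'(\tw_i) \}$; then $\epsilon(\tv_i, \tv'_i)$ holds because $h$ reflects $\epsilon$, well-formedness of $X$ gives $\delta(\tv'_i)$, and $s'$ itself realizes the required $\tw$-value inside the supplemented team, so $\tv'_1,\dots,\tv'_\ell \incl \tw_1,\dots,\tw_\ell$ holds. (Note that the displayed translation of the inclusion atom must be read with the primed tuples on the left of the final atom, i.e.\ $\tv'_1,\dots,\tv'_\ell \incl \tw_1,\dots,\tw_\ell$; under the unprimed reading the translation collapses to the plain unrelativized inclusion and the lemma would be false for non-injective coordinate maps, so neither reading rescues your choice of witness.)
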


\noindent
Consider any team $Y$ over $\fB$ with $\dom(Y) = \{ x_1, \dots, x_m \}$.
It is an easy consequence of this lemma, that $\fB \models_Y \phi \iff \fA \models_{h^{-1}(Y)} \phi^\cI$
where $h^{-1}(Y) \ceq \bigcup_{t \in Y} h^{-1}(t) = \set{s}{h \circ s \in Y}$ can be viewed as the ``full'' team describing $Y$.
Of course, different tuples of the base structure $\fA$ may encode the same element of the target structure $\fB$, thus $Y$ usually contains
redundant assignments.
For the same reason, two (well-formed) teams $X\neq X'$ (with the same domain and co-domain $\fA$) may describe the same team over $\fB$.
We say that $X$ and $X'$ are \emph{$h$-similar}, if $h(X) = h(X')$.

\begin{lem}[Similarity Lemma] 
    \label{lem: similarity lemma}
    Let $X, X'$ be well-formed teams that are $h$-similar. Then: $\fA \models_X \phi^\cI \iff \fA \models_{X'} \phi^\cI$.
\end{lem}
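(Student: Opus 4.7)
The plan is to obtain the Similarity Lemma as an essentially immediate corollary of the Interpretation Lemma for $\inexlogic$ (Lemma \ref{lem: interpretation lemma for inex}). The underlying idea is that $X$ and $X'$ are nothing more than two different ``encodings'', over the base structure $\fA$, of one and the same team $h(X) = h(X')$ over the interpreted structure $\fB$. Hence the truth value of $\phi^\cI$ under either encoding is forced to agree with the truth value of the original formula $\phi$ over $\fB$ on this shared image team, and the two encodings must therefore agree with each other.

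Concretely I would apply Lemma \ref{lem: interpretation lemma for inex} twice: once to $X$ to get $\fA \models_X \phi^\cI \iff \fB \models_{h(X)} \phi$, and once to $X'$ to get $\fA \models_{X'} \phi^\cI \iff \fB \models_{h(X')} \phi$. Using the $h$-similarity assumption $h(X) = h(X')$, the right-hand sides of these two biconditionals coincide, and chaining the equivalences through the common middle statement $\fB \models_{h(X)} \phi$ yields $\fA \models_X \phi^\cI \iff \fA \models_{X'} \phi^\cI$, which is exactly the claim.

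I do not expect any genuine obstacle at this level. The only thing one must briefly verify is that the hypotheses of Lemma \ref{lem: interpretation lemma for inex} apply to $X'$ and not only to $X$: one needs $\dom(X') = \dom(X) = \{ \tx_1, \dots, \tx_m \}$ and well-formedness of $X'$, both of which are already guaranteed by the statement (the definition of $h$-similarity presupposes that $h \circ s$ makes sense for every $s \in X'$, i.e.~that $X'$ is well-formed, and that the domains agree so that $h(X) = h(X')$ is a meaningful equality of teams over $\fB$). Whatever actual work is involved in the correspondence between $\phi$ and $\phi^\cI$ --- in particular the handling of the auxiliary existential/universal variables $\tv'_i$ ranging over $\delta$ that appear in the translations of inclusion and exclusion atoms --- has already been absorbed into the inductive proof of Lemma \ref{lem: interpretation lemma for inex}, and so does not need to be revisited here.
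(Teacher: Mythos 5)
Your proposal is correct and is exactly the paper's own argument: the paper also applies the Interpretation Lemma to $X$ and to $X'$ and chains the two biconditionals through $\fB \models_{h(X)} \phi = \fB \models_{h(X')} \phi$. Nothing further is needed.
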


\begin{proof}
    Since $X,X'$ are $h$-similar, we have $h(X) = h(X')$.
    By the interpretation lemma, $\fA \models_X \phi^\cI \iff \fB \models_{h(X)} \phi \underset{h(X)=h(X')}\iff \fB \models_{h(X')} \phi \iff \fA \models_{X'} \phi^\cI$.
\end{proof}
\noindent
Now we have all the tools assembled that are needed to prove a stronger variant of Theorem \ref{thm: myopic inex - more general} resp.~Theorem \ref{thm: myopic eso translates into myopic inex}.

\begin{thm} \label{thm: union closed inex => myopic inex stronger}
    For every myopic \secorder-formula\footnote{It is well known that every $\inexlogic$-formula can be translated into an equivalent $\eso$-formula \cite{Gal12}. As we have already seen in Theorem \ref{thm: myopic <=> union-closed}, every union closed formula of existential \secorder logic is equivalent to some myopic $\eso$-formula. Therefore, we can w.l.o.g.~start with a myopic \secorder-formula.} $\phi(X)$ there exists an equivalent myopic formula $\mu(\tx)$ with only $8$ inclusion atoms and one exclusion atom.
\end{thm}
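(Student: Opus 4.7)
The strategy is to adapt the proof of Theorem \ref{thm: union closed eso => myopic stronger} to the team-semantics setting. Starting from the myopic \secorder-formula $\phi(X)$, I would form its associated union game $\cG \ceq \mc{\fA}{\phi}$ (Definition \ref{def: ugame for myopic Phi}). By Proposition \ref{prop: myopic (A, X) |= Phi <=> X in I(G^A_Phi)}, we have $(\fA, Y) \models \phi \iff Y \in \target(\cG)$ for every $Y \subseteq A^{\arity{X}}$. Next, analogously to Lemma \ref{lem: fo-interpretation for G_X}, I would construct a quantifier-free first-order interpretation $\cI$ of some arity $k$ with $\cI(\fA) \cong \cG$. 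The crucial twist compared with Lemma \ref{lem: fo-interpretation for G_X} is that I would reorder the coordinates of the encoding so that every target position $\ta \in A^{\arity{X}} = T(\cG)$ is represented by tuples of the form $(\ta, \tu, \tw)$, i.e., the actual target value sits at the prefix of the coordinate tuple, followed by an equality-type indicator $\tu$ satisfying a fixed quantifier-free formula $e_T(\tu)$ and some padding $\tw$.

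The second step is to invoke the $x$-myopic formula $\theta_\target(x)$ from the preceding example, which defines the target sets of union games by Claim \ref{claim: thetatarget is correct} and already uses exactly $8$ inclusion atoms and $1$ exclusion atom. Applying the Interpretation Lemma for $\inexlogic$ (Lemma \ref{lem: interpretation lemma for inex}) yields an $\inexlogic$-formula $\theta_\target^\cI$ over $\fA$; the translation rules for in-/exclusion atoms introduce only extra $\delta$- and $\epsilon$-constraints and transform each in-/exclusion atom into a single in-/exclusion atom of the same type, so the counts are preserved. Writing $\tx' = (\tv, \tu, \tw)$ with $|\tv| = \arity{X}$, I would then take
\[
\mu(\tv) \;\ceq\; \A \tu \A \tw\bigl(e_T(\tu) \ra \theta_\target^\cI(\tv, \tu, \tw)\bigr).
\]
The outer $\A\tu\A\tw$ (guarded by $e_T(\tu)$) extends any input team $X$ with $\dom(X)=\{\tv\}$ into the full-preimage team $X^\star$ over $\fA$ with $h(X^\star) = X(\tv)$ when $X(\tv)$ is viewed as a subset of $T(\cG)$. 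Combining Lemma \ref{lem: interpretation lemma for inex}, the Similarity Lemma \ref{lem: similarity lemma}, Claim \ref{claim: thetatarget is correct}, and Proposition \ref{prop: myopic (A, X) |= Phi <=> X in I(G^A_Phi)} then yields the equivalence
\[
\fA \models_X \mu(\tv) \iff X(\tv) \in \target(\cG) \iff (\fA, X(\tv)) \models \phi(X).
\]

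The main obstacle is confirming that $\mu$ really is $\tv$-myopic. The seven $x$-guarded inclusion atoms and the single $x$-guarded exclusion atom of $\theta_\target$ translate into atoms whose left and right sides are both prefixed by the tuple $\tx' = (\tv, \tu, \tw)$ representing $x$; since $\tv$ occurs first, these are $\tv$-guarded in the sense of Definition \ref{def: myopic in/ex-formula} (the remaining coordinates $(\tu,\tw)$ together with the extra coordinates introduced by the interpretation can be absorbed into the ``payload'' tuples of the guarded form). The one non-guarded atom $x' \incl x$ lies outside any disjunction in $\theta_\target$, and this placement is preserved in $\mu$; using the Similarity Lemma together with the encoding choice (so that the $\tv$-component of any $\delta^\fA$-tuple representing a target position is canonical), its translation can be brought into the permitted form $\ty \incl \tv$. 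The remaining work is routine inspection: $\tv$ is never quantified (only the fresh $\tu,\tw$ and inner variables introduced by the interpretation and $\theta_\target$ are), and since no step adds in-/exclusion atoms, the final formula retains exactly $8$ inclusion atoms and $1$ exclusion atom.
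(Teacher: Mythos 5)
Your overall architecture is the same as the paper's: build the union game $\mc{\fA}{\phi}$, use $\theta_\target(x)$ with its $8$ inclusion and $1$ exclusion atoms, pull it back through a first-order interpretation, and strip the superfluous coordinates. However, you have glossed over the central technical difficulty, and your dismissal of it contains a false claim. The interpretation's translation rules do \emph{not} produce atoms ``whose left and right sides are both prefixed by the tuple $\tx'=(\tv,\tu,\tw)$ representing $x$'': by the rules in Section 6.1, a guarded atom $xv\incl xw$ becomes (up to the $\delta$/$\epsilon$ bookkeeping) $\tx'\tv'\incl\tx\tw$, where $\tx'$ is a \emph{fresh} quantified $\epsilon$-copy of $\tx$. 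So the left side is guarded by $\tv'$ (the first $r$ coordinates of $\tx'$) and the right side by $\tv$ --- different variable tuples --- and no reordering of coordinates makes this syntactically $\tv$-guarded in the sense of Definition \ref{def: myopic in/ex-formula}. Repairing this is exactly the content of the paper's Claim \ref{claim: beta replacing}: one must prove semantically that, on teams whose $\tx$-values all encode target positions, each translated atom $\beta^\cI_i$ can be replaced by a genuinely $\ty$-guarded variant $\beta^\star_i$; the exclusion case in particular requires a nontrivial argument (constructing a modified assignment and deriving a contradiction), not routine inspection. Your proposal treats this as bookkeeping, so the key step of the proof is missing.

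A second, smaller but genuine defect: your final formula $\mu(\tv)=\A\tu\A\tw(e_T(\tu)\ra\theta^\cI_\target(\tv,\tu,\tw))$ is not $\tv$-myopic as written, because $e_T(\tu)\ra\cdots$ abbreviates $\operatorname{nnf}(\neg e_T(\tu))\lor(e_T(\tu)\land\cdots)$, which places the unguarded inclusion atom (the translation of $x'\incl x$) inside the scope of a disjunction, violating condition (\ref{def: myopic in/ex-formula - inclusion atoms}) of Definition \ref{def: myopic in/ex-formula}. The paper avoids this by quantifying the superfluous coordinates existentially, $\mu(\ty)\ceq\E\tu\E\tz(e_T(\tu)\land\theta^\star_\target(\tu,\ty,\tz))$, so that only a conjunction sits above that atom; the $T$-expansion/Similarity-Lemma argument then shows this existential version is still correct. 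Your semantic chain of equivalences is otherwise sound, but both of these points need to be fixed before the constructed formula actually lies in the myopic fragment.
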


\begin{proof}
    Let $\phi(X)$ be a myopic $\eso$-formula.
    We are going to find a $\ty$-myopic formula $\mu(\ty) \in \inexlogic$ with $(\fA, X(\ty)) \models \phi(X) \iff \fA \models_X \mu(\ty)$
    for every $\tau$-structure $\fA$ and every team $X$ with $\ty \subseteq \dom(X)$.
    In the following let $\fA$ be a $\tau$-structure chosen arbitrarily.
    In Definition \ref{def: ugame for myopic Phi} we have defined the model-checking game $\cG \ceq \mc{\fA}{\phi}$ for $\fA$ and $\phi(X)$.
    By Proposition \ref{prop: myopic (A, X) |= Phi <=> X in I(G^A_Phi)}, we know for every \emph{relation} $X \subseteq A^r$ that:
    \begin{equation}
    (\fA, X)\models\phi(X) \iff X\in\target(\cG) \label{eq: relation <-> target}
    \end{equation}
    Due to Claim \ref{claim: thetatarget is correct} we may conclude for every \emph{team} $X$ with $x \in \dom(X)$:
    \begin{equation}
    \cG \models_X \theta_\target(x) \iff X(x) \in \target(\cG) \label{eq: team over game <-> target}
    \end{equation}
    By combining \eqref{eq: relation <-> target} and \eqref{eq: team over game <-> target}, we obtain that for every \emph{team} $X$ with $x\in \dom(X) $ over $\cG$:
    \begin{equation}
    \cG \models_X \theta_\target(x) \iff  (\fA, X(x))\models\phi(X) \label{eq: team <-> relation}
    \end{equation}
    Notice that $\cG \models_X \theta_\target(x)$ implies that $X(x) \subseteq T(\cG) = A^r$ and, hence, $X(x)$ is then actually a relation of the
    correct arity for the formula $\phi(X)$.
    By counting, it is easy to verify that exactly $8$ inclusion atoms and one exclusion atom occurs in $\theta_\target$.
    The formula $\mu(\tx)$ that we are going to build will result from $\theta_\target$ and use the same number of inclusion/exclusion atoms.

    Using the technique of Lemma \ref{lem: fo-interpretation for G_X}, it is possible to devise a (quantifier-free) interpretation
    $\cI = (\delta, \epsilon, \psi_{V_0}, \psi_{V_1}, \psi_{E}, \psi_{I}, \psi_{T}, \psi_{\Eexcl})$ such that $\cG \cong \cI(\fA)$ with coordinate map $h \colon \delta^\fA \ra V(\cG)$.
    This interpretation encodes a position of the game $\cG$ as a tuple $(\tu, \tv) \in A^{n+m}$ where the $n$-tuple
    $\tu$ has a certain equality type (indicating at which type of position we are encoding, e.g.~at which formula we are) while the $m$-tuple $\tv$
    stores certain values (e.g.~values of free variables and to which component the node belongs).
    More importantly, a position $\ta \in T(\cG) = A^r$ is described by the tuple $(\tu,\ta, \tb) \in A^{n+m}$ where $\tu$
    has equality type $e_T$ while $\tb \in A^{m-r}$ can be chosen arbitrarily.
    Also recall that every variable $v$ is replaced by an $(n+m)$-tuple $\tv$ of pairwise different variables.
    In particular, let $\tx = (\tu, \ty, \tz)$ where $\tu$ is a $n$-tuple, $\ty$ some $r$-tuple and $\tz$ an $(m-r)$-tuple.

    Every inclusion/exclusion atoms occurring in $\theta_\target$ has one of the following
    three possible forms where $v,w$ are some variables:
    \begin{itemize}
        \item $\beta_1(x,v,w) \ceq x v \excl x w$
        \item $\beta_2(x,v,w) \ceq xv \incl xw$
        \item $\beta_3(x,v) \ceq v \incl x$
    \end{itemize}
    Notice that the only inclusion atom of the form of $\beta_3$ is in $\theta_\target$ not within the scope of a disjunction (it is $x' \incl x$ right after
    the existential quantifier).
    In $\theta_\target^\cI$, these formulae are replaced by:
    \begin{itemize}
        \item $\beta^\cI_1(\tx,\tv,\tw) \ceq \A \tx' \A \tv' ([\delta(\tx') \land \delta(\tv') \land \epsilon(\tx, \tx') \land \epsilon(\tv, \tv')] \ra \tx' \tv' \excl \tx \tw)$
        \item $\beta^\cI_2(\tx,\tv,\tw) \ceq \E \tx' \E \tv' (\delta(\tx') \land \delta(\tv') \land \epsilon(\tx, \tx') \land \epsilon(\tv, \tv') \land \tx' \tv' \incl \tx \tw)$
        \item $\beta^\cI_3(\tx,\tv) \ceq \E \tv' (\delta(\tv') \land \epsilon(\tv, \tv') \land \tv' \incl \tx)$
    \end{itemize}
    Clearly, these formulas are not allowed in $\ty$-myopic formulae, because the occurring inclusion/exclusion atoms are not $\ty$-guarded.
    However, we know that the tuple $\tx = (\tu, \ty, \tz)$ is used in $\theta^\cI_\target$ to store elements from
    $h^{-1}(T(\cG)) = \set{(\tv, \ta, \tb) \in A^{n+m}}{\fA \models e_T(\tu), \ta \in A^r, \tb \in A^{m-r} }$,
    because whenever a team $Y$ interprets $\tx$ by values encoding different game positions, it follows
    that $h(Y)(x)$ cannot be a target set of a winning strategy, so
    Claim \ref{claim: thetatarget is correct} leads to $\cG \nmodels_{h(Y)} \theta_\target(x)$ and then, by the Interpretation Lemma (Lemma \ref{lem: interpretation lemma for inex}), $\fA \nmodels_Y \theta_\target^\cI$.
    So for every team $X$ which is well-formed (w.r.t.~$\cI$) and satisfies $\fA \models_X \theta^\cI_\target$ we must have that $\fA \models_X \psi_T(\tx) = e_T(\tu)$.
    This observation enables us to define versions of $\beta^\cI_i$ that are allowed in $\ty$-myopic formulae:
    \begin{itemize}
        \item $\beta^\star_1(\ty, \tv, \tw) \ceq \A \tv' ([\delta(\tv') \land \epsilon(\tv, \tv')] \ra \ty \tv' \excl \ty \tw)$
        \item $\beta^\star_2(\ty, \tv, \tw) \ceq \E \tv' (\delta(\tv') \land \epsilon(\tv, \tv') \land \ty \tv' \incl \ty \tw)$
        \item $\beta^\star_3(\ty, \tv) \ceq e_T(\tv_{(1)}) \land \tv_{(2)} \incl \ty$ where $\tv = \tv_{(1)} \tv_{(2)} \tv_{(3)}$
        and $|\tv_{(1)}| = n$, $|\tv_{(2)}| = r$ and $|\tv_{(3)}| = m-r$.
    \end{itemize}

    \begin{clm} \label{claim: beta replacing}
        For every team $X$ with $\dom(X) = \{ \tx, \tu, \tw \}$ with $X(\tx) \subseteq h^{-1}(T)$ and every $i=1,2,3$ holds
        $\fA \models_X \beta^\cI_i \iff \fA \models_X \beta^\star_i$.
    \end{clm}
    \begin{proof}
        $i = 1$:
        Let $X' \ceq X[\tx' \mapsto \delta^\fA, \tv' \mapsto \delta^\fA]\res{\epsilon(\tx, \tx') \land \epsilon(\tv, \tv')}$.
        Then we can observe that
        \[
            \fA \models_X \beta^\cI_1 \iff \fA \models_{X'} \tx' \tv' \excl \tx \tw \iff \fA \models_{X'} \tx' \tv' \excl \tx \tw \overset{(!)}\iff \fA \models_{X'} \ty \tv' \excl \ty \tw \iff \fA\models_{X} \beta_1^\star,
        \]
        but the equivalence of  $\fA \models_{X'} \tx' \tv' \excl \tx \tw$ and $\fA \models_{X'} \ty \tv' \excl \ty \tw$ requires proof.
        Let $\tx' \ceq (\tu', \ty', \tz')$ where $|\tu'|  = |\tu|, |\ty'|  = |\ty|$ and $|\tz'|  = |\tz|$.
        Because of $\fA \models_{X'} \epsilon(\tx, \tx')$ and $X'(\tx) \subseteq h^{-1}(T)$, we have that
        $s(\ty) = s(\ty')$ for every $s \in X'$.
        Now, we prove the two directions of $\fA \models_{X'} \tx' \tv' \excl \tx \tw \iff \fA \models_{X'} \ty \tv' \excl \ty \tw$ separately:

        ``$\Leftarrow$'':
        First, assume that $\fA \models_{X'} \ty \tv' \excl \ty \tw$.
        It follows that $s_1(\ty' \tv') = s_1(\ty \tv') \neq s_2(\ty \tw)$ for every $s_1,s_2 \in X'$.
        Because $\ty$ and $\ty'$ are subtuples of $\tx = (\tu, \ty, \tz)$ resp.~$\tx' = (\tu', \ty', \tz')$,
        this implies that $s_1(\tx' \tv') \neq s_2(\tx \tw)$ for every $s_1,s_2 \in X'$.
        Hence, $\fA \models_{X'} \tx' \tv' \excl \tx \tw$.

        ``$\Rightarrow$'':
        Now let $\fA \models_{X'} \tx' \tv' \excl \tx \tw$.
        Towards a contradiction assume that $\fA \nmodels_{X'} \ty \tv' \excl \ty \tw$.
        Then there are assignments $s_1, s_2 \in X'$ with $s_1(\ty \tv') = s_2(\ty \tw)$.
        Since $X(\tx) \subseteq h^{-1}(T)$ it follows that $\fA \models e_T(s_1(\tu)) \land e_T(s_2(\tu))$.
        Because we also have $\fA \models_{X'} \epsilon(\tx, \tx')$,
        it must be the case that $\fA \models_{s_1} e_T(\tu) \land e_T(\tu') \land \ty = \ty'$.
        Consider $s'_1 \ceq s_1[\tu' \mapsto s_2(\tu), \tz' \mapsto s_2(\tz)]$.
        $s_1$ and $s'_1$ only differ on $\tu'$ and $\tz'$, but both still encode the equality type $e_T$ in $\tu'$
        while the values of $\tz'$ are irrelevant.
        So we still have $\fA \models_{s'_1} \delta(\tx') \land \epsilon(\tx, \tx')$, implying that $s'_1 \in X'$.
        But then we have $s'_1(\tx') = s'_1(\tu', \ty', \tz') = (s'_1(\tu'), s'_1(\ty'), s'_1(\tz')) = (s_2(\tu), s'_1(\ty'), s_2(\tz))$
        and, because of $s'_1(\ty') = s'_1(\ty) = s_1(\ty) = s_2(\ty)$, we even have $(s_2(\tu), s'_1(\ty'), s_2(\tz)) = s_2(\tx)$
        proving that $s'_1(\tx') = s_2(\tx)$. Because we also have $s'_1(\tv') = s_2(\tw)$, this leads to
        $s'_1(\tx' \tv') = s_2(\tx \tw)$, which is impossible due to $\fA \models_{X'} \tx' \tv' \excl \tx \tw$. Contradiction!
        Therefore, $\fA \models_{X'} \ty \tv' \excl \ty \tw$ must be true.

        $i = 2$:
        ``$\Leftarrow$'':
        First let $\fA \models_X \beta_2^\star$. Then there is a function $F \colon X \to \potne{A^{n+m}}$ such that
        $\fA \models_{X'} \delta(\tv) \land \epsilon(\tv, \tv') \land \ty \tv' \incl \ty \tw$ for $X' \ceq X[\tv' \mapsto F]$.
        Due to $\fA \models_{X'} \ty \tv' \incl \ty \tw$, for every assignment $s \in X'$ there exists some $\tilde{s} \in X'$
        with $s(\ty \tv') = \tilde{s}(\ty \tw)$.
        Now consider $Y \ceq \set{s[\tx' \mapsto \tilde{s}(\tx)]}{s \in X'}$.
        Then we have that $\fA \models_Y \delta(\tx') \land \delta(\tv') \land \epsilon(\tx, \tx') \land \epsilon(\tv, \tv') \land \tx' \tv' \incl \tx \tw$ which proves that $\fA \models_X \beta^\cI_2$.

        ``$\Rightarrow$'':
        Now let $\fA \models_X \beta_2^\cI$.
        Then there is a function $F \colon X \to \potne{A^{2(n+m)}}$ such that
        $\fA \models_{X'} \delta(\tx') \land \delta(\tv') \land \epsilon(\tx, \tx') \land \epsilon(\tv, \tv') \land \tx' \tv' \incl \tx \tw$
        where $X' \ceq X[\tx'\tv' \mapsto F]$.
        Since $\fA \models_{X'} \tx'\tv' \incl \tx\tw$ it follows that $\fA \models_{X'} \ty'\tv' \incl \ty\tw$ (because $\ty, \ty'$ are subtuples of $\tx, \tx'$).
        Due to $X(\tx) \subseteq h^{-1}(T)$, $\fA \models_{X'} \delta(\tx') \land \epsilon(\tx, \tx')$ implies that
        $\fA \models_{X'} \ty = \ty'$. So, $\fA \models_{X'} \ty'\tv' \incl \ty\tw$ is equivalent to $\fA \models_{X'} \ty\tv' \incl \ty\tw$.
        Hence, we have that $\fA \models_{X'} \delta(\tv') \land \epsilon(\tv, \tv') \land \ty \tv' \incl \ty \tw$ which implies that
        $\fA \models_{X} \beta^\star_2$.

        $i = 3$:
        ``$\Leftarrow$'':
        First let $\fA \models_X \beta_3^\star$.
        Then $X(\tv_{(1)}) \subseteq e^\fA_T$ and $X(\tv_{(2)}) \subseteq X(\ty)$.
        So, for every $s \in X$ there exists some $\tilde{s} \in X$ with $s(\tv_{(2)}) = \tilde{s}(\ty)$.
        Consider $X' \ceq \set{s[\tv' \mapsto \tilde{s}(\tx)]}{s \in X}$.
        Since $\tilde{s}(\tx) \in  X(\tx) \subseteq h^{-1}(T)$, we have that $\fA \models_{X'} \delta(\tv') \land \epsilon(\tv, \tv') \land \tv' \incl \tx$
        which proves that $\fA \models_X \beta^\cI_3$.

        ``$\Rightarrow$'':
        Let $\fA \models_X \beta_3^\cI$.
        Then there is a function $F \colon X \to \potne{A^{n+m}}$ such that $\fA \models_{X'} \delta(\tv') \land \epsilon(\tv, \tv') \land \tv' \incl \tx$ where $X' \ceq X[\tv' \mapsto F]$.
        Let $\tv' = \tv'_{(1)}\tv'_{(2)} \tv'_{(3)}$ where $|\tv'_{(1)}| = n, |\tv'_{(2)}| = r $ and $|\tv'_{(3)}| = m-r$.
        Since $X'(\tx) = X(\tx) \subseteq h^{-1}(T)$, we have $X'(\tu) \subseteq e^\fA_T$ and, due to $\fA \models_{X'} \tv' \incl \tx$,
        it follows that $X'(\tv'_{(1)}) \subseteq e^\fA_T$.
        Furthermore, $\fA \models_{X'} \tv' \incl \tx$ also implies that $\fA \models_{X'} \tv'_{(2)} \incl \ty$.
        Because of $\fA \models_{X'} \epsilon(\tv, \tv')$ and $X'(\tv'_{(1)}) \subseteq e^\fA_T$,
        we also have $X'(\tv_{(1)}) \subseteq e^\fA_T$ and $\fA \models_{X'} \tv_{(2)} = \tv'_{(2)}$.
        This is why, we have that $\fA \models_{X'} e_T(\tv_{(1)}) \land \tv_{(2)} \incl \ty = \beta^\star_3(\ty, \tv)$.
        This concludes the proof of Claim \ref{claim: beta replacing}.
    \end{proof}
    Let $\theta^\star_\target(\tx)$ be the formula that results from $\theta^\cI_\target$ by replacing every subformulae $\beta^\cI_i(\tx, \dotsc)$
    by $\beta^\star_i(\ty, \dotsc)$.
    Now all the inclusion/exclusion atoms occurring in $\theta^\star_\target(\tx)$ go conform with the conditions of Definition \ref{def: myopic in/ex-formula} but it is still not quite a $\ty$-myopic formula, since $\theta^\star_\target(\tx) = \theta^\star_\target(\tu, \ty, \tz)$ has too many free variables.
    In order to get rid of the superfluous variables $\tu, \tz$ we simply define $\mu(\ty) \ceq \E \tu \E \tz (e_T(\tu) \land \theta^\star_\target(\tu, \ty, \tz))$ which is now a $\ty$-myopic formula.\footnote{Notice that the interpretation does \emph{not} introduce
        any disjunction above $\beta_3^\star(\tx, \tz)$, because $x \incl z$ is only in the scope of existential quantifiers and we only a need a conjunction in order to guard the translated existential quantifier in $\vartheta^\cI_\target$.}
    For a team $X$ with $\dom(X) = \{ \ty \}$ we call the team $Y$ a $T$-expansion of $X$, if
    $\dom(Y) = \{ \tu, \ty, \tz \}$,
    $Y(\ty) = X(\ty)$ and $Y(\tu) \subseteq e^\fA_T$.
    Clearly, we have that $\fA \models_X \mu(\ty)$ if and only if $\fA \models_Y \theta^\star_\target(\tx)$ for some $T$-expansion of $X$
    and two $T$-expansions of $Y,Y'$ are always $h$-similar to each other, because $h(Y(\tu, \ty, \tz)) = X(\ty) = h(Y'(\tu, \ty, \tz))$
    already implies that $h(Y) = h(Y')$ which, by the Similarity Lemma (Lemma \ref{lem: similarity lemma}),
    leads to $\fA \models_Y \theta_\target^\cI \iff \fA \models_{Y'} \theta_\target^\cI$.

    The full $T$-expansion $X_T$ of $X$ is
    $X_T \ceq \set{s[\tu \mapsto \ta, \tz \mapsto \tc]}{s \in X, \ta \in e^\fA_T, \tc \in A^{m-r}}$ while
    the ``game version'' $X_\cG$ of $X_T$ is a team with $\dom(X_\cG) = \{ x \}$ defined by $X_\cG \ceq \set{s_\tb}{\tb \in X(\ty)}$
    where $s_\tb \colon \{x\} \ra T(\cG),\, x \mapsto \tb$.
    It is not difficult to verify that $X_\cG(x) = X(\ty)$ and $h(X_T) = X_\cG$.

    Putting everything together, we obtain:
    \begingroup
    \allowdisplaybreaks
    \begin{align*}
        \fA \models_X \mu(\ty) & \iff \fA \models_Y \theta^\star_\target(\tx) \text{ for some $T$-expansion of $X$} \;\; \text{ (by construction of $\mu(\ty)$)} \\
        & \iff \fA \models_{Y} \theta^\cI_\target(\tx) \text{ for some $T$-expansion of $X$}   \;\; \text{ (follows from Claim \ref{claim: beta replacing})} \\
        & \iff \fA \models_{X_T} \theta^\cI_\target(\tx)    \;\; \text{ ($X_T$ is $h$-similar to every $T$-expansion of $X$)} \\
        & \iff \cG \models_{X_\cG} \theta_\target(x)        \;\; \text{ (Interpretation Lemma (Lemma \ref{lem: interpretation lemma for inex}))} \\
        & \iff (\fA, X_\cG(x)) \models \phi(X)              \;\; \text{ (due to \eqref{eq: team <-> relation})} \\
        & \iff (\fA, X(\ty)) \models \phi(X)                \;\; \text{ (because $X_\cG(x) = X(\ty)$)}
    \end{align*}
    \endgroup
    Thus, $\fA \models_X \mu(\ty) \iff (\fA, X(\ty)) \models \phi(X)$ follows and our proof is completed.
\end{proof}

\begin{cor}
    Every $\tx$-myopic formula $\mu(\tx)$ is equivalent to a $\tx$-myopic formula that uses exactly six guarded inclusion atoms,
    one inclusion atom of the form $\tx' \incl \tx$ and one guarded exclusion atom.
\end{cor}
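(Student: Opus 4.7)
The plan is to apply Theorem \ref{thm: union closed inex => myopic inex stronger}, trimming exactly one inclusion atom from its construction by exploiting the fact that the intermediate model-checking game is a union game.

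First I would convert $\mu(\tx)$ into a myopic \secorder-formula. Since $\mu(\tx)$ is $\tx$-myopic it is union closed by Theorem \ref{thm: myopic formulae are union closed}; combining the standard translation of $\inexlogic$ into $\eso$ \cite{Gal12} with Theorem \ref{thm: myopic <=> union-closed} then yields an equivalent myopic $\eso$-formula $\phi(X)$. Next I would re-run the argument of Theorem \ref{thm: union closed inex => myopic inex stronger} on $\phi(X)$, but with the following refinement: the model-checking game $\cG = \mc{\fA}{\phi}$ of Definition \ref{def: ugame for myopic Phi} is a union game and therefore has empty initial set $\init = \emptyset$. Consequently the conjunct $\psiinit(y) = \A z(Iz \ra z \incl y)$ of $\psiwin$ is vacuously true on $\cG$, and so is its $x$-guarded version $\psiinit^G$ inside $\theta_\target$. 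Dropping $\psiinit^G$ therefore leaves Claim~\ref{claim: thetatarget is correct} intact on union games (as noted already in the footnote to Claim~\ref{claim: thetatarget is correct}).

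Counting the atoms of the trimmed formula is then straightforward bookkeeping: $\psimove^G$ contributes three guarded inclusion atoms (the two copies of $xz' \incl xz$ and the outer $xz \incl xy$), $\psiexcl^G$ contributes the single guarded exclusion $xy \excl xz$, $\psitarget^G$ adds two further guarded inclusions ($xx' \incl xy$ and $xy \incl xx'$), $\zeta$ adds the guarded $xx \incl xx'$, and the outer $\theta_\target$ contributes the unguarded $x' \incl x$. Hence the trimmed $\theta_\target$ uses exactly six guarded inclusion atoms, one inclusion atom of the form $x' \incl x$, and one guarded exclusion atom.

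Finally I would apply the interpretation-based replacement from the proof of Theorem \ref{thm: union closed inex => myopic inex stronger}: the maps $\beta_1^\star$, $\beta_2^\star$ and $\beta_3^\star$ each replace an atom by a single atom of the same flavour. In particular $\beta_3^\star(\ty,\tv) \ceq e_T(\tv_{(1)}) \land \tv_{(2)} \incl \ty$ contributes one inclusion atom of the required shape $\tx' \incl \tx$, the accompanying $e_T$-literal being first-order and thus uncounted. Therefore the atom counts carry over verbatim to the resulting $\tx$-myopic formula $\mu'(\tx)$, which uses exactly six guarded inclusion atoms, one inclusion atom of the form $\tx' \incl \tx$, and one guarded exclusion atom. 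The only step requiring genuine care is the verification that removing $\psiinit^G$ really is semantically innocuous on union games, but this is immediate from $\init = \emptyset$ in Definition \ref{def: ugame}; the rest is a faithful re-execution of the existing proof.
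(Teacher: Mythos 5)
Your proposal is correct and follows essentially the same route as the paper: translate the $\tx$-myopic formula to a myopic $\eso$-formula, re-run the construction of Theorem \ref{thm: union closed inex => myopic inex stronger}, and exploit $\init = \emptyset$ in union games to discard $\psiinit$ (the paper replaces it by a tautology, which is the same move), after which the atom count drops from eight inclusion atoms to the stated six guarded ones plus the single $\tx' \incl \tx$. Your explicit bookkeeping of where each atom comes from and the observation that the $\beta_i^\star$ replacements preserve atom counts are details the paper's one-line proof leaves implicit, but the argument is the same.
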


\begin{proof}
    Every $\tx$-myopic formula $\mu(\tx)$ can be transformed into a equivalent myopic $\eso$-formula $\mu'(X)$
    and the construction used in the alternative proof produces exactly as many atoms as specified, if one replaces
    the unneeded subformula $\psiinit$ by any tautology (this does not harm, because union games do not have initial vertices, so $\psiinit$ was trivially satisfied).
\end{proof}

\subsection{Optimality of the Myopic Fragment of Inclusion-Exclusion Logic}

One might ask whether the restrictions of Definition \ref{def: myopic in/ex-formula} are actually imperative to capture the union closed fragment.
In this section, we will show that neither condition can be dropped and that every single atom of Definition \ref{def: myopic in/ex-formula}
is required to express all union closed properties.

We start by showing that neither condition can be dropped.
First of all, it is pretty clear that exclusion atoms have to be $\tx$-guarded, because
$x_1 \excl x_2$ is not guarded and obviously not closed under unions.
Furthermore, it is clear that the variables among $\tx$ must not be quantified.
This points out the necessity of conditions (\ref{def: myopic in/ex-formula - quantifiers}) and (\ref{def: myopic in/ex-formula - exclusion atoms})
of Definition \ref{def: myopic in/ex-formula}.
In the next example we demonstrate that neither restriction of condition (\ref{def: myopic in/ex-formula - inclusion atoms}) can be dropped.

\begin{figure}
    \begin{center}
        \begin{tikzpicture}[>=stealth, shorten >=0.25pt,align=center,minimum size=0, node distance=0cm, minimum height = 0.1 cm, minimum width = 0.1 cm]
        \newcommand{\length}{1.3}
        \newcommand{\hlength}{\length/2};
        \begin{scope}[shift={(-3.75,0)}]
        \newcommand{\labeldistance}{0.05}
        \newcommand{\f}{0.6};

        \tikzstyle{mynode} = [draw,circle,fill,inner sep=0,minimum size=7];

        \node[mynode] (a) at (-\hlength,0) {};
        \node[mynode] (b) at ($ (0,{sqrt(3*\hlength*\hlength)}) $){};
        \node[mynode] (c) at (+\hlength,0) {};

        \node[mynode] (aP1) at ($ {1 + \f}*(-\hlength,0) - \f*(+\hlength,0) $){};
        \node[mynode] (aP2) at ($ {1 + \f}*(-\hlength,0) - \f*(0,{sqrt(3*\hlength*\hlength)}) $){};
        \node[mynode] (bP) at ($ {1 + \f}*(0,{sqrt(3*\hlength*\hlength)}) - \f*(+\hlength,0) $){};
        \node[mynode] (bQ) at ($ {1 + \f}*(0,{sqrt(3*\hlength*\hlength)}) - \f*(-\hlength,0) $){};

        \node[mynode] (cQ1) at ($ {1 + \f}*(+\hlength,0) - \f*(-\hlength,0) $){};
        \node[mynode] (cQ2) at ($ {1 + \f}*(+\hlength,0) - \f*(0,{sqrt(3*\hlength*\hlength)}) $){};

        \node [right = \labeldistance of a] {$a$};
        \node [below = \labeldistance of b] {$b$};
        \node [left  = \labeldistance of c] {$c$};

        \node [left  = \labeldistance of aP1] {$a_+$};
        \node [below = \labeldistance of aP2] {$a_-$};
        \node [above = \labeldistance of bP] {$b_+$};
        \node [above = \labeldistance of bQ] {$b_-$};

        \node [right = \labeldistance of cQ1] {$c_+$};
        \node [below = \labeldistance of cQ2]  {$c_-$};

        \node [above = \labeldistance of aP1] {$P$};
        \node [left  = \labeldistance of aP2] {$P$};

        \node [left  = \labeldistance of bP] {$P$};
        \node [right = \labeldistance of bQ] {$Q$};

        \node [above = \labeldistance of cQ1] {$Q$};
        \node [right = \labeldistance of cQ2] {$Q$};

        \draw (a) edge[->]  (b); 
        \draw (c) edge[->]  (b);
        \draw (b) edge[loop right] (b);

        \draw (a) edge[->, dashed] (aP1);
        \draw (a) edge[->, dashed] (aP2);

        \draw (b) edge[->, dashed] (bP);
        \draw (b) edge[->, dashed] (bQ);

        \draw (c) edge[->, dashed] (cQ1);
        \draw (c) edge[->, dashed] (cQ2);
        \end{scope}
        \begin{scope}[shift={(3.75,{sqrt(3*\hlength*\hlength)/2})}]
        \newcommand{\h}{{sqrt(\length * \length / 2)}}
        \newcommand{\bender}{15}
        \newcommand{\labeldistance}{0.05}
        \newcommand{\f}{0.6};

        \tikzstyle{mynode} = [draw,circle,fill,inner sep=0,minimum size=7];

        \node[mynode] (a) at (-2 * \length, 0) {};
        \node[mynode] (b) at (0, 0) {};
        \node[mynode] (c) at ( 2 * \length, 0) {};

        \node[mynode] (bP) at (-\length, 0) {};
        \node[mynode] (bM) at (+\length, 0) {};

        \node[mynode] (aP) at ($  (-2 * \length, 0) + (-\h, +\h)$) {};
        \node[mynode] (aM) at ($  (-2 * \length, 0) + (-\h, -\h)$) {};

        \node[mynode] (cP) at ($  ( 2 * \length, 0) + ( \h, +\h)$) {};
        \node[mynode] (cM) at ($  ( 2 * \length, 0) + ( \h, -\h)$) {};

        \node [above = \labeldistance of a] {$a$};
        \node [above = \labeldistance of b] {$b$};
        \node [above = \labeldistance of c] {$c$};

        \node [above = \labeldistance of aP] {$a_+$};
        \node [below = \labeldistance of aM] {$a_-$};

        \node [above = \labeldistance of bP] {$b_+$};
        \node [above = \labeldistance of bM] {$b_-$};

        \node [above = \labeldistance of cP] {$c_+$};
        \node [below = \labeldistance of cM] {$c_-$};

        \draw (b)  edge[->, bend right = \bender] (bP);
        \draw (b)  edge[->, bend left  = \bender] (bM);

        \draw (bP) edge[->, bend right = \bender] (b);
        \draw (bM) edge[->, bend left  = \bender] (b);

        \draw (bP) edge[->] (a);
        \draw (bM) edge[->] (c);

        \draw (a)  edge[->] (aP);
        \draw (a)  edge[->] (aM);

        \draw (c)  edge[->] (cP);
        \draw (c)  edge[->] (cM);

        \draw (aM)  edge[->, loop left] (aM);
        \draw (aP)  edge[->, loop left] (aP);

        \draw (cM)  edge[->, loop right] (cM);
        \draw (cP)  edge[->, loop right] (cP);
        \end{scope}
        \end{tikzpicture}
    \end{center}
    \caption{The structures $\fA$ and $\fB$. The structure $\fA = (V, E^\fA, F^\fA, P^\fA, Q^\fA)$ on the left side uses
    two different kinds of edges: the dashed edges belong to $F$, while the other are $E$-edges.
    Furthermore, $\fA$ exhibits two predicates $P,Q$. The structure $\fB = (V,E^\fB)$ depicted on the right is just a directed graph.
    Please notice that both structures are using the same universe $V$.}
    \label{fig: structures for evil disjunctions}
\end{figure}
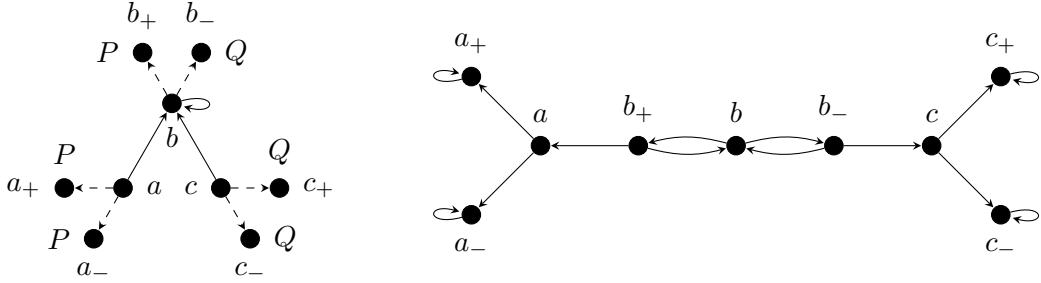

\begin{exa} \label{exam: evil disjunctions}
    Consider the structures $\fA$ and $\fB$ drawn in Figure \ref{fig: structures for evil disjunctions} and the following formulae:
    \begin{align*}
    \phi(x) \ceq\, & \E y \E z(Fxy \land Fxz \land xy \excl xz  \land [(Py \land \vartheta(x)) \lor (Qy \land \vartheta(x))]) \\
                   & \text{where }         \vartheta(x) \ceq \E v (Exv \land v \incl x) \\
    \psi(x) \ceq\, & \E y \E z(Exy \land Exz \land xy \excl xz  \land \E w (Eyw \land x \incl w) )
    \end{align*}
    Neither $\phi(x)$ nor $\psi(x)$ is $x$-myopic, because the inclusion atom $v \incl x$ from $\theta$
    occurs inside the scope of a disjunction (and it is not $x$-guarded), while the atom $x \incl w$ is neither $x$-guarded
    nor of the form that is allowed outside the scope of disjunctions, because $x$ appears on the wrong side of the inclusion atom.

    Let us take a moment to explain these formulae on an intuitive level.
    We will evaluate $\phi$ on $\fA$ in a team which assigns to $x$ values from $\{a,b,c\}$.
    There the formula states that one has to pick for every value of $x$ precisely
    one $F$-successor as a value for $y$, which is enforced by the exclusion atom.
    At $a$ and $c$ this choice is meaningless, because only the label ($P$ or $Q$) is relevant.
    Further, due to $\theta$, we must have made the same ``choice'' ($P$ or $Q$) for
    the vertex $b$, as we have made for the other vertices.
    In other words, the presence of $a$ compels to pick $b_+$ while the presence of $c$ forces $b_{-}$.
    Similarly, $\psi$, in the context of $\fB$, states that for every vertex
    among $\{a,b,c\}$ one successor can be chosen such that the neighbourhoods of the chosen successors cover the initial
    set of vertices.
    Notice that the relevant choice is made for the vertex $b$, which may move
    towards $a$, if $a$ is present in the team, or to $c$ otherwise.
    The other vertices found in the graphs are only there for technical reasons.

    For every $v \in V$ let $s_v \colon \{x\} \to V$ be the assignment with $s_v(x) \ceq v$.
    We define the teams $X_1 \ceq \{ s_a, s_b \}$, $X_2 \ceq \{ s_b, s_c \}$ and $X \ceq X_1 \cup X_2 = \{ s_a, s_b, s_c \}$.
    It is not difficult to verify that $\fA \models_{X_i} \phi(x)$ and $\fB \models_{X_i} \psi(x)$ for $i=1,2$ but
    $\fA \nmodels_X \phi(x)$ and $\fB \nmodels_X \psi(x)$.
    In particular, neither $\phi(x)$ nor $\psi(x)$ is closed under unions.
    This shows that the restrictions of Definition \ref{def: myopic in/ex-formula} are indeed necessary.
\end{exa}

\noindent
Thus the atoms allowed in Definition \ref{def: myopic in/ex-formula} are sufficient to capture the union closed fragment of $\inexlogic$.
On the contrary, one may ask whether the set of atoms given in Definition \ref{def: myopic in/ex-formula} is necessary.
Let us argue for all rules of Definition \ref{def: myopic in/ex-formula}.

Assume that all exclusion atoms are forbidden.
Then every formula is already in inclusion logic in which one cannot define every union closed property as was shown by Galliani and Hella \cite[p.~16]{GalHel13}.

If inclusion atoms were only allowed in the form $\tx\ty\incl\tx\tz$, that means the atoms $\ty\incl\tx$ are forbidden, the formulae become flat, as can be seen by considering Proposition \ref{prop: res inex components}, but not all union closed properties are flat.

The case where inclusion atoms of form $\tx\ty\incl\tx\tz$ are forbidden is a bit more delicate.
To prove that such a formula cannot express every union closed property consider the formula $\mu(x) = \E z(z\incl x \wedge \A y(Exy\ra xy\incl xz))$, where $\tau=\{E\}$ for a binary predicate symbol $E$.
This formula axiomatises the set of all teams $X$ over a graph $G=(V,E)$ such that whenever $v\in X(x)$ and $(v,w) \in E$, then already $w \in X(x)$.
The formula obviously describes a union closed property.
Consider the graph $G$:
\tikz[baseline=-3]{\node[inner sep=1] (a) {$a$};\node[inner sep=1] (b) at (-1,0) {$b$}; \node[inner sep=1] (c) at (1,0) {$c$}; \draw (a) edge[->] (b) (a) edge[->] (c);}.
Here, $G\models_X\mu (x)$ for precisely those teams $X$ that satisfy ``$a\in X(x)$ implies $b,c\in X(x)$''.
For every $v \in V(G)$ let $s_v$ be the assignment $x \mapsto v$ and let $X_v \ceq \{ s_v \}$.
Furthermore, we define $X_{abc} \ceq \{s_a, s_b, s_c\}$.

Let $\psi(x)$ be an $x$-myopic formula in which the construct $x\ty\incl x\tz$ does not appear.
So the only inclusion atoms occurring in $\psi(x)$ are of the form $z \incl x$, which are not allowed
in the scope of disjunctions.
Notice that $z$ cannot be universally quantified, as the team $X_b=\{s_b\}$ satisfies the described property, but not $\A z(z\incl x)$.
Thus we may assume without loss of generality that $\psi(x)$ has the form $\E z(z\incl x\wedge \psi'(x,z))$, where in $\psi'(x,z)$ no atom of the kind $z'\incl x$ occurs.
We want to remark that the following argumentation can be adapted to the slightly more general case that multiple atoms of form $z\incl x$ occur, but for sake of simplicity we only deal with one such atom.
Let $\eta(x,z)$ be the unguarded version of $\psi'(x,z)$.
By Lemma \ref{lem: res inex components - normalform}, there is a function $F \colon X_{abc} \to \potne{V(G)}$ such
that $F(s) \subseteq X_{abc}(x) = V(G)$ for $s \in X_{abc}$ and $G \models_{X_{abc}[z \mapsto F] \res {x = v}} \eta$ for every $v \in X_{abc}(x)$.
Please notice that $X_{abc}[z \mapsto F] \res {x = v} = X_v[z \mapsto F]$.
Moreover, because in $\eta(x,z)$ no inclusion atom occurs it is downwards closed.
Assume $a\in F(s_a)$.
By downwards closure of $\eta(x,z)$ we obtain $G\models_{X_a[z\mapsto a]}\eta$, which, by Lemma \ref{lem: res inex components - normalform}, implies that $G\models_{X_a}\psi$ contradicting our assumption that $\psi$ describes the desired property.
Otherwise, because of symmetry, $b$ is in $F(s_a)$, and hence $G\models_{X_a[z\mapsto b ]}\eta$.
Additionally, since $G\models_{X_b}\psi$ we know, by Lemma \ref{lem: res inex components - normalform}, that $G\models_{X_b[z\mapsto b]}\eta$.
Together this implies $G\models_{X_{ab}[z\mapsto b] \res{x=v}}\eta$ for $v=a,b$ and, due to Lemma \ref{lem: res inex components - normalform}, we get $G\models_{X_{ab}}\psi$ which is again in conflict with our assumption about $\psi$ describing the desired property.

\section{An Atom capturing the Union Closed Fragment}
\label{sec: atom}

The present work was motivated by a question of Galliani and Hella in 2013 \cite{GalHel13}.
Galliani and Hella asked whether there is a union closed atomic dependency notion $\alpha$ that is definable in existential \secorder logic such that $\fo(\alpha)$ corresponds precisely to all union closed properties of $\inexlogic$.
In \cite{GalHel13} they have already shown that inclusion logic does not suffice, as there are union closed properties not definable in it.
Moreover, they have established a theorem stating that every union closed atomic property that is definable in first-order logic (where the formula has access to the team via a predicate) is expressible in inclusion logic.
Thus, whatever atom characterises all union closed properties of $\inexlogic$ must axiomatise an inherently \secorder property.

Intuitively speaking, as we have seen in Section \ref{sec:union games}, solving union games is a complete problem for the class $\union$.
Therefore, a canonical solution to this question is to propose an atomic formula that defines the winning regions in a union game.
Towards this we must describe how a game can be encoded into a team.
This is not as straightforward as one might think, because there is a technical pitfall we need to avoid.
The union of two teams describing union games, each won by player 0, might encode a game won by player 1, but by union closure it must satisfy the atomic formula.

We encode union games in teams by using variable tuples for the respective components, where we also encode the complementary relations in order to ensure that the union of two different games cannot form a different game.
For $k\in\bN$ let $\vark$ be the set of distinct $k$-tuples of variables $\{\tu,\tv_0,\tv_1,\tv,\tw,\tt,\tv_\mathsf{ex},\tw_\mathsf{ex},\tepsilon_1,\tepsilon_2,\tu^\complement,\tv^\complement,\tw^\complement,\tt^\complement,\tv^\complement_\mathsf{ex},\tw^\complement_\mathsf{ex},\tepsilon^\complement_1,\tepsilon^\complement_2\}$.

\begin{defi}
    \label{def:ugame inside team}
    Let $X$ be a team with $\vark\subseteq\dom(X)$ and codomain $A$.
    We define $\sim \ceq X(\tepsilon_1, \tepsilon_2)$ and $\fA^X \ceq (V, V_0, V_1, E, \init, T, \Eexcl)$ with the following components.
    \begin{multicols}{4}
        \begin{itemize}
            \item $V \ceq X(\tu)$
            \item $V_0 \ceq X(\tv_0)$
            \item $V_1 \ceq X(\tv_1)$
            \item $E \ceq X(\tv,\tw)$
            \item $\init \ceq \emptyset$
            \item $T \ceq X(\tt)$
            \item $\Eexcl\!\ceq X(\tv_\mathsf{ex},\tw_\mathsf{ex})$
        \end{itemize}
    \end{multicols}
    \noindent
    If the following consistency requirements are satisfied,
    then we define $\cG^A_X \ceq \fA^X\rsim$.
    \begin{enumerate}
        \item $X(\tu^\complement) = A^k \setminus V$ \label{def:ugame inside team(i)}
        \item $X(\tv^\complement,\tw^\complement) = (A^k\times A^k) \setminus E$ \label{def:ugame inside team(ii)}
        \item $X(\tt^\complement) = A^k \setminus T$
        \item $X(\tv_\mathsf{ex}^\complement,\tw_\mathsf{ex}^\complement) =( A^k\times A^k) \setminus \Eexcl$
        \item $X(\tepsilon_1^\complement, \tepsilon_2^\complement) = (A^k \times A^k) \setminus \sim$
        \item $V_0 = V\setminus V_1$ \label{def:ugame inside team(vi)}
        \item $\fA^X$ is a structure\footnote{This condition ensures that $E\subseteq V\times V$ and so forth.}. \label{def:ugame inside team(vii)}
        \item $\sim$ is a congruence on $\fA^X$.
        \item $\fA^X\rsim$ is a union game. \label{def:ugame inside team(ix)}
    \end{enumerate}
    \noindent
    Otherwise, if any of these requirements is not fulfilled, we let $\cG^A_X$ be undefined.
\end{defi}

\noindent
We call $X$ complete (w.r.t.~$A$), if $X(\ty) \cup X(\ty^\complement)$ is $A^k$ or $A^k\times A^k$ for every
$\ty \in  \{(\tu), (\tv, \tw), (\tt),$ $(\tv_{\mathsf{ex}}, \tw_{\mathsf{ex}}), (\tepsilon_1, \tepsilon_2) \}$ and $V = V_0 \cup V_1$,
and incomplete otherwise.
It is easy to observe that $\cG^A_X$ is undefined for every incomplete team $X$.
Furthermore complete subteams of teams describing a game actually describe the same game and the same congruence relation.

\begin{lem}
    \label{lem: complete subteams define the same game}
    Let $X,Y$ be teams with codomain $A$ and $\vark \subseteq \dom(X) = \dom(Y)$.
    If $X$ is complete, $X \subseteq Y$ and $\cG^A_Y$ is defined, then $\cG^A_X = \cG^A_Y$ and ${\sim_X} \ceq X(\tepsilon_1, \tepsilon_2) = Y(\tepsilon_1, \tepsilon_2) \eqc {\sim_Y}$.
\end{lem}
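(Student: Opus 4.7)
The plan is to show that $X$ and $Y$ already agree on every projection mentioned in Definition~\ref{def:ugame inside team}. Once this is established, $\fA^X=\fA^Y$ and $\sim_X=\sim_Y$ follow immediately, all nine consistency requirements transfer from $Y$ to $X$ so that $\cG^A_X$ is defined, and the lemma reduces to the literal identity $\cG^A_X=\fA^X\rsim_X=\fA^Y\rsim_Y=\cG^A_Y$.

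The core step is a pointwise argument that I would run for each
\[
\ty \in \{(\tu),\,(\tv,\tw),\,(\tt),\,(\tv_{\mathsf{ex}},\tw_{\mathsf{ex}}),\,(\tepsilon_1,\tepsilon_2)\}.
\]
Since $\cG^A_Y$ is defined, requirements (\ref{def:ugame inside team(i)})--(5) of Definition~\ref{def:ugame inside team} force $Y(\ty)$ and $Y(\ty^\complement)$ to be complementary subsets of $A^k$ (respectively $A^k\times A^k$). The hypothesis $X\subseteq Y$ immediately yields $X(\ty)\subseteq Y(\ty)$ and $X(\ty^\complement)\subseteq Y(\ty^\complement)$, so these two subsets are already disjoint in $X$. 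Completeness of $X$ adds that their union is the full ambient set. Combined with the complementarity of the $Y$-projections, this forces $X(\ty)=Y(\ty)$ and $X(\ty^\complement)=Y(\ty^\complement)$: any $a\in Y(\ty)$ lies outside $Y(\ty^\complement)\supseteq X(\ty^\complement)$, so completeness places $a$ in $X(\ty)$. An analogous argument, now using requirement (\ref{def:ugame inside team(vi)}), yields $X(\tv_0)=Y(\tv_0)$ and $X(\tv_1)=Y(\tv_1)$.

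With these equalities in hand, $\fA^X=\fA^Y$ holds on the nose, and the instance $\ty=(\tepsilon_1,\tepsilon_2)$ of the argument above is already the second conclusion $\sim_X=\sim_Y$. To finish I would just verify that $\cG^A_X$ is defined: conditions (\ref{def:ugame inside team(i)})--(\ref{def:ugame inside team(vi)}) transfer from $Y$ to $X$ by direct rewriting, since $X$ and $Y$ agree on every projection involved, while conditions (\ref{def:ugame inside team(vii)})--(\ref{def:ugame inside team(ix)}) are properties of $\fA^X$ (respectively $\sim_X$ and its quotient) alone and are therefore inherited verbatim. The two games are then literally equal.

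I do not anticipate a real obstacle — the content of the lemma is precisely that ``completeness of $X$ plus consistency of $Y$'' pins down the projections of $X$ on all relevant tuples uniquely. The only point worth some care is keeping the complement-tuples $\tu^\complement,\tv^\complement,\dots$ symmetric with their positive counterparts, which is what allows completeness alone (rather than a separate hypothesis on ${\sim_X}$) to pin down the congruence along with the rest of the carrier data.
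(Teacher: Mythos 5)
Your proposal is correct and follows essentially the same route as the paper's proof: it combines completeness of $X$, the inclusion of projections coming from $X\subseteq Y$, and the complementarity conditions holding for $Y$ to pin down all projections of $X$, then transfers the remaining consistency requirements. The only cosmetic difference is that you derive $X(\ty)=Y(\ty)$ directly by a pointwise argument, whereas the paper first establishes conditions (\ref{def:ugame inside team(i)})--(\ref{def:ugame inside team(vi)}) for $X$ and then proves the equality of projections by contradiction.
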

\begin{proof}
    Suppose that $X$ is complete, $X \subseteq Y$ and $\cG^A_Y$ is defined.
    First, we prove that $\cG^A_X$ is defined.
    Towards this end, we prove that $X$ satisfies the consistency requirements of Definition \ref{def:ugame inside team}.
    By completeness of $X$, we know already that $X(\tu) \cup X(\tu^\complement) = A^k$.
    Since $X \subseteq Y$ and $\cG^A_Y$ is defined, we have $X(\tu) \cap X(\tu^\complement) \subseteq Y(\tu) \cap Y(\tu^\complement) = \emptyset$.
    Thus, we have $X(\tu) \cup X(\tu^\complement) = A^k$ and $X(\tu) \cap X(\tu^\complement) = \emptyset$,
    which implies that $X(\tu^\complement) = A^k \setminus X(\tu)$.
    This proves condition (\ref{def:ugame inside team(i)}) of Definition \ref{def:ugame inside team}.
    The proof for (\ref{def:ugame inside team(ii)})-(\ref{def:ugame inside team(vi)}) is very analogous.

    Towards proving the remaining conditions (\ref{def:ugame inside team(vii)})-(\ref{def:ugame inside team(ix)}), it suffices
    to show that $\fA^X = \fA^Y$ and ${\sim_X} = {\sim_Y}$,
    because $\cG^A_Y$ is defined and thus the conditions (\ref{def:ugame inside team(vii)})-(\ref{def:ugame inside team(ix)}) must be true for $\fA^Y$ and $\sim_Y$.

    Thus, we need to prove that $X(\ty) = Y(\ty)$ for every tuple $\ty \in \{ (\tu), (\tv, \tw), (\tt), (\tv_{\cap}, \tw_{\cap}),$ $(\tepsilon_1, \tepsilon_2) \}$.
    Since the argumentation is very analogous for these different tuples, we prove this only for $\ty = \tu$.
    Towards a contradiction, assume that $X(\tu) \neq Y(\tu)$.
    Since $X$ and $Y$ are complete, we can conclude that $X(\tu^\complement) = A^k \setminus X(\tu)$ and $Y(\tu^\complement) = A^k \setminus Y(\tu)$.
    Since $X(\tu) \neq Y(\tu)$, we have that $Y(\tu) \setminus X(\tu) \neq \emptyset$ or $X(\tu) \setminus Y(\tu) \neq \emptyset$.
    In the first case, follows $\emptyset \neq Y(\tu) \setminus X(\tu) = Y(\tu) \cap (A^k \setminus X(\tu)) = Y(\tu) \cap X(\tu^\complement)
    \subseteq Y(\tu) \cap Y(\tu^\complement)$.
    In the second case, a similar line of thought leads to $\emptyset \neq X(\tu) \setminus Y(\tu) \subseteq Y(\ty) \cap Y(\tu^\complement)$.
    In both cases we have $Y(\tu) \cap Y(\tu^\complement) \neq \emptyset$  which contradicts $Y(\tu^\complement) = A^k \setminus Y(\tu)$.

    Therefore, all conditions of Definition \ref{def:ugame inside team} are fulfilled.
    Because of $\fA^X = \fA^Y$ and ${\sim_X} = {\sim_Y}$,
    it is even the case that $\cG^A_X = \fA^X_{/\sim_X} = \fA^Y_{/\sim_Y} = \cG^A_Y$.
\end{proof}

\noindent
Now let us show that union games are definable in plain first-order logic with team semantics in the sense of Definition \ref{def:ugame inside team}.

\begin{lem}
    \label{lem:ugames are definable in ts}
    Let $\phi(X)=\A\tx(X\tx\rightarrow\E\tR\phi'(X,\tR,\tx))$ be a myopic $\eso$-formula and $\psi(\vark, \tx)$ be a formula with team semantics (where
    $k$ is large enough such that the game $\mc{\fA}{\phi}$ can be encoded).
    There is a formula $\theta^\psi_\phi(\tx)$ such that $\fA\models_X\theta^\psi_\phi\iff\fA\models_Y\psi$ for some team $Y$ extending $X$ with $\cG^A_Y \cong \mc{\fA}{\phi}$ and $X(\tx)=Y(\tx)$, for every $\tau$-structure $\fA$.
\end{lem}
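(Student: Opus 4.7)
The plan is to exploit the first-order interpretability of the game $\cG \ceq \mc{\fA}{\phi}$ inside $\fA$. By (the adaptation to myopic $\phi$ of) Lemma \ref{lem: fo-interpretation for G_X}, there is a quantifier-free $k$-ary interpretation $\cI$ providing a first-order formula for every component of the game; I will denote these as $\alpha_V, \alpha_{V_0}, \alpha_{V_1}, \alpha_E, \alpha_T, \alpha_{\Eexcl}, \alpha_\sim$, together with their negations $\neg\alpha_V, \dots$ for the complementary tuples. The coordinate map $h\colon \alpha_V^\fA \to V(\cG)$ realises the isomorphism $\cI(\fA) \cong \cG$.

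I would then define
\[
\theta^\psi_\phi(\tx) \;\ceq\; \E\,\vark\bigl(\zeta(\vark)\;\land\;\psi(\vark, \tx)\bigr),
\]
where $\zeta(\vark)$ is a conjunction that, for each tuple $\ty \in \vark$ with associated defining formula $\alpha_\ty$, contains the two conjuncts $\alpha_\ty(\ty)$ and $\A\tz\bigl(\alpha_\ty(\tz)\ra\tz\incl\ty\bigr)$. By flatness of first-order logic under team semantics, the first conjunct enforces $Y(\ty) \subseteq \alpha_\ty^\fA$, while unwinding the semantics of $\A$ and the inclusion atom yields $\alpha_\ty^\fA \subseteq Y(\ty)$. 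Hence every projection of $Y$ equals the corresponding first-order defined set of $\cI$. Consequently the structure $\fA^Y$ together with the relation $\sim \,\ceq Y(\tepsilon_1, \tepsilon_2)$ of Definition \ref{def:ugame inside team} coincides with $\cI(\fA)$ equipped with its equality formula, so $\cG^A_Y = \fA^Y\rsim \cong \cI(\fA) \cong \cG$. The remaining consistency requirements (\ref{def:ugame inside team(vi)})--(\ref{def:ugame inside team(ix)}) are inherited from the correctness of $\cI$, and because every variable in $\vark$ is fresh with respect to $\tx$, existential quantification preserves $Y(\tx) = X(\tx)$ and guarantees $Y$ extends $X$.

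Both directions of the equivalence are then routine. For ``$\Rightarrow$'', the witnessing extension $Y$ satisfies $\zeta$, so $\cG^A_Y \cong \mc{\fA}{\phi}$ by the argument above, and it satisfies $\psi(\vark, \tx)$ by assumption. Conversely, given any $Y$ with the claimed properties, its projections are already precisely the sets $\alpha_\ty^\fA$, so it satisfies $\zeta$ and serves as an admissible witness for the existential quantifier in $\theta^\psi_\phi$.

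The main obstacle I expect lies in handling \emph{empty} game components. Existential quantification in team semantics requires non-empty witnessing sets, so if some $\alpha_\ty^\fA$ is empty (as can happen for $\Eexcl$ when no existentially quantified relation symbols occur in $\phi'$), then no non-empty $Y$ extending a non-empty $X$ can satisfy $\alpha_\ty(\ty)$. I would address this by prefixing $\zeta$ with a disjunction that shifts the portion of the team dealing with empty components to an empty subteam, exploiting the empty-team property of team semantics; alternatively one can insert harmless ``dummy'' values that are collapsed by $\sim$ and hence do not alter $\cG^A_Y$.
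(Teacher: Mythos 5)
Your overall strategy---pin down each projection $Y(\ty)$ to the first-order definable set $\alpha_\ty^\fA$ coming from the interpretation of $\mc{\fA}{\phi}$ in $\fA$, and then plug $\psi$ in---is the same idea the paper uses, but your implementation diverges at the quantifier, and this is where the problems lie. The paper sets $\theta^\psi_\phi(\tx) \ceq \A\vark(\gamma(\vark)\ra\psi(\vark,\tx))$, where $\gamma$ is simply the conjunction of the defining formulae on the game tuples and their negations on the complement tuples. Since $\A\vark$ followed by the guard yields the team $X[\vark\mapsto A]\res{\gamma}$, every projection automatically equals the full defined set, with no dependency atoms needed. Your $\E\vark(\zeta\land\psi)$ instead needs the inclusion atoms $\A\tz(\alpha_\ty(\tz)\ra\tz\incl\ty)$ to force the reverse containment $\alpha_\ty^\fA\subseteq Y(\ty)$. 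This is not a cosmetic difference: the lemma is introduced precisely to show that union games are definable in \emph{plain} first-order team semantics around $\psi$, and it is applied in Theorem \ref{thm: win can express all union closed properties of sigma11} to produce a formula $\zeta\in\fo(\win)$. Your construction only yields $\zeta\in\fo(\win,\incl)$, and you give no argument that the inclusion atoms can be eliminated, so the intended consequence is lost.

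The second gap is your fix for empty components. You are right that this is the delicate point (and note that the universal construction suffers from it too: if some conjunct of $\gamma$ is unsatisfiable, the guarded team is empty and $\theta^\psi_\phi$ becomes trivially true). However, your first proposed remedy---``prefixing $\zeta$ with a disjunction that shifts the portion of the team dealing with empty components to an empty subteam''---is not a coherent operation in team semantics: a disjunction splits a team into two sets of \emph{assignments} whose union is the whole team, it cannot separate groups of \emph{variables} within the same assignments, so there is no subteam that ``deals with'' only the $\Eexcl$-coordinates. The dummy-value idea could be made to work but would need a careful argument that the modified team still satisfies the consistency requirements of Definition \ref{def:ugame inside team} and encodes the same game. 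The paper's resolution is simpler and is the one you should adopt: transform $\mc{\fA}{\phi}$ into an equivalent union game (with the same target sets) in which no component is empty, so that every conjunct of the guard is satisfiable.
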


\begin{proof}
    Similar to Lemma \ref{lem: fo-interpretation for G_X},
    it is easy to construct a (quantifier-free) first-order interpretation $\cI \ceq (\delta, \epsilon, \psi_V, \psi_{V_0}, \psi_{V_1}, \psi_E, \psi_I, \psi_T, \psi_\Eexcl)$
    with $\cI(\fA) \cong \mc{\fA}{\phi}$.
    Now let $\theta^\psi_\phi(\tx) \ceq \A \vark (\gamma(\vark) \ra \psi(\vark, \tx))$ where the formula
    \begin{align*}
        \gamma(\vark) \;\ceq\;  & \delta(\tu) \land \psi_{V_0}(\tv_0) \land \psi_{V_1}(\tv_1)
        \land \psi_{E}(\tv, \tw) \land \psi_T(\tt) \land \psi_{\Eexcl}(\tv_{\mathsf{ex}}, \tw_{\mathsf{ex}}) \land  \epsilon(\tepsilon_1, \tepsilon_2) \,\land \\
        & \neg \delta(\tu^\complement) \land \neg \psi_{E}(\tv^\complement, \tw^\complement) \land \neg \psi_T(\tt^\complement) \land \neg \psi_{\Eexcl}(\tv^\complement_{\mathsf{ex}}, \tw^\complement_{\mathsf{ex}}) \land \neg \epsilon(\tepsilon^\complement_1, \tepsilon^\complement_2)
    \end{align*}
    enforces that the game $\mc{\fA}{\phi}$ will be ``loaded'' into the team.
    As long as none of these conjuncts are unsatisfiable this construction is correct.
    This is safe to assume because one can easily transform a union game into an equivalent one w.r.t.~the target set such that none of its components are empty.
\end{proof}
\noindent
This knowledge enables us to finally define the atomic formula we sought after.
For this we need to show that the atom is union closed and its first-order closure can express all of $\union$.

\begin{defi}
    \label{def: win atom}
    The atomic team formula $\win(\vark,\tx)$ for the respective tuples of variables has the following semantics.
    For non-empty teams $X$ with $\vark,\tx \subseteq \dom(X)$ we define
    \[\fA\models_X\win(\vark,\tx) \defiff X \text{ is complete and if  }\cG^A_X \text{ is defined, then }X(\tx)_{/X(\tepsilon_1, \tepsilon_2)}\in \target(\cG^A_X)\]
    and we set $\fA \models_\emptyset \win(\vark, \tx)$ to be always true (to ensure the empty team property).
\end{defi}

\noindent
Note that this atom can be defined in existential second-order logic.

\begin{prop}
    \label{prop: win atom union closed}
    The atomic formula $\win$ is union closed.
\end{prop}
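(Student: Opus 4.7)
The plan is to reduce union closure of the $\win$ atom to union closure of target sets in union games, which is already established in Theorem \ref{thm: ugames are union closed}. Fix a family $(X_i)_{i \in I}$ of teams with $\fA \models_{X_i} \win(\vark, \tx)$ for every $i$, and let $X \ceq \bigcup_{i \in I} X_i$. If $X = \emptyset$ there is nothing to do, so assume $X \neq \emptyset$. Then at least one $X_j$ is non-empty, and by definition of $\win$ every non-empty $X_i$ is complete.

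The first step is to verify that $X$ itself is complete. For each tuple $\ty \in \{(\tu), (\tv,\tw), (\tt), (\tv_{\mathsf{ex}}, \tw_{\mathsf{ex}}), (\tepsilon_1,\tepsilon_2)\}$ completeness of some $X_j$ yields $X(\ty) \cup X(\ty^\complement) \supseteq X_j(\ty) \cup X_j(\ty^\complement) = A^k$ (resp.\ $A^k \times A^k$); the condition $V = V_0 \cup V_1$ is preserved in the same way, since taking unions commutes with the projections $X \mapsto X(\tu), X(\tv_0), X(\tv_1)$.

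If $\cG^A_X$ is undefined, completeness already gives $\fA \models_X \win(\vark, \tx)$ and we are done. Otherwise $\cG^A_X$ is defined, and here is the crucial step: for every non-empty (hence complete) $X_i \subseteq X$, Lemma \ref{lem: complete subteams define the same game} applies and yields that $\cG^A_{X_i}$ is defined with $\cG^A_{X_i} = \cG^A_X$ and ${\sim_{X_i}} = {\sim_X}$. Consequently $\fA \models_{X_i} \win(\vark, \tx)$ gives $X_i(\tx)_{/{\sim_X}} \in \target(\cG^A_X)$ for each such $i$ (empty $X_i$ contribute nothing to the union).

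Finally, since $X(\tx)_{/{\sim_X}} = \bigcup_{i} X_i(\tx)_{/{\sim_X}}$ and $\cG^A_X$ is a union game by condition (\ref{def:ugame inside team(ix)}) of Definition \ref{def:ugame inside team}, Theorem \ref{thm: ugames are union closed} asserts that $\target(\cG^A_X)$ is closed under unions, so $X(\tx)_{/{\sim_X}} \in \target(\cG^A_X)$ and therefore $\fA \models_X \win(\vark,\tx)$. The only subtle point is the bookkeeping around completeness: we need each non-empty piece of the union to be complete so that Lemma \ref{lem: complete subteams define the same game} is applicable and all pieces refer to the same game and the same congruence; everything else is a direct appeal to the union closure of targets in union games.
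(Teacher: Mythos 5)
Your proof is correct and follows essentially the same route as the paper's: establish completeness of the union, dispose of the case where $\cG^A_X$ is undefined, invoke Lemma \ref{lem: complete subteams define the same game} to see that all non-empty pieces describe the same game and congruence, and then apply Theorem \ref{thm: ugames are union closed}. Your explicit check of the $V = V_0 \cup V_1$ part of completeness (which is not a monotone condition and genuinely needs every non-empty $X_i$ to be complete) is if anything slightly more careful than the paper's one-line remark.
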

\begin{proof}
    Assume that $\fA\models_{X_i}\win(\vark, \tx)$ for $i\in I$.
    We prove that $\fA \models_X \win(\vark, \tx)$ holds for the union $X \ceq \bigcup_{i \in I} X_i$.
    If $X = \emptyset$, there is nothing to prove.
    Otherwise at least one $X_j$ is non-empty and, since $\fA \models_{X_j} \win(\vark, \tx)$, $X_j$
    must be complete implying that $X$ is also complete (because $X \supseteq X_j$).
    For the remainder of this proof, we assume w.l.o.g.~that all involved teams $X_i$ (and $X$) are non-empty.
    If $\cG^A_X$ is undefined, then $\fA\models_X \win(\vark, \tx)$ follows from the definition of $\win$.
    Otherwise, if $\cG^A_X$ is defined, then we can use Lemma \ref{lem: complete subteams define the same game}
    to obtain that $\cG^A_{X} = \cG^A_{X_i}$ and ${\sim} \ceq X(\tepsilon_1, \tepsilon_2) = X_i(\tepsilon_1, \tepsilon_2)$ for every $i \in I$.
    Since $\fA \models_{X_i} \win(\vark, \tx)$, we can conclude that $X_i(\tx)\rsim \in \target(\cG^A_{X_i}) = \target(\cG^A_X)$
    for each $i \in I$.
    By Theorem \ref{thm: ugames are union closed}, $X(\tx)\rsim = \bigcup_{i \in I} X_i(\tx)\rsim \in \target(\cG^A_X)$
    and, hence, $\fA \models_X \win(\vark, \tx)$.
\end{proof}

\begin{thm}
    \label{thm: win can express all union closed properties of sigma11}
    Let $\phi\in\inexlogic$ be a union closed formula.
    There is a logically equivalent formula $\zeta\in\fo(\win)$.
    In other words, $\fo(\win)$ corresponds precisely to the union closed fragment of $\inexlogic$.

    \begin{proof}
        Let $\fA$ be an arbitrary structure.
        Due to \cite[Theorem 6.1]{Gal12} there exists a formula $\phi'(X)\in\eso$ which is logically equivalent to $\phi(\tx)$ in the sense that $\fA\models_X\phi(\tx)\iff(\fA,X(\tx))\models\phi'(X)$ for every team $X$ with $\tx \subseteq \dom(X)$.
        By Theorem \ref{thm: myopic <=> union-closed}, there is a myopic formula $\mu \equiv \phi'$.
        So, we have $(\fA,X(\tx)) \models \mu(X) \iff \fA \models_X \phi(\tx)$.

        The game $\mc{\fA}{\mu}$ from Definition \ref{def: ugame for myopic Phi} is a union game
        and Lemma \ref{lem:ugames are definable in ts} allows us to load this game into a team.
        Please notice, that Lemma \ref{lem:ugames are definable in ts} is using a similar first-order interpretation $\cI$ as Lemma \ref{lem: fo-interpretation for G_X},
        which encodes a target vertex $\ta \in T(\mc{\fA}{\mu})$
        by tuples of the form $(\tu, \ta, \tw)$ of length $k= n+m$ where the $n$-tuple $\tu$ has the equality type $e_T$
        while $\tw$ is an arbitrary tuple of length $m-|\ta|$.
        Let $\psi(\vark, \tx) \ceq \A \tu \A \tw (e_T(\tu) \ra \win(\vark,\tu \tx \tw))$ and $\zeta(\tx)\ceq\theta^{\psi}_\mu$ be as in Lemma \ref{lem:ugames are definable in ts}, that is $\A \vark(\gamma(\vark)\ra \psi(\vark,\tx))$.
        So $\fA \models_X \zeta(\tx) \iff \fA \models_Y \psi(\vark, \tx)$
        where $Y = X[\vark \mapsto A]\res{\gamma}$.
        As in Lemma \ref{lem:ugames are definable in ts},
        we have $\cG^A_Y \cong \cI(\fA) \cong \cG(\fA, \mu)$ and $X(\tx) = Y(\tx)$.
        Furthermore, we have defined $\cG^A_Y = \fA^Y\rsim$ where ${\sim} \ceq Y(\tepsilon_1, \tepsilon_2)$.

        Because of the construction of $\psi$, we have
        $\fA \models_Y \psi(\vark, \tx) \iff \fA \models_Z \win(\vark,\tu\tx\tw)$ where $Z \ceq Y[\tu \mapsto e^\fA_T, \tw \mapsto A^{m-|\tx|}]$.
        Since $\cG^A_Z = \cG^A_Y \cong \mc{\fA}{\mu}$ is a well-defined union game, this is equivalent to
        $Z(\tu \tx \tw)\rsim \in \target(\cG^A_Y)$.
        Let $h\colon \delta^\fA_\cI \to V(\mc{\fA}{\mu})$ be the coordinate map for $\mc{\fA}{\mu} \cong \cI(\fA)$.
        By construction, $h$ induces an isomorphism between $\fA^Y\rsim$
        and $\mc{\fA}{\mu}$.
        In particular each element of any equivalence class $[(\tu', \ta, \tw')]_{\sim} \in Z(\tu \tx \tw)\rsim$ is mapped by $h$ to $\ta$.
        Therefore, $Z(\tu \tx \tw)\rsim \in \target(\cG^A_Y) \iff Z(\tx) = X(\tx) \in \target(\mc{\fA}{\mu})$.
        Thus we have $\fA \models_X \zeta(\tx) \iff X(\tx) \in \target(\mc{\fA}{\mu})$.
        Putting everything together, we have
        $\fA \models_X \zeta(\tx) \iff X(\tx) \in \target(\mc{\fA}{\mu}) \iff (\fA,X(\tx)) \models \mu \iff \fA \models_X \phi(\tx)$
        as desired.
    \end{proof}
\end{thm}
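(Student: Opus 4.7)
The plan is to chain together the machinery developed throughout the paper in order to reduce an arbitrary union closed $\inexlogic$-formula $\phi(\tx)$ to a statement about winning strategies in an associated union game, which the atom $\win$ is explicitly designed to express. First, I would invoke Galliani's translation (Theorem 6.1 in \cite{Gal12}) to pass from $\phi(\tx)$ to an equivalent $\eso$-formula $\phi'(X)$ with $\fA\models_X\phi(\tx)\iff(\fA,X(\tx))\models\phi'(X)$. Since $\phi$ is union closed, so is $\phi'(X)$, and Theorem \ref{thm: myopic <=> union-closed} yields an equivalent myopic formula $\mu(X)$. This reduction is important because Definition \ref{def: ugame for myopic Phi} then gives me a \emph{union} game $\mc{\fA}{\mu}$ for which Proposition \ref{prop: myopic (A, X) |= Phi <=> X in I(G^A_Phi)} ensures $(\fA,Y)\models\mu\iff Y\in\target(\mc{\fA}{\mu})$.

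Next I would use Lemma \ref{lem:ugames are definable in ts} to load the game $\mc{\fA}{\mu}$ into a team. Concretely, that lemma supplies a formula $\theta^{\psi}_{\mu}(\tx)$ of the form $\A \vark(\gamma(\vark)\ra\psi(\vark,\tx))$, where $\gamma$ comes from a quantifier-free interpretation $\cI$ with $\cI(\fA)\cong\mc{\fA}{\mu}$ and coordinate map $h$. I set $\zeta(\tx)\ceq\theta^{\psi}_\mu$ with $\psi(\vark,\tx)\ceq\A \tu\A \tw(e_T(\tu)\ra\win(\vark,\tu\tx\tw))$, where $e_T$ is the equality-type formula describing encodings of target positions (target vertices of $\mc{\fA}{\mu}$ are represented by tuples $(\tu,\ta,\tw)$ of length $k$ with $\tu$ of equality type $e_T$). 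Since $\win\in\eso$ and is union closed by Proposition \ref{prop: win atom union closed}, the formula $\zeta$ lies in $\fo(\win)$ as required.

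The bulk of the proof is a calculation showing the equivalence $\fA\models_X\zeta(\tx)\iff\fA\models_X\phi(\tx)$. Unfolding $\theta^{\psi}_\mu$ reduces $\fA\models_X\zeta(\tx)$ to $\fA\models_Y\psi(\vark,\tx)$ where $Y\ceq X[\vark\mapsto A]\res{\gamma}$, and by construction $\cG^A_Y\cong\cI(\fA)\cong\mc{\fA}{\mu}$ and $Y(\tx)=X(\tx)$. Unfolding $\psi$ further reduces this to $\fA\models_Z\win(\vark,\tu\tx\tw)$ for the team $Z\ceq Y[\tu\mapsto e^{\fA}_T,\tw\mapsto A^{m-|\tx|}]$, which by completeness of $Z$ and definedness of $\cG^A_Z=\cG^A_Y$ boils down to $Z(\tu\tx\tw)_{/\sim}\in\target(\cG^A_Y)$ where ${\sim}=Y(\tepsilon_1,\tepsilon_2)$.

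The main obstacle, and the step that needs careful bookkeeping, is verifying that the quotient $Z(\tu\tx\tw)_{/\sim}$ under the coordinate-map isomorphism $h\colon\fA^Y_{/\sim}\to\mc{\fA}{\mu}$ corresponds precisely to $X(\tx)\subseteq T(\mc{\fA}{\mu})$. The key point is that $h$ maps each equivalence class $[(\tu',\ta,\tw')]_{\sim}$ with $\fA\models e_T(\tu')$ to the target vertex $\ta$, so quantifying universally over $\tu$ (constrained by $e_T$) and $\tw$ in $\psi$ ensures exactly that $Z(\tu\tx\tw)_{/\sim}$ represents $X(\tx)$ as a subset of targets. Once this identification is secured, the chain $\fA\models_X\zeta(\tx)\iff X(\tx)\in\target(\mc{\fA}{\mu})\iff(\fA,X(\tx))\models\mu\iff\fA\models_X\phi(\tx)$ closes the proof.
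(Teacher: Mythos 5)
Your proposal is correct and follows essentially the same route as the paper's own proof: translate $\phi$ to $\eso$ via Galliani's theorem, pass to a myopic formula $\mu$, form the union game $\mc{\fA}{\mu}$, load it into a team via Lemma \ref{lem:ugames are definable in ts}, and use $\win$ guarded by the target equality type $e_T$ to express membership of $X(\tx)$ in $\target(\mc{\fA}{\mu})$. The intermediate teams $Y$ and $Z$ and the final chain of equivalences match the paper's argument step for step.
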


\section{Inclusion and Exclusion Games}\label{sec: in and ex games}

In this section we provide evidence that inclusion-exclusion games can be specialised to fit different fragments of $\eso$ or, equivalently, $\inexlogic$.
We present inclusion as well as exclusion games as natural syntactical restrictions of inclusion-exclusion games\footnote{This should also clarify the name choice ``inclusion-exclusion game'', apart from being games for $\inexlogic$.}.
Recall that in an inclusion-exclusion game $\cG$ the inclusion edges $\Eincl$ are defined as $E \cap V\times T$, that is the set of all edges going \emph{into} any terminal vertices $T$.
To be more specific what we mean by naturalness, exclusion games are inclusion-exclusion games without inclusion edges and inclusion games are inclusion-exclusion games without exclusion edges.
Furthermore, as we shall see, these games capture exclusion logic $\exclogic$ and inclusion logic $\inclogic$, respectively.

\subsection*{Exclusion Games}

\begin{defi}
    \label{def: ex game}
    An \emph{exclusion game} is an inclusion-exclusion game $\cG = (V, V_0, V_1, E, \init, T, \Eexcl)$ subject to $\init = \emptyset$ and $\Eincl = \emptyset$.
\end{defi}

\noindent
Throughout this section $\cG$ refers to an exclusion game.
Let us remark on the stipulation $\init = \emptyset$.
Since exclusion games are intended as model-checking games for $\exclogic$, a key property must be that their target set is downwards closed, which cannot be guaranteed if initial vertices are present.

\begin{prop}
    \label{prop: ex game target set}
    The set $\target(\cG)$ is downwards closed.
\end{prop}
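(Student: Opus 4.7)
The plan is to show directly that if $\cS = (W,F)$ is a winning strategy with $\target(\cS) = X$ and $X' \subseteq X$, then restricting $\cS$ to $W' \ceq W \setminus (X \setminus X')$ yields another winning strategy whose target is exactly $X'$. Formally, I set $\cS' \ceq (W', F')$ with $F' \ceq F \cap (W' \times W')$ (equivalently, the subgraph of $\cG$ induced by $W'$).

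The target computation is immediate: since $X \setminus X' \subseteq X \subseteq T$, we have $W' \cap T = (W \cap T) \setminus (X \setminus X') = X \setminus (X \setminus X') = X'$.

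The crucial observation, which exploits the definition of an exclusion game, is that $\Eincl = E \cap (V \times T) = \emptyset$ means no edge of $\cG$ enters a target vertex, so $\nh{\cG}{v} \cap T = \emptyset$ for every $v \in V$. In particular, $\nh{\cS}{v} \cap (X \setminus X') = \emptyset$ for every $v \in W$, and consequently $\nh{\cS'}{v} = \nh{\cS}{v}$ for every $v \in W'$. Therefore conditions (\ref{def: winning strategy - player 0}) and (\ref{def: winning strategy - player 1}) of Definition \ref{def: winning strategy} transfer verbatim from $\cS$ to $\cS'$: for $v \in W' \cap V_0$ the non-emptiness of $\nh{\cS'}{v}$ is inherited, and for $v \in W' \cap V_1$ the equality $\nh{\cS'}{v} = \nh{\cG}{v}$ holds. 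Condition (\ref{def: winning strategy - initial position}) is vacuous because $\init = \emptyset$ by definition of an exclusion game, and condition (\ref{def: winning strategy - exclusion edges}) is inherited since $W' \subseteq W$.

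There is essentially no obstacle here; the entire content of the statement is concentrated in the combination of the two stipulations defining exclusion games, namely $\init = \emptyset$ and $\Eincl = \emptyset$. The former removes the forced presence of vertices that could otherwise obstruct downward closure, and the latter ensures that target vertices are ``leaves'' with respect to the incoming direction, so that deleting any subset of them from a winning strategy cannot affect the move obligations of the remaining nodes.
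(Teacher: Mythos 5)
Your proposal is correct and is essentially the paper's own argument: the paper likewise takes the winning strategy $\cS$ with target $B$ and observes that $\cS - C$ for $C \ceq B \setminus A$ is again winning with target $A$, precisely because $\init = \emptyset$ and $\Eincl = \emptyset$. You have merely spelled out the verification of the four conditions of Definition \ref{def: winning strategy} that the paper leaves as ``it follows at once''.
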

\begin{proof}
    Let $B \in \target(\cG)$ and $A\subseteq B$.
    Put $C \ceq B\setminus A$.
    By assumption there is a winning strategy $\cS$ with $\target(\cS) = B$.
    Since $\init = \emptyset$ and $\Eincl = \emptyset$ it follows at once that $\cS - C$ is a winning strategy with $\target(\cS) = A$.
\end{proof}

\newcommand{\exgame}[2]{\cG_{\mathsf{ex}}(#1, #2)}

\begin{defi}
    \label{def: ex game for ex logic}
    Let $\phi(\tx) \in \exclogic$ and $\fA$ be a structure.
    The exclusion game $\exgame{\fA}{\phi} = (V, V_0, V_1, E, \init = \emptyset, T, \Eexcl)$ is defined as follows.
    \begin{itemize}
        \item $V \ceq \{(\psi, s) : \psi\in\subf(\phi), s\colon \free(\psi) \to A\} \cup T$
        \item $T \ceq \{s : s\colon \{\tx\}\to A\}$
        \item $V_0 \ceq \{(\theta, s) : \theta = \E x\,\gamma, \theta = \gamma\lor\gamma'\text{ or $\gamma$ is a literal and }\fA\nmodels_s\gamma\}$
        \item $V_1 \ceq V \setminus V_0$
        \item $\begin{aligned}[t]
        E \,\ceq\, & \set{((\gamma \circ \theta,s),(\delta,s\res{\free(\delta)}))}{ \circ \in \{ \land, \lor \}, \delta \in  \{ \gamma, \theta \} } \, \cup \\
        & \set{((X\tx,s),s(\tx))}{X\tx \in \subf(\phi')} \, \cup  \\
        & \set{((Qx\gamma,s), (\gamma, s'))}{Q \in \{ \E, \A \}, s' = s[x \mapsto a], a \in A } \, \cup \\
        & \set{(s,(\phi,s))}{s \in T },
        \end{aligned}$
        \item $\Eexcl \ceq \{((\tv\excl\tw, s), (\tv\excl\tw, s')) : s(\tv) = s'(\tw)\}$
    \end{itemize}
\end{defi}

\noindent
It is not difficult to see the resemblance of this definition with the semantics of an exclusion logic formula, hence the following theorem is an immediate observation, whose proof can be executed by a standard induction.

\begin{thm}
    \label{thm: ex game = ex logic}
    Let $\fA$ be a structure, $\phi \in \exclogic$ and $\cG \ceq \exgame{\fA}{\phi}$.
    For all teams $X$ with $\dom(X) = \{ \tx \}$ holds $\fA \models_X \phi \iff X \in \target(\cG)$.
\end{thm}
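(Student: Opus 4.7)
The plan is to proceed by induction on the structure of $\phi$, after generalising the statement to all subformulae: for every $\psi \in \subf(\phi)$ and every team $Y$ with $\dom(Y) = \free(\psi)$, I would show that $\fA \models_Y \psi$ if and only if there exists a set $W_\psi \subseteq V$ containing $\{ (\psi, s) : s \in Y \}$ whose induced subgraph fulfils the strategy conditions (\ref{def: winning strategy - player 0}), (\ref{def: winning strategy - player 1}), and (\ref{def: winning strategy - exclusion edges}) of Definition \ref{def: winning strategy} (condition (\ref{def: winning strategy - initial position}) is vacuous since $\init = \emptyset$). This generalised invariant matches a position $(\psi, s)$ with the evaluation of $\psi$ on the singleton assignment $s$, which is exactly the granularity needed for the induction to go through.

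For the base cases, a first-order literal $\gamma$ produces a terminal position $(\gamma, s)$ which lies in $V_1$ precisely when $\fA \models_s \gamma$, so $W_\gamma = \{(\gamma, s) : s \in Y\}$ is a valid sub-strategy if and only if $\fA \models_Y \gamma$. For an exclusion atom $\tv \excl \tw$, the exclusion edges of $\cG$ connect two positions $(\tv \excl \tw, s)$ and $(\tv \excl \tw, s')$ exactly when $s(\tv) = s'(\tw)$, so condition (\ref{def: winning strategy - exclusion edges}) on $W$ reduces to $\fA \models_Y \tv \excl \tw$. The inductive cases follow the team semantics of Definition \ref{def: fo ts} step by step. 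At a conjunction position $(\psi_1 \land \psi_2, s) \in V_1$ the strategy is forced to contain both successors, yielding sub-strategies for $\psi_1$ and $\psi_2$ on the full team $Y$. At a disjunction $(\psi_1 \lor \psi_2, s) \in V_0$ the strategy's choice at each $s \in Y$ partitions $Y$ into subteams $Y_1, Y_2$ with $Y_1 \cup Y_2 = Y$ as required by the semantics of $\lor$. At $(\E x\, \psi, s) \in V_0$ the strategy's chosen successors define the function $F \colon Y \to \potne{A}$ of the semantic clause for $\E$, while at $(\A x\, \psi, s) \in V_1$ all successors must be included, matching the enlargement $Y[x \mapsto A]$.

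The theorem then follows by instantiating the generalised claim with $\psi = \phi$ and $Y = X$. What remains is to relate a winning sub-strategy starting at $\{ (\phi, s) : s \in X \}$ to a full winning strategy $\cS$ with $\target(\cS) = X$. Since $T \subseteq V_1$ and each $s \in T$ has the unique outgoing edge $(s, (\phi, s))$, condition (\ref{def: winning strategy - player 1}) forces $(\phi, s) \in W$ whenever $s \in W \cap T$; conversely, adjoining $X$ to a sub-strategy obtained from the induction gives a valid full strategy whose target is exactly $X$. There is no real obstacle here: the induction is entirely standard once the local statement is formulated correctly, and the only care point is the correct handling of the target layer $T$ together with the edges $(s, (\phi, s))$ feeding the model-checking part of the game.
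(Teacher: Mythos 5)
Your proposal is correct and is precisely the ``standard induction'' that the paper invokes without carrying out (the paper dismisses the proof as an immediate observation); your generalised invariant on subformulae and teams, the treatment of the exclusion edges at the $\tv\excl\tw$ positions, and the handling of the target layer via the edges $(s,(\phi,s))$ all match what that induction requires. The only unmentioned minutiae are the restriction of assignments to $\free(\psi)$ when descending (handled by locality of $\exclogic$) and the tacit multiset reading of $\subf(\phi)$, neither of which affects the argument.
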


\subsection*{Inclusion Games}

\begin{defi}
    \label{def: in game}
    An \emph{inclusion game} is an inclusion-exclusion game $\cG = (V, V_0, V_1, E, \init, T, \Eexcl)$ subject to $\init = \emptyset$ and $\Eexcl = \emptyset$.
\end{defi}

\noindent
The target set of an inclusion game is closed under unions, because the union of winning strategies inherits properties (\ref{def: winning strategy - player 0}) through (\ref{def: winning strategy - initial position}) of Definition \ref{def: winning strategy} and because exclusion edges are forbidden condition (\ref{def: winning strategy - exclusion edges}) trivialises.
Moreover, every inclusion game $\cG_\text{inc}$ can be turned into an equivalent union game by constructing the model-checking game $\mc{\cG_\text{inc}}{\phitarget(X)}$
according to Definition \ref{def: ugame for myopic Phi} where $\phitarget(X)$ is the myopic sentence from the proof of Theorem \ref{thm: union closed eso => myopic stronger}.
These games are indeed equivalent, because the combination of Claim \ref{claim: phitarget is correct} and Proposition \ref{prop: myopic (A, X) |= Phi <=> X in I(G^A_Phi)} immediately yields $\target(\cG_\text{inc}) = \target(\mc{\cG_\text{inc}}{\phitarget(X)})$.

\erich investigated, amongst other things, model-checking games for inclusion- and greatest fixed-point logic in \cite{Graedel16}.
He used the notion of \emph{second-order reachability games} which are similar in nature to inclusion-exclusion games and have in fact inspired these.
Instead of using target vertices \erich coined the notion of \emph{$I$-traps} that are subsets $X$ of the initial vertices $I$ such that player 0 has a second-order strategy playing from $X$ that never reaches any vertex in $I\setminus X$ (in this sense it is a trap for the opponent regarding only the initial vertices).
Because the winning condition of these games requires to secure that the play remains inside a region they are called \emph{safety games}.
We repeat this notion with minor adjustments to fit our vocabulary.

\begin{defi}
    \label{def: safety game}
    A \emph{safety game} $G = (V, V_0, V_1, I ,E)$ is a tuple of vertices $V = V_0 \cupdot V_1$ divided into positions of player 0 and player 1, respectively, initial vertices $I$ and a set of possible moves $E\subseteq V\times V$.

    An \emph{$I$-trap} is a set $X \subseteq I$ such that player 0 has a winning strategy, i.e.~a subgraph $\cS = (W, F)$ of $(V, E)$, with $I \cap W = X$.
    For $(W, F)$ to be winning the following conditions have to be satisfied.
    \begin{enumerate}
        \item For all $v\in V_0\cap W$ holds $\nh{\cS}{v}\neq\emptyset$.
        \item For all $v\in V_1\cap W$ holds $\nh{\cS}{v} = \nh{G}{v}$.
    \end{enumerate}
\end{defi}

\noindent
Further, \erich associated with every inclusion logic formula $\psi$ and structure $\fA$ a safety game $\cG_{\mathsf{safe}}(\fA, \psi)$ and proved in the following precise sense that it is a model-checking game for $\psi$ which resembles Theorem \ref{thm: ex game = ex logic}.

\begin{thmC}[\cite{Graedel16}]
    \label{thm: erich safety games}
    For all structures $\fA$, $\psi\in\inclogic$ and teams $X$ we have
    \[
        \fA \models_X \psi \iff \psi\times X \text{ is an $I$-trap in } \cG_{\mathsf{safe}}(\fA, \psi).
    \]
\end{thmC}

\noindent
Because inclusion games neither have initial vertices nor exclusion edges conditions (\ref{def: winning strategy - initial position}) and (\ref{def: winning strategy - exclusion edges}) of Definition \ref{def: winning strategy} vanish.
What remains are only those stipulations also required for a winning strategy associated with an $I$-trap.
Moreover, an $I$-trap $X$ essentially corresponds to a winning strategy with target set $X$.
Hence we observe that every safety game can be viewed as an inclusion game and vice versa.

\begin{obs}
    \label{obs: safety games = inclusion games}
    For every inclusion game $\cG$ there is a safety game $\cG_{\mathsf{safe}}$ such that for all winning strategies $\cS$ there is an $I$-trap $\target(\cS)$ in $\cG_{\mathsf{safe}}$ and vice versa.
\end{obs}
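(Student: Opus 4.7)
The plan is to simply reinterpret the target set as the initial set. Given an inclusion game $\cG = (V, V_0, V_1, E, \emptyset, T, \emptyset)$, I define the safety game
\[
\cGsafe \ceq (V, V_0, V_1, I, E) \quad \text{with} \quad I \ceq T,
\]
keeping all other components unchanged. The claim is that the (vertex sets of) winning strategies in $\cG$ are precisely the (vertex sets of) winning strategies for $I$-traps in $\cGsafe$, and moreover for any such $\cS = (W,F)$ one has $\target(\cS) = W \cap T = W \cap I$.

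The justification is a direct comparison of the two definitions of a winning strategy. For an inclusion game the four conditions of Definition \ref{def: winning strategy} collapse: condition (\ref{def: winning strategy - initial position}) is vacuous because $\init = \emptyset$ and condition (\ref{def: winning strategy - exclusion edges}) is vacuous because $\Eexcl = \emptyset$. What remains are exactly conditions (\ref{def: winning strategy - player 0}) and (\ref{def: winning strategy - player 1}), which coincide verbatim with the two requirements imposed on a subgraph $\cS = (W,F)$ witnessing that $W \cap I$ is an $I$-trap in $\cGsafe$ (Definition \ref{def: safety game}).

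Hence, for the forward direction, any winning strategy $\cS = (W,F)$ in $\cG$ is already a valid witness in $\cGsafe$, producing the $I$-trap $W \cap I = W \cap T = \target(\cS)$. For the converse, given an $I$-trap $X$ in $\cGsafe$ witnessed by some $\cS = (W,F)$ with $W \cap I = X$, the very same subgraph is a winning strategy in $\cG$ satisfying $\target(\cS) = W \cap T = W \cap I = X$. There is no real obstacle here; the observation is a bookkeeping verification that, once $I$ is identified with $T$, the two notions of a \emph{winning subgraph} agree syntactically, so nothing more than unfolding the definitions is required.
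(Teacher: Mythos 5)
Your proposal is correct and matches the paper's own justification: the paper likewise observes that for inclusion games conditions (\ref{def: winning strategy - initial position}) and (\ref{def: winning strategy - exclusion edges}) of Definition \ref{def: winning strategy} become vacuous, so that the remaining conditions coincide verbatim with those defining a winning strategy for an $I$-trap once $T$ is identified with $I$. Your write-up simply makes that bookkeeping explicit.
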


\noindent
It follows at once that inclusion games are model-checking games for inclusion logical formulae.

\section{Concluding Remarks}

The central notion in this paper are ``inclusion-exclusion games'' which, as we have proven are model-checking games for existential second-order logic $\eso$, or, equivalently, inclusion-exclusion logic $\inexlogic$.
A syntactical restriction of these games lead to union games, the model-checking games for the class of all formuale of $\eso$ that are closed under unions, class defined by its semantics.
In Section \ref{sec: in and ex games} other restrictions gave access to games for exclusion and inclusion logic, respectively.
We want to point out that even for other fragments, such as closure under intersection, negation, super sets and so forth, it is not difficult to find variations of inclusion-exclusion games that capture those.
We refrain from making the constructions explicit as it is a simple exercise and we do not see how these special cases are of any benefit.
However, the mere fact that inclusion-exclusion games are easily adapted to fit any such setting makes us belief that they are a natural notion.

Let us remark on the ``naturalness'' of the atom $\win$.
Certainly inclusion, exclusion and the notions alike can be regarded as natural atomic dependency formulae, whereas the just introduced atom has to be classified differently.
Nevertheless, it is a canonical candidate since it solves a complete problem of the desired class.
Of course, a more natural~--- and more usable~--- atom might be found, but it will not be as simplistic as e.g.~inclusion for Galliani and Hella have shown that every first-order definable union closed property is already expressible in inclusion logic.
Hence, whatever atom one proposes, it must make use of some inherently second-order concepts.
For concretely expressing properties, the introduced myopic fragments of $\eso$ and $\inexlogic$ are more practical.

The various syntactical characterisations of the union closed fragments presented in this work now enables their further investigation.
This could result in a complexity theoretical analysis or a more detailed classification of $\eso$.

\subsection*{Acknowledgements}

We thank the anonymous referees for their helpful comments.

\bibliographystyle{alpha}
\bibliography{unionLMCS}

\end{document}